\newtheorem{theorem}{Theorem}[section]
\theoremstyle{remark}\newtheorem{remark}[theorem]{Remark}
\DeclarePairedDelimiter\abs{\lvert}{\rvert}
\DeclarePairedDelimiter\ave{\langle}{\rangle}
\DeclarePairedDelimiter\norm{\lVert}{\rVert_\infty}
\newcommand{\B}{\operatorname{B}}
\newcommand{\cB}{\mathscr{B}}
\renewcommand{\E}{\mathbb{E}}
\newcommand{\bE}{\mathbf{E}}
\newcommand{\cF}{\mathscr{F}}
\newcommand{\vecf}{\mathbf{f}}
\newcommand{\cL}{\mathscr{L}}
\newcommand{\N}{\mathbb{N}}
\renewcommand{\P}{\operatorname{Prob}}
\newcommand{\R}{\mathbb{R}}
\newcommand{\supp}{\operatorname{supp}}
\newcommand{\cU}{\mathscr{U}}
\newcommand{\Var}{\operatorname{Var}}
\begin{document}
\title{Uncertainty damping in kinetic traffic models \\ by driver-assist controls}

\author{Andrea Tosin\thanks{Department of Mathematical Sciences ``G. L. Lagrange'', Politecnico di Torino, Italy},
		Mattia Zanella\thanks{Department of Mathematics ``F. Casorati'', University of Pavia, Italy}}
\date{}

\maketitle

\begin{abstract}
In this paper, we propose a kinetic model of traffic flow with uncertain binary interactions, which explains the scattering of the fundamental diagram in terms of the macroscopic variability of aggregate quantities, such as the mean speed and the flux of the vehicles, produced by the microscopic uncertainty. Moreover, we design control strategies at the level of the microscopic interactions among the vehicles, by which we prove that it is possible to dampen the propagation of such an uncertainty across the scales. Our analytical and numerical results suggest that the aggregate traffic flow may be made more ordered, hence predictable, by implementing such control protocols in driver-assist vehicles. Remarkably, they also provide a precise relationship between a measure of the macroscopic damping of the uncertainty and the penetration rate of the driver-assist technology in the traffic stream.

\medskip

\noindent{\bf Keywords:} kinetic traffic model, Boltzmann-type equation with cut-off, Fokker-Planck limit, Uncertainty Quantification \\

\noindent{\bf Mathematics Subject Classification:} 35Q20, 35Q70, 35Q84, 35Q93, 90B20
\end{abstract}

\section{Introduction}
In recent times, the challenge of vehicle automation has inspired new paradigms for the management and the governance of traffic and has imposed a deep reflection about the behavioural aspects involved in vehicle dynamics, see e.g.~\cite{Aria2016,Jamson2013}. On one hand, the emerging technologies constitute a potentially high innovation in the realm of the smart cities, for example with clear benefits in terms of emission reduction and of mitigation of road risk factors. On the other hand, the time horizon of their actual full implementation is still uncertain and questionable. For this reason, recent advances in the modelling of driver-assist/autonomus vehicles have focused, in particular, on the assessment of the effectiveness of \textit{Advanced Driver-Assistance Systems} (ADAS) in a mixed scenario, with few automated vehicles embedded in a stream of human-manned vehicles~\cite{tosin2019MMS}. In this context, it is particularly interesting to address the design of suitable \textit{control protocols}, which are sufficiently robust to cope with some behavioural aspects of human drivers, and then to study their aggregate impact on the traffic stream, see e.g.~\cite{stern2018TRC}. 

In this work, we study a kinetic traffic model with \textit{structural uncertainty} in the vehicle-to-vehicle interactions. Such an uncertainty, represented by a parameter whose value is not known deterministically, accounts, for example, for the heterogeneous response of different types of vehicles to the speed variations. Kinetic models for traffic flow have a quite long history, going back to the pioneering works \cite{paveri1975TR,prigogine1971BOOK} up to more recent advances, such as e.g.~\cite{Herty2010,Klar1997}. The kinetic approach has the clear advantage of linking, in a self-consistent way, the different scales of the problem: from the microscopic one of the interactions among the vehicles to the mesoscopic one of the aggregate distribution of observable quantities, such as the fundamental diagrams, and further also to the macroscopic one of the density waves flowing along a road~\cite{tosin2019MMS}. In~\cite{puppo2016CMS,puppo2017CMS} a kinetic approach has been developed to explain the scattering of the fundamental diagram of traffic. The latter is an experimentally measured relation between the traffic congestion and the vehicle flux, which typically features consistent macroscopic fluctuations at high vehicle density, see e.g.~\cite{kerner2004BOOK}. The mentioned works are based on multi-population/mixture models, which take into account the heterogeneous composition of the traffic stream by means of several kinetic equations, one for each class of vehicles. It is worth mentioning that also different approaches have been considered in the literature to model heterogeneous traffic flows. We recall, for instance, the car-following microscopic models~\cite{mason1997PRE,nagatani2000PHYSA} and the systems of conservation laws with a flux function depending simultaneously on all the class densities~\cite{benzoni2003EJAM,colombo2018CHAPTER}. In the present work, we show that a kinetic explanation of the scattering of the fundamental diagram in the congested traffic regime can be effectively obtained also by introducing, in the microscopic interactions of a \textit{single-population} model, an uncertain parameter accounting for the different reaction strengths of the various classes of vehicles. On one hand, this method leads to an increased dimensionality of the kinetic problem, owing to the necessity to handle the uncertain parameter. On the other hand, however, it avoids the necessity to deal with systems of kinetic equations, which would require to define several microscopic interaction rules and mesoscopic collision operators for the various classes of vehicles, thereby making the whole approach more amenable to additional modelling and analytical developments.

Along this line, in this work we exploit the potential offered by the uncertain kinetic setting to further address a control problem, aimed at reducing the scattering of the fundamental diagram. Specifically, we design two possible control protocols, which, consistently with the aforementioned ADAS technologies, are applied at the level of the microscopic interactions among the vehicles and, in particular, act in such a way to align the speed of the vehicles to a prescribed congestion-dependent value. Since, in a realistic scenario, all vehicles are not equipped with the ADAS technology, such controls are active only on a possibly small percentage of vehicles, corresponding to the so-called \textit{penetration rate}. Our main result is that such \textit{microscopic} control strategies are able to dampen the structural uncertainty present in the vehicle-to-vehicle interactions, thereby reducing the \textit{macroscopic} scattering of the fundamental diagram in a way precisely linkable to the penetration rate of the ADAS technology. An immediate consequence of this is a more ordered, hence predictable, macroscopic flow of the vehicles, with clear advantages for the traffic governance at large scale. It is worth anticipating that this theoretical result holds for \textit{any} possible statistical distribution of the uncertain parameter. Sticking to a purely model-based approach, in this paper we assume that such a distribution is given. However, we mention that, in the realm of Uncertainty Quantification (UQ), several tools from Bayesian inference have been developed to estimate the distribution of uncertain parameters in physical models, see e.g.~\cite{marzouk2009CCP}. As far as the traffic dynamics considered here are concerned, in the recent work~\cite{herty2019PREPRINT} the statistical distribution of the uncertain parameter entering our interaction rules has been estimated from raw traffic data so as to fit the observable speed distributions.

From the methodological point of view, the proposed control setting takes advantage of recently introduced methods for the \textit{binary control} of Boltzmann-type kinetic equations, see~\cite{Albi2015,Albi2014} and also~\cite{tosin2018IFAC,tosin2019MMS} for the specific application to vehicular traffic. First, we derive an uncertain control, namely one obtained from a microscopic cost functional depending pointwise on the uncertain parameter of the interactions. As a second case, we derive a deterministic control, namely one obtained from a microscopic cost functional averaged with respect to the uncertain parameter. These two microscopic controls lead to two different Boltzmann-type models. The one associated with the uncertain control is a Maxwellian model, since the corresponding Boltzmann equation features a unitary interaction kernel. Conversely, the one associated with the deterministic control is a kinetic model \textit{with cut-off}, because the corresponding Boltzmann equation needs a non-constant interaction kernel which discards some microscopic interactions possibly violating the physical bounds of the microscopic variable. We discuss the analytical properties of both models and, in particular, we show that, in the asymptotic regime of the \textit{quasi-invariant interactions}~\cite{cordier2005JSP,toscani2006CMS}, they can be described by the same Fokker-Planck equation, which leads to explicitly computable and realistic steady states highlighting the reduction of the scattering of the fundamental diagram discussed above. We stress that the adopted kinetic approach makes possible a \textit{multiscale} investigation of the propagation of both the structural uncertainty and the effect of the control across the different scales of the problem. We also propose several numerical tests, which, by appealing to specific numerical methods for Uncertainty Quantification for kinetic and mean field equations, cf. e.g.~\cite{carrillo2019VJM,Dimarco2017,Hu2017,Jin2017,zanella2020MCS,Zhu2017}, visualise the theoretical results and allow us to consider various probability distributions of the uncertain parameter, spanning also interesting cases in which explicit computations are not possible.

In more detail, the paper is structured as follows: in Section~\ref{sect:micro}, we discuss a basic microscopic model of the interactions among the vehicles without control and we stress, in particular, the role of the uncertain parameter. Then we pass to a Boltzmann-type kinetic description, whence we obtain explicitly the asymptotic trend of the mean speed, which we use to define the fundamental diagram together with its scattering induced by the uncertain parameter. Finally, in the quasi-invariant interaction limit, we recover a Fokker-Planck-type description, whence we deduce an explicit form of the equilibrium speed distribution along with its uncertainty. In Section~\ref{sect:uncertainty_damping}, we introduce the two microscopic control strategies to be applied to the interactions among the vehicles. Specifically, in Section~\ref{sect:pointwise}, we discuss the uncertain control, which leads to a Boltzmann-type kinetic description for Maxwellian-like particles and allows us to prove a precise result concerning the reduction of the scattering of the fundamental diagram in terms of the penetration rate of the control strategy. In Section~\ref{sect:average}, we discuss instead the deterministic control, which is more realistically implementable in practice but requires to deal with a more difficult Boltzmann-type kinetic description with cut-off. However, we prove that, in the quasi-invariant interaction regime, such a description yields the same limit equations for both the mean speed and the statistical speed distribution as the previous description, whence we recover the validity of all the results proved before. In Section~\ref{sect:numerics}, we present several numerical tests supporting the theoretical findings of the previous sections. Finally, in Section~\ref{sect:conclusions} we summarise the results of the paper and we briefly outline possible research developments.

\section{Scattering of the fundamental diagram from uncertain binary interactions}
\label{sect:micro}
\subsection{Description of the microscopic interactions with uncertainty}
In accordance with the general approach of the kinetic theory, a kinetic description of traffic flow is based on the identification of microscopic interaction rules for pairs of vehicles. We assume, in particular, that these interactions modify the speed of the vehicles in consequence of accelerations and decelerations. Therefore, we characterise the microscopic state of a generic vehicle by means of a variable $v\in [0,\,1]$ representing its (dimensionless) speed. Denoting by $v_\ast\in [0,\,1]$ the speed of the leading vehicle, we describe the speed variation in a binary interaction as
\begin{align}
	\begin{aligned}[c]
		v' &= v+\gamma I(v,\,v_\ast;\,z)+D(v)\eta \\
		v_\ast' &= v_\ast.
	\end{aligned}
	\label{eq:binary.gen}
\end{align}
In~\eqref{eq:binary.gen}, $\gamma>0$ is a proportionality parameter and $I$ is the \textit{interaction function}, which describes the acceleration/deceleration dynamics mentioned above. This function depends on the pre-interaction speeds $v$, $v_\ast$ of the interacting vehicles and on an \textit{uncertain parameter}, viz. random variable, $z\in\R$ with known probability distribution $\Psi=\Psi(z):\R\to\R_+$, i.e.
$$ \P(z\leq\bar{z})=\int_{-\infty}^{\bar{z}}\Psi(z)\,dz. $$
Such an uncertain parameter represents a \textit{structural uncertainty} in the binary rule modelled by $I$, due to the fact that the physics of the interactions among the vehicles is inevitably partly heuristic.

Getting inspiration from~\cite{tosin2019MMS}, we consider the following interaction function:
\begin{equation}
	I(v,\,v_\ast;\,z)=P(\rho;\,z)(1-v)+(1-P(\rho;\,z))(P(\rho;\,z)v_\ast-v),
	\label{eq:I}
\end{equation}
where $P(\rho;\,z)\in [0,\,1]$ is the \textit{probability of acceleration}. The function~\eqref{eq:I} expresses the fact that with probability $P$ the $v$-vehicle accelerates towards the maximum speed, while with probability $1-P$ it adapts to the fraction $P$ of the speed of the $v_\ast$-vehicle. The probability of acceleration depends on the (dimensionless) \textit{density} of the vehicles $\rho\in [0,\,1]$, in such a way that the higher $\rho$ the lower $P$, because a dense traffic hinders accelerations. Moreover, $P$ depends also on the uncertain parameter $z$. A possible form is:
\begin{equation}
	P(\rho;\,z)=(1-\rho)^z, \qquad z>0.
	\label{eq:P}
\end{equation}

\begin{figure}[!t]
\centering
\includegraphics[width=0.6\textwidth]{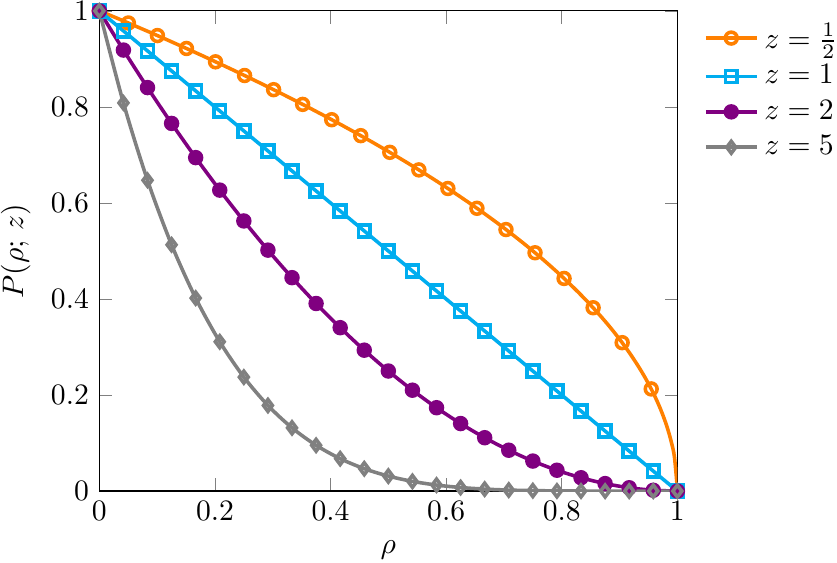}
\caption{The probability of accelerating $P(\rho;\,z)$ given in~\eqref{eq:P} plotted for various $z$.}
\label{fig:P}
\end{figure}

From Figure~\ref{fig:P}, we observe that the mapping $\rho\mapsto P(\rho;\,z)$ expresses the same qualitative trend ($P$ decreasing with $\rho$) for all $z>0$. Nevertheless, it may induce quantitatively different dynamics, because such a trend is either concave, linear or convex depending on the specific value of $z$. It is therefore reasonable to regard the exponent $z\in\R_+$ as an uncertain parameter, considering that there are apparently no \textit{a priori} motivations to opt for a specific value of $z$ in the heuristic model~\eqref{eq:P}. Moreover, since $\rho\in [0,\,1]$, we notice that the higher $z$ the lower the probability of acceleration at all densities $\rho>0$. This suggests that the values of $z$ may be associated with different classes of vehicles characterised by a different motility in the traffic stream. We will come back to this interpretation in Section~\ref{sect:Boltzmann.funddiag}.

Returning to~\eqref{eq:binary.gen}, the term $D(v)\eta$ describes stochastic fluctuations caused by the intrinsic randomness of the driver behaviour. Specifically, $\eta\in\R$ is a centred random variable, i.e. one with zero mean:
\begin{equation}
	\ave{\eta}=0, \qquad \ave{\eta^2}=\sigma^2,
	\label{eq:eta}
\end{equation}
where $\ave{\cdot}$ denotes the average with respect to the probability distribution of $\eta$ and $\sigma>0$ is the standard deviation of $\eta$. Moreover, $D:[0,\,1]\to\R_+$ is a speed-dependent non-negative diffusion coefficient modulating the amplitude of the stochastic fluctuation. We stress that, unlike $z$, the random variable $\eta$ does not describe a structural uncertainty of the model but rather a stochastic completion of the deterministic dynamics expressed by the interaction function $I$.

Finally, we notice that~\eqref{eq:binary.gen} prescribes that the $v_\ast$-vehicle does not change speed during an interaction. This is because vehicle interactions are normally anisotropic, the leading vehicle being unperturbed by the rear vehicle.

\subsubsection{Physical admissibility of the interaction rules}
\label{sect:phys.admiss}
In order to be physically admissible, the interaction rules
\begin{align}
	\begin{aligned}[c]
		v' &= v+\gamma[P(\rho;\,z)(1-v)+(1-P(\rho;\,z))(P(\rho;\,z)v_\ast-v)]+D(v)\eta \\
		v_\ast' &= v_\ast
	\end{aligned}
	\label{eq:binary}
\end{align}
have to guarantee $v',\,v_\ast'\in [0,\,1]$ for every $v,\,v_\ast\in [0,\,1]$. This is actually obvious for $v_\ast'$, while it is more delicate for $v'$.

Let us assume $\gamma\in [0,\,1]$. We begin by observing that, since $0\leq P(\rho;\,z)\leq 1$ and $v_\ast\geq 0$, a sufficient condition for $v'\geq 0$ is
$$ (1-\gamma)v+D(v)\eta\geq 0, $$
which is certainly satisfied if there exists a constant $c>0$ such that
$$ \eta\geq c(\gamma-1), \qquad cD(v)\leq v. $$
Likewise, since $v_\ast\leq 1$, a sufficient condition for $v'\leq 1$ is
$$ (\gamma-1)(1-v)+D(v)\eta\leq 0, $$
which is satisfied if
$$ \eta\leq c(1-\gamma), \qquad cD(v)\leq 1-v. $$

On the whole, the physical admissibility of~\eqref{eq:binary} is guaranteed by the following sufficient conditions:
\begin{equation}
	\abs{\eta}\leq c(1-\gamma), \qquad cD(v)\leq\min\{v,\,1-v\},
	\label{eq:eta.D}
\end{equation}
where $c>0$ is an arbitrary constant. These conditions imply that the stochastic fluctuation $\eta$ is bounded and that $D(0)=D(1)=0$.

\subsection{Boltzmann-type aggregate description and traffic diagrams}
\label{sect:Boltzmann.funddiag}
A statistical description of the aggregate dynamics resulting from the superposition of many binary interactions~\eqref{eq:binary} may be obtained by introducing the distribution function $f=f(t,\,v;\,z)$, where $t\geq 0$ is the time, see~\cite{cercignani1994BOOK}. Specifically, $f$ is such that $f(t,\,v;\,z)dv$ gives the probability that, at time $t$, a vehicle travels with a speed comprised between $v$ and $v+dv$, given the uncertain parameter $z$. Since, under~\eqref{eq:eta.D}, all interactions~\eqref{eq:binary} are physically admissible, the distribution function $f$ evolves according to the following \textit{Boltzmann-type equation} for \textit{Maxwellian-like particles}, here written in weak form, see~\cite{pareschi2013BOOK}:
\begin{equation}
	\frac{d}{dt}\int_0^1\varphi(v)f(t,\,v;\,z)\,dv=
		\frac{1}{2\tau}\int_0^1\int_0^1\ave{\varphi(v')-\varphi(v)}f(t,\,v;\,z)f(t,\,v_\ast;\,z)\,dv\,dv_\ast
	\label{eq:Boltz}
\end{equation}
for every \textit{observable quantity} $\varphi:[0,\,1]\to\R$, namely any quantity which may be expressed as a function of the microscopic state $v$ of the vehicles. This equation states that the time variation of the mean of $\varphi$ (left-hand side) is due to the mean of the average variation of $\varphi$ in a representative binary interaction (right-hand side). The parameter $\tau>0$ at the right-hand side is a relaxation time or, in other words, $\frac{1}{\tau}$ is the interaction frequency.

We have stressed that $f$ is parametrised by the uncertain parameter $z$, because its evolution depends, among other things, on the uncertain interaction function $I$ contained in $v'$. Consequently,~\eqref{eq:Boltz} is a \textit{stochastic} kinetic equation, whose solution may be regarded, for fixed $t$, $v$, as a random variable given as a function of the uncertain parameter $z$.

Letting $\varphi(v)=1$ in~\eqref{eq:Boltz}, we discover
$$ \frac{d}{dt}\int_0^1f(t,\,v;\,z)\,dv=0, $$
therefore our kinetic model conserves in time the total mass of the system. We notice that the interpretation of the mapping $v\mapsto f(t,\,v;\,z)$ as a probability density function on $[0,\,1]$ for all $t$, $z$ is compatible with this property of equation~\eqref{eq:Boltz}. With $\varphi(v)=v$ and recalling~\eqref{eq:eta}, we obtain instead from~\eqref{eq:Boltz} the evolution of the mean speed
$$ V(t;\,z):=\int_0^1vf(t,\,v;\,z)\,dv, $$
i.e.:
\begin{align}
	\begin{aligned}[b]
		\frac{dV}{dt} &= \frac{1}{2\tau}\int_0^1\int_0^1 I(v,\,v_\ast;\,z)f(t,\,v;\,z)f(t,\,v_\ast;\,z)\,dv\,dv_\ast \\
		&=\frac{\gamma}{2\tau}\left[P(\rho;\,z)(1-V)-\bigl(1-P(\rho;\,z)\bigr)^2V\right].
	\end{aligned}
	\label{eq:V}
\end{align}	
This equation can be solved explicitly to find the mapping $t\mapsto V(t;\,z)$ parametrised by $\rho$. Here, we are in particular interested in the asymptotic value reached by $V$ for $t\to +\infty$, say $V_\infty$, which describes the mean speed emerging when interactions are in \textit{equilibrium}:
\begin{equation}
	V_\infty(\rho;\,z):=\frac{P(\rho;\,z)}{P(\rho;\,z)+{\bigl(1-P(\rho;\,z)\bigr)}^2}.
	\label{eq:Vinf}
\end{equation}
Notice that, due to the uncertain parameter $z$, this is in turn a stochastic quantity. We may compute the expectation of $V_\infty$ with respect to $z$ as
\begin{equation}
	\bar{V}_\infty(\rho):=\E_z(V_\infty(\rho;\,z))=\int_{\R_+}V_\infty(\rho;\,z)\Psi(z)\,dz
	\label{eq:Vinf_bar}
\end{equation}
and its variance as
\begin{equation}
	\varsigma_\infty^2(\rho):=\Var_z(V_\infty(\rho;\,z))=\int_{\R_+}V_\infty^2(\rho;\,z)\Psi(z)\,dz-\bar{V}_\infty^2(\rho).
	\label{eq:varsigmainf}
\end{equation}
Next, we observe that we may use the mapping $\rho\mapsto\rho\bar{V}_\infty(\rho)$ to define the \textit{fundamental diagram} of traffic, namely the equilibrium relationship between the traffic density and the macroscopic flux of the vehicles. Together with its $z$-standard deviation $\rho\mapsto\rho\varsigma_\infty(\rho)$, it produces the following set
\begin{equation}
	\left\{(\rho,\,q)\in [0,\,1]\times\R_+\,:\,q\in\left[\rho\bar{V}_\infty(\rho)-\rho\varsigma_\infty(\rho),\,\rho\bar{V}_\infty(\rho)+\rho\varsigma_\infty(\rho)\right]\right\}
	\label{eq:area}
\end{equation}
in the density-flux plane, where most of the random flux values $\rho V_\infty(\rho;\,z)$ lie. The set~\eqref{eq:area} explains the \textit{scattering} of the fundamental diagram, typically found in empirical measurements of the flow of vehicles, as the result of the superposition of different microscopic dynamics produced by different values of $z$ and weighted by the corresponding probability measure $\Psi(z)dz$.

For a general probability distribution $\Psi$, the exact computation of~\eqref{eq:Vinf_bar},~\eqref{eq:varsigmainf} may be non-trivial and one often needs to rely on numerical quadrature formulas. However, for special classes of probability distributions $\Psi$, such as those considered below, analytical results can be obtained.

Let us consider the case in which $z\in\{z_1,\,z_2,\,\dots,\,z_n\}\subset\R_+$ is a discrete random variable with law
$$ \P(z=z_k)=\alpha_k\in [0,\,1], \qquad \sum_{k=1}^{n}\alpha_k=1, $$
so that
$$ \Psi(z)=\sum_{k=1}^{n}\alpha_k\delta(z-z_k), $$
where $\delta(z-z_k)$ is the Dirac delta distribution centred in $z_k$. Then
$$ \bar{V}_\infty(\rho)=\sum_{k=1}^{n}\alpha_kV_\infty(\rho;\,z_k), \qquad
	\varsigma_\infty^2(\rho)=\sum_{k=1}^{n}\alpha_kV_\infty^2(\rho;\,z_k)-\left(\sum_{k=1}^{n}\alpha_kV_\infty(\rho;\,z_k)\right)^2. $$
We observe that $V_\infty(\rho;\,z_k)$, $k=1,\,\dots,\,n$, is the result of the microscopic dynamics~\eqref{eq:binary} with $z=z_k$. If we interpret the $z_k$'s as characteristic values of certain classes of vehicles (for instance, cars, lorries, motorcycles,~\dots) which may travel along the road, the formulas above show that $\bar{V}_\infty(\rho)$, $\varsigma_\infty^2(\rho)$ originate from the average superposition of the macroscopic dynamics produced by each of such classes. Within this interpretation, the $\alpha_k$'s can be understood as the proportions of the various classes of vehicles present in the traffic stream. This provides a point of view on multi-class traffic models based on structural uncertainties in the microscopic composition of the traffic ``mixture''. We mention that other multi-class traffic models are already present in the literature, see e.g.~\cite{benzoni2003EJAM,puppo2016CMS}, which also explain the scattering of the fundamental diagram by appealing to similar physical motivations but different mathematical formalisations.

\begin{figure}[!t]
\centering
\includegraphics[width=0.9\textwidth]{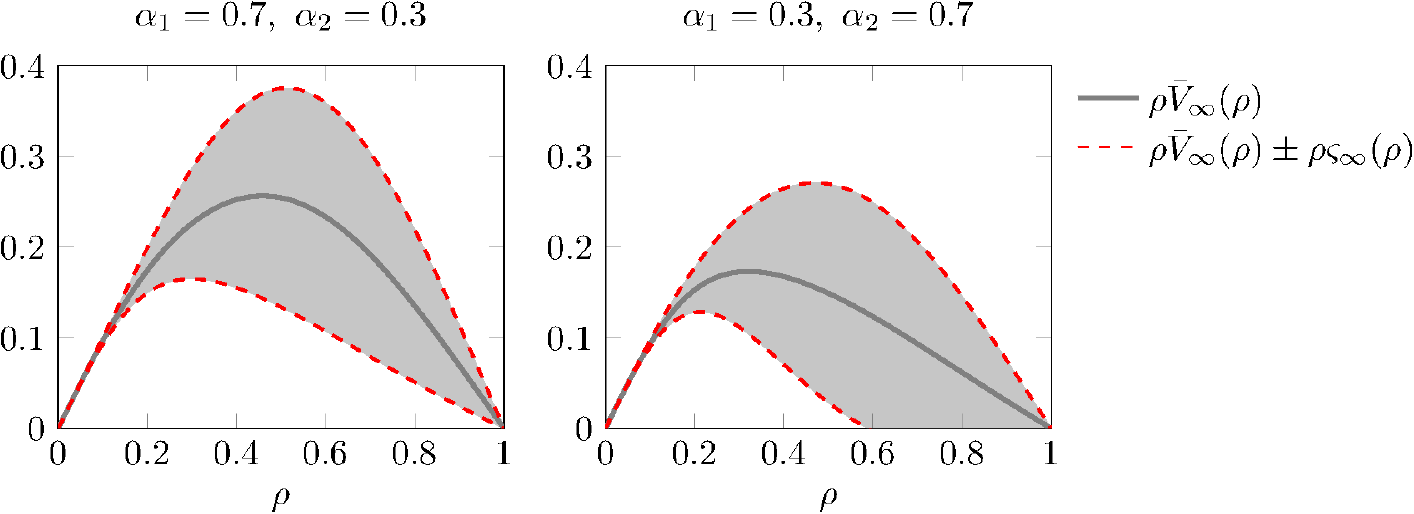}
\caption{The case $z\in\{1,\,3\}$ with $\P(z=1)=\alpha_1$, $\P(z=3)=\alpha_2$: fundamental diagram $\rho\mapsto\rho\bar{V}_\infty(\rho)$ (solid line) and uncertainty lines $\rho\mapsto\rho\bar{V}_\infty(\rho)\pm\rho\varsigma_\infty(\rho)$ (dash-dotted and dashed lines, respectively). The filled area is~\eqref{eq:area}. The function $P(\rho;\,z)$ is taken like in~\eqref{eq:P}.}
\label{fig:funddiag.Z_discr}
\end{figure}

Let us assume, for instance, that $z$ takes only the two values $z_1=1$, with probability $\alpha_1$, and $z_2=3$, with probability $\alpha_2$. The largest value of $z$, i.e. $z_2$, may represent e.g., lorries, which accelerate less especially at high traffic density, cf.~\eqref{eq:P}. Conversely, the smallest value of $z$, i.e. $z_1$, may represent e.g., cars. We consider two different compositions of the traffic stream: first, one with $\alpha_1=70\%$ of cars and $\alpha_2=30\%$ of lorries; then, the opposite one with $\alpha_1=30\%$ of cars and $\alpha_2=70\%$ of lorries. In Figure~\ref{fig:funddiag.Z_discr}, we see that both the fundamental diagram and the region~\eqref{eq:area} change realistically according to the composition of the traffic stream. In both cases, we notice that the scattering of the fundamental diagram is quite limited at low density, i.e. in the so-called \textit{free flow regime}, when the flux grows almost linearly with $\rho$. It becomes more marked at higher density, i.e. in the so-called \textit{congested flow regime}, when the flux decreases non-linearly with $\rho$. This is very nicely in agreement with the typical experimental observations, see e.g.~\cite[Chapter 2]{kerner2004BOOK} and also~\cite{helbing2001RMP,seibold2013NHM}.

As a second example, let us consider $z\sim\cU([a,\,b])$, i.e. the case in which $z$ is a continuous random variable uniformly distributed in the interval $[a,\,b]$ with $0\leq a<b$. Hence
$$ \Psi(z)=\frac{1}{b-a}\chi(a\leq z\leq b), $$
where $\chi$ denotes the characteristic function (specifically, $\chi(a\leq z\leq b)=1$ if $a\leq z\leq b$ while $\chi(a\leq z\leq b)=0$ otherwise). Using~\eqref{eq:P} and~\eqref{eq:Vinf}, from~\eqref{eq:Vinf_bar} we compute:
$$ \bar{V}_\infty(\rho)=\frac{2}{\sqrt{3}(b-a)\log{(1-\rho)}}\left[\arctan\!{\left(\frac{2x-1}{\sqrt{3}}\right)}\right\vert_{x={(1-\rho)}^a}^{x={(1-\rho)}^b}. $$
Also the variance $\varsigma_\infty^2(\rho)$ can be given an explicit representation, indeed from~\eqref{eq:varsigmainf} we obtain:
$$ \varsigma_\infty^2(\rho)=\frac{1}{3(b-a)\log{(1-\rho)}}\left[\frac{\sqrt{x}-2}{x-\sqrt{x}+1}
	+\frac{2}{\sqrt{3}}\arctan\!{\left(\frac{2\sqrt{x}-1}{\sqrt{3}}\right)}\right\vert_{x={(1-\rho)}^{2a}}^{x={(1-\rho)}^{2b}}
		-\bar{V}_\infty^2(\rho). $$
		
\begin{figure}[!t]
\centering
\includegraphics[width=0.6\textwidth]{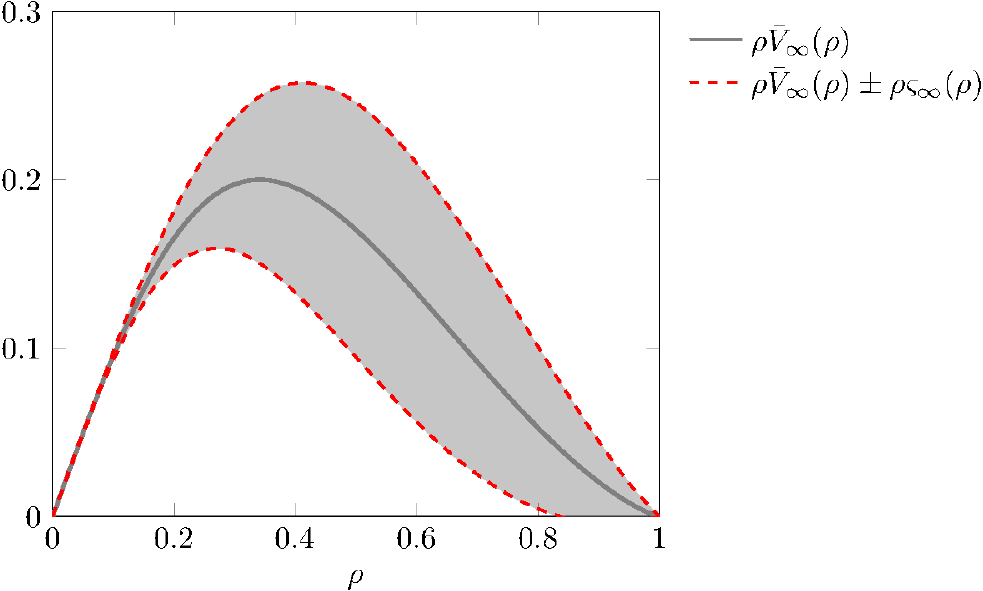}
\caption{The case $z\sim\cU([1,\,3])$: fundamental diagram $\rho\mapsto\rho\bar{V}_\infty(\rho)$ (solid line) and uncertainty lines $\rho\mapsto\rho\bar{V}_\infty(\rho)\pm\rho\varsigma_\infty(\rho)$ (dash-dotted and dashed lines, respectively). The filled area is~\eqref{eq:area}. The function $P(\rho;\,z)$ is taken like in~\eqref{eq:P}.}
\label{fig:funddiag.Z_unif}
\end{figure}

Figure~\ref{fig:funddiag.Z_unif} shows the area defined by~\eqref{eq:area} for $z\sim\cU([1,\,3])$, i.e. $a=1$ and $b=3$ in the formulas above. Also in this case, we notice that the computed scattering of the fundamental diagram reproduces correctly the qualitative empirical features mentioned in the previous example.

\begin{remark}
In the two examples discussed above, the statistical distribution $\Psi$ has been postulated. As already mentioned in the Introduction, the problem of estimating $\Psi$ is beyond the scope of this paper. Nevertheless, recent efforts in this direction based on the analysis of real traffic data, see e.g.~\cite{herty2018SIAP,herty2019PREPRINT}, show that the types of distribution and the range of $z$ considered here are consistent with the results obtained from experimental measurements.
\end{remark}

\subsection{Fokker-Planck description and equilibria}
\label{sect:FP}
Besides the fundamental diagram of traffic, from the kinetic description of the microscopic dynamics~\eqref{eq:binary} we may also obtain information on the statistical distribution of the speed of the vehicles emerging when interactions are in equilibrium, i.e. for large times ($t\to +\infty$ in the limit). Such a distribution corresponds to the \textit{Maxwellian distribution} of the classical kinetic theory of gases.

By definition, the equilibrium distribution, say $f_\infty=f_\infty(v;\,z)$, makes the right-hand side of~\eqref{eq:Boltz} vanish:
$$ \int_0^1\int_0^1\ave{\varphi(v')-\varphi(v)}f_\infty(v;\,z)f_\infty(v_\ast;\,z)\,dv\,dv_\ast=0 $$
for all observable quantities $\varphi$. Nevertheless, it is in general difficult to determine $f_\infty$ by tackling directly this equation, at least from the analytical point of view. If, however, the interactions~\eqref{eq:binary} are \textit{quasi-invariant}, i.e. if they produce each time a small change of speed, then it is possible to approximate asymptotically the Boltzmann-type integro-differential equation~\eqref{eq:Boltz} with a \textit{Fokker-Planck} partial differential equation, which may allow for an explicit computation of the steady states.

Quasi-invariant interactions are reminiscent of the \textit{grazing collisions} introduced in the classical kinetic theory~\cite{villani1998PhD,villani1998ARMA}. In our context, they amount to assuming $\gamma,\,\sigma^2\ll 1$ in~\eqref{eq:binary}, so that both the deterministic part of the interactions and the stochastic fluctuations are small. To balance such a smallness and observe asymptotic trends, the interaction frequency has to be increased accordingly. We obtain such an effect by means of the following scaling:
\begin{equation}
	\gamma=\epsilon, \qquad \sigma^2=\lambda\epsilon, \qquad \tau=\frac{\epsilon}{2},
	\label{eq:scaling}
\end{equation}
where $0<\epsilon\ll 1$ is a scaling parameter and $\lambda>0$ a proportionality constant. From~\eqref{eq:Boltz} we obtain therefore
\begin{equation}
	\frac{d}{dt}\int_0^1\varphi(v)f_\epsilon(t,\,v;\,z)\,dv=
		\frac{1}{\epsilon}\int_0^1\int_0^1\ave{\varphi(v')-\varphi(v)}f_\epsilon(t,\,v;\,z)f_\epsilon(t,\,v_\ast;\,z)\,dv\,dv_\ast,
	\label{eq:Boltz.eps}
\end{equation}
where we have stressed that the solution $f$ depends now also on the scaling parameter $\epsilon$.

From now on, we take $\varphi$ smooth, specifically $\varphi\in C^\infty_c(0,\,1)$. We assume that, for every $\epsilon>0$ and every $z\in\R_+$,~\eqref{eq:Boltz.eps} admits a unique solution $f_\epsilon(\cdot,\,\cdot;\,z)\in C(\R_+;\,L^1(0,\,1))$ which is a probability density function in the variable $v$, cf.~\cite[Appendix A]{freguglia2017CMS}. Moreover, we assume that there exists a probability density function $f(\cdot,\,\cdot;\,z)\in C(\R_+;\,L^1(0,\,1))$ such that, up to subsequences, $f_\epsilon(\cdot,\,\cdot;\,z)\to f(\cdot,\,\cdot;\,z)$ in $C(\R_+;\,L^1(0,\,1))$ when $\epsilon\to 0^+$.

Expanding $\varphi(v')$ in Taylor's series up to the third order about $v$ (because $v'\approx v$ in the quasi-invariant regime), we get
\begin{align*}
	\frac{d}{dt}\int_0^1\varphi(v)f_\epsilon(t,\,v;\,z)\,dv &=\int_0^1\int_0^1\varphi'(v)I(v,\,v_\ast;\,z)f_\epsilon(t,\,v;\,z)f_\epsilon(t,\,v_\ast;\,z)\,dv\,dv_\ast \\
	&\phantom{=} +\frac{\lambda}{2}\int_0^1\varphi''(v)D^2(v)f_\epsilon(t,\,v;\,z)\,dv+R_\epsilon(f_\epsilon,\,f_\epsilon)[\varphi](t;\,z),
\end{align*}
where $I$ is given by~\eqref{eq:I} and $R_\epsilon(f_\epsilon,\,f_\epsilon)[\varphi]$ is a remainder which depends on $\varphi'''$ and on $(v'-v)^3$, cf.~\cite{tosin2019MMS}. In the limit $\epsilon\to 0^+$, it can be proved that $R_\epsilon(f_\epsilon,\,f_\epsilon)[\varphi]\to 0$ for all $\varphi\in C^\infty_c([0,\,1])$, see again~\cite{tosin2019MMS} for details. Consequently, in view of the assumptions stated above, we find that the asymptotic equation satisfied by $f$ is
\begin{align*}
	\frac{d}{dt}\int_0^1\varphi(v)f(t,\,v;\,z)\,dv &=\int_0^1\int_0^1\varphi'(v)I(v,\,v_\ast;\,z)f(t,\,v;\,z)f(t,\,v_\ast;\,z)\,dv\,dv_\ast \\
	&\phantom{=} +\frac{\lambda}{2}\int_0^1\varphi''(v)D^2(v)f(t,\,v;\,z)\,dv,
\end{align*}
i.e., integrating by parts the right-hand side and invoking the compactness of the support and the arbitrariness of $\varphi$:
\begin{equation}
	\partial_tf=\frac{\lambda}{2}\partial_v^2\left(D^2(v)f\right)-\partial_v\bigl[\bigl(P(1+(1-P)V(t;\,z))-v\bigr)f\bigr],
	\label{eq:FP}
\end{equation}
where we have substituted the expression~\eqref{eq:I} of $I$.

Equation~\eqref{eq:FP} is the announced Fokker-Planck asymptotic model, which describes the trend of the system in the quasi-invariant regime. Notice that~\eqref{eq:FP} is a Fokker-Planck equation with non-constant coefficients, due to the fact that the mean speed $V(t;\,z)$ is not conserved in time by the interactions~\eqref{eq:binary}. The stationary solution $f_\infty$ to~\eqref{eq:FP}, which provides an asymptotic approximation of the Maxwellian of~\eqref{eq:Boltz}, solves
$$ \frac{\lambda}{2}\partial_v\left(D^2(v)f_\infty\right)-\left(V_\infty-v\right)f_\infty=0, $$
where $V_\infty$ is the asymptotic mean speed~\eqref{eq:Vinf}. In particular, if we choose\footnote{The function~\eqref{eq:D} does not actually comply with the requirement in~\eqref{eq:eta.D}, due to the vertical tangents at $v=0,\,1$. Nevertheless, such a $D$ may be obtained as the uniform limit for $\epsilon\to 0^+$ of e.g., the sequence of functions $D_\epsilon(v):=\sqrt{[(1+\epsilon)v(1-v)-\frac{\epsilon}{4}]_+} $, where $[\cdot]_+$ denotes the positive part. It can be checked that $D_\epsilon$ complies with~\eqref{eq:eta.D} with $c=\sqrt{\frac{\epsilon}{1+\epsilon}}$, see~\cite{toscani2006CMS} for details. Therefore, the use of~\eqref{eq:D} in~\eqref{eq:FP} is justified after performing the quasi-invariant limit.}
\begin{equation}
	D(v):=\sqrt{v(1-v)}
	\label{eq:D}
\end{equation}
we obtain that the unique stationary solution with unitary mass is
\begin{equation}
	f_\infty(v;\,z)=\frac{v^{\frac{2V_\infty(\rho;\,z)}{\lambda}-1}{(1-v)}^{\frac{2(1-V_\infty(\rho;\,z))}{\lambda}-1}}
		{\operatorname{Beta}\!\left(\frac{2V_\infty(\rho;\,z)}{\lambda},\,\frac{2(1-V_\infty(\rho;\,z))}{\lambda}\right)},
	\label{eq:ginf}
\end{equation}
where $\operatorname{Beta}(\cdot,\,\cdot)$ is the beta function. We notice that~\eqref{eq:ginf} is a beta probability density function with mean $V_\infty(\rho;\,z)$, as expected, and variance $\frac{\lambda V_\infty(\rho;\,z)(1-V_\infty(\rho;\,z))}{2+\lambda}$. Interestingly, beta probability densities have been found to fit particularly well the experimental data on the speed distribution of the vehicles~\cite{maurya2016TRP,ni2018AMM}.

\begin{figure}[!t]
\centering
\subfigure[$z-1\sim\B\!\left(50,\,\frac{1}{50}\right),\ \rho=0.2$]{\includegraphics[scale=0.4]{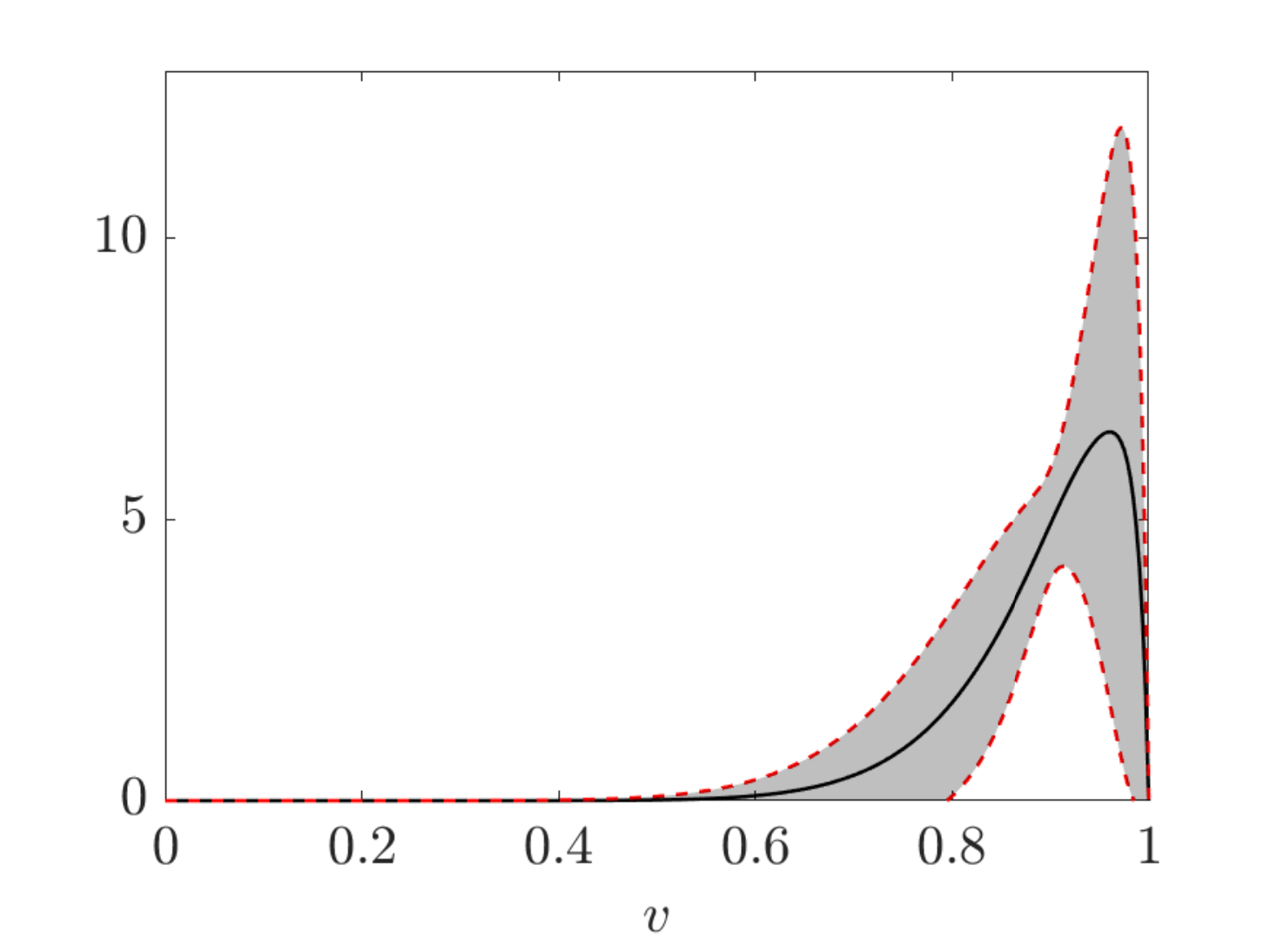}}
\subfigure[$z-1\sim\B\!\left(50,\,\frac{1}{50}\right),\ \rho=0.4$]{\includegraphics[scale=0.4]{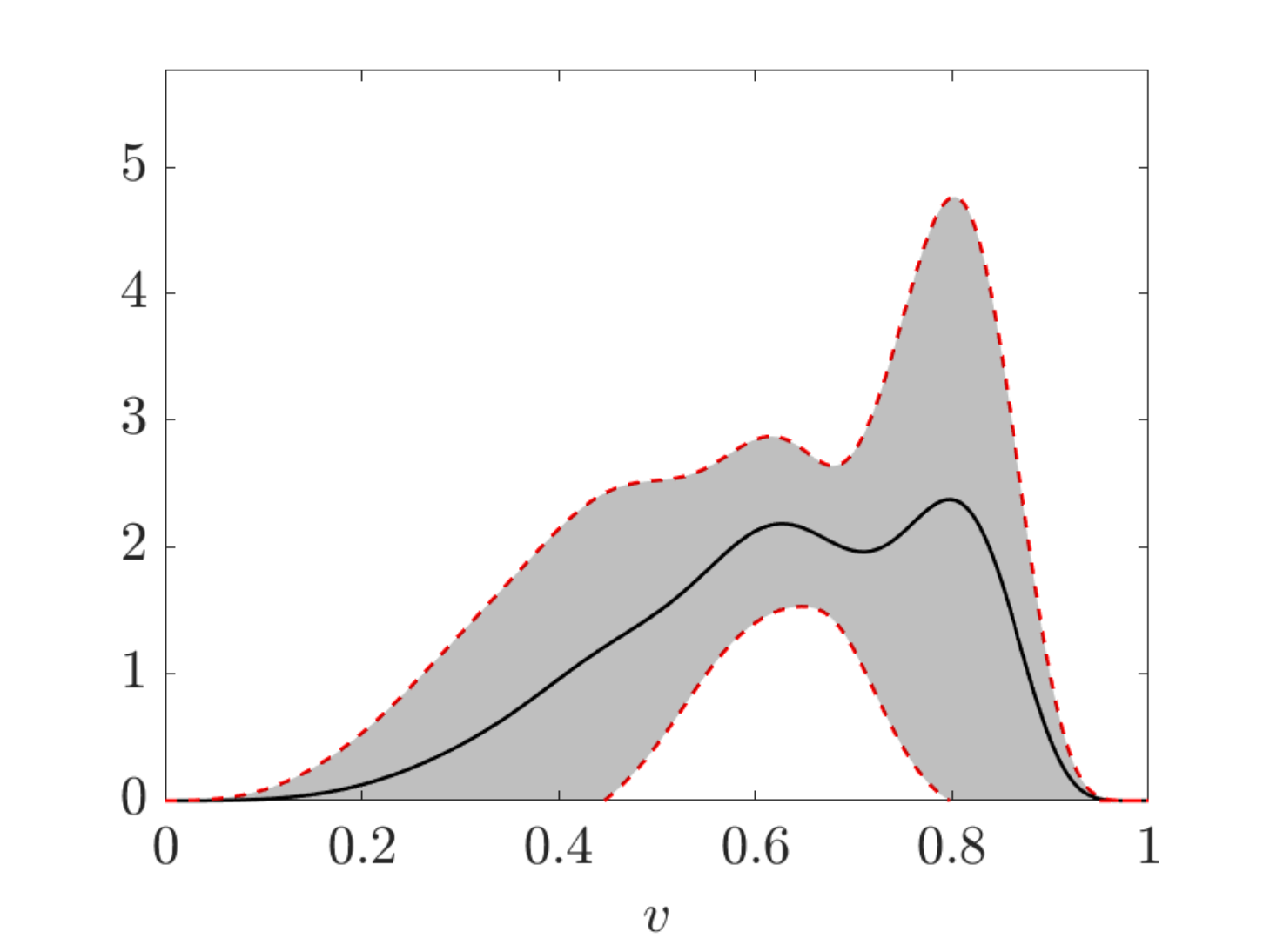}}
\caption{Representation of $\bar{f}_\infty(v)$ (solid lines) and of its $z$-standard deviation $\pm\sqrt{\Var_z(f_\infty(v;\,z))}$ (dashed lines) for $z$ such that $z-1$ has a binomial distribution. In (a), $\rho=0.2$. In (b), $\rho=0.4$.}
\label{fig:ginf.binom}
\end{figure}
\begin{figure}[!t]
\centering
\subfigure[$z\sim\cU({[1,\,3]})$]{\includegraphics[scale=0.4]{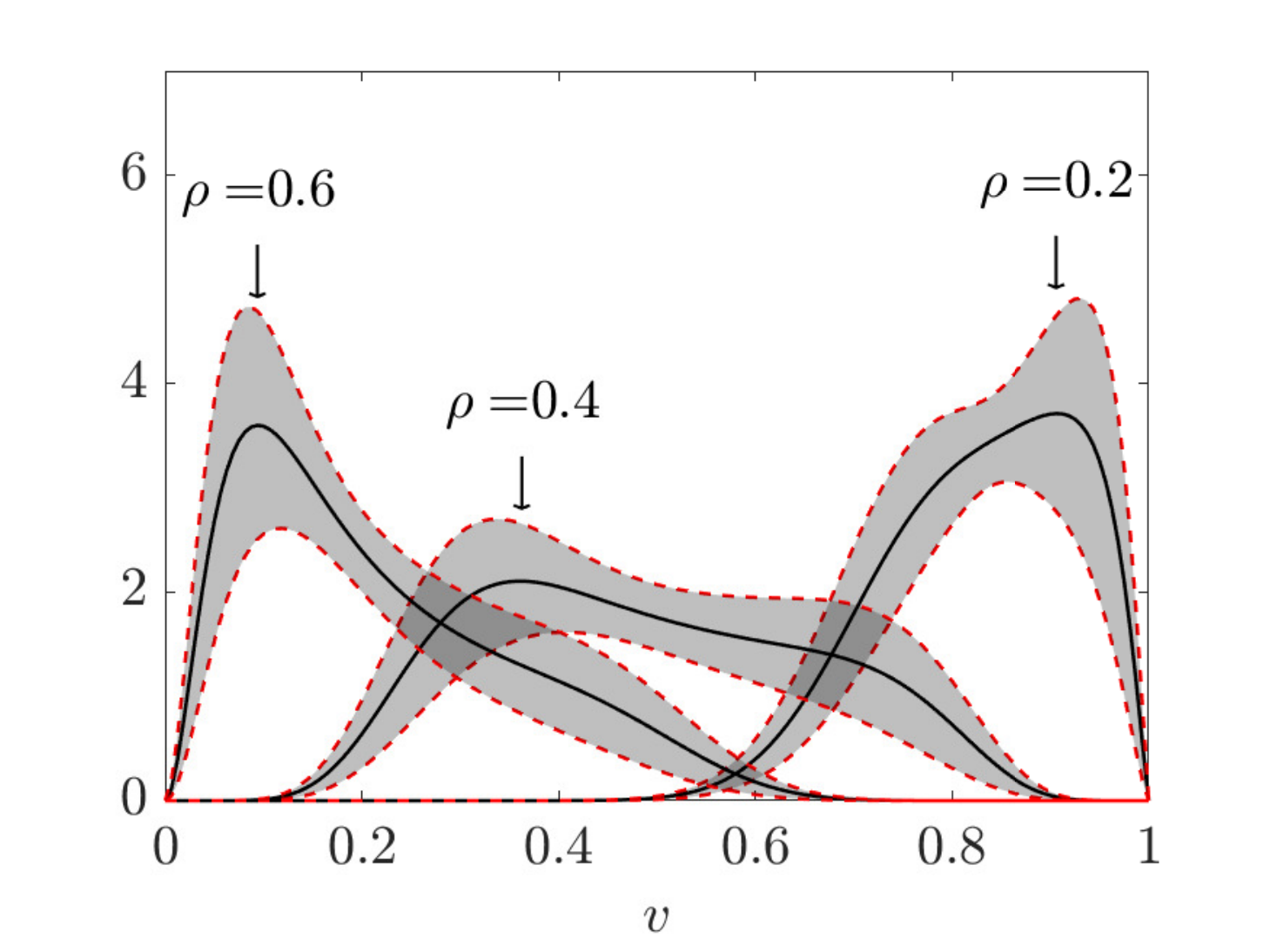}}
\subfigure[$z-2\sim\operatorname{Gamma}(3,\,3)$]{\includegraphics[scale=0.4]{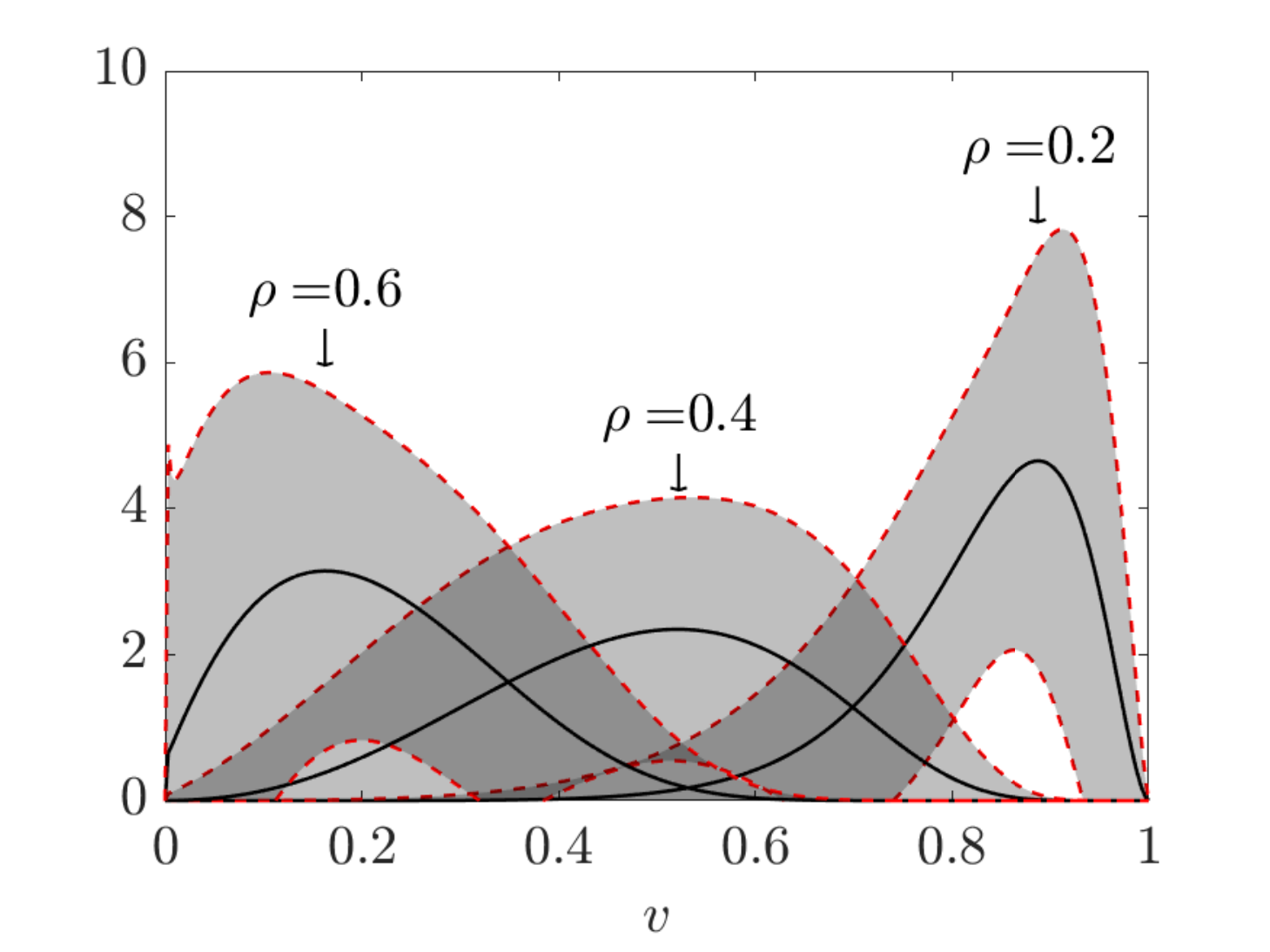}}
\caption{Representation of $\bar{f}_\infty(v)$ (solid lines) and of its $z$-standard deviation $\pm\sqrt{\Var_z(f_\infty(v;\,z))}$ (dashed lines) for different distributions of the uncertain parameter $z$ and various traffic densities. In (a), $z$ is uniformly distributed in $[1,\,3]$. In (b), $z$ is such that $z-2$ has a gamma distribution, thus it is, in particular, unbounded.}
\label{fig:ginf.unif-gamma}
\end{figure}

Actually,~\eqref{eq:ginf} is a family of equilibrium distributions parametrised by $z$. In order to quantify the intrinsic uncertainty in this description of the equilibrium, we may refer to the mean equilibrium distribution:
$$ \bar{f}_\infty(v):=\E_z(f_\infty(v;\,z))=\int_{\R_+}f_\infty(v;\,z)\Psi(z)\,dz $$
and to its variance
$$ \Var_z(f_\infty(v;\,z)):=\int_{\R_+}f_\infty^2(v;\,z)\Psi(z)\,dz-\bar{f}_\infty^2(v). $$
Notice, from~\eqref{eq:Vinf_bar}, that
$$ \bar{V}_\infty(v)=\int_0^1v\bar{f}_\infty(v)\,dv. $$

Figures~\ref{fig:ginf.binom},~\ref{fig:ginf.unif-gamma} show $\bar{f}_\infty$, together with the uncertainty bands determined by its standard deviation $\pm\sqrt{\Var_z(f_\infty(\cdot;\,z))}$, for various discrete and continuous probability distributions of $z$. The curves have been computed numerically by means of suitable quadrature formulas in $z$, starting from the analytical expression~\eqref{eq:ginf}. We observe that, depending on the distribution of $z$, in the mean distribution $\bar{f}_\infty$ several local peaks may appear, which suggest a slightly multi-modal average trend of the equilibrium speed distribution. Such a trend is actually documented in experimental data~\cite{herty2019PREPRINT,maurya2016TRP,ni2018AMM}.

\section{Uncertainty damping by driver-assist control}
\label{sect:uncertainty_damping}
In this section, we tackle the problem of damping the scattering of the fundamental diagram, i.e., owing to~\eqref{eq:area}, of reducing the variance $\varsigma_\infty^2(\rho)$. Although the scattering is a \textit{macroscopic} manifestation of the traffic, its origin relies in the structural uncertainty affecting the \textit{microscopic} interactions~\eqref{eq:binary}. Therefore, we will proceed to control \textit{a priori} (some of) the microscopic interactions~\eqref{eq:binary}, assessing then the aggregate impact of such a control, rather than trying to constrain \textit{a posteriori} the macroscopic flow of the vehicles as a whole. Indeed, in our view, actually feasible ways of controlling a multi-agent system, such as vehicular traffic, make use of \textit{bottom-up strategies}, which target the simple microscopic interactions, rather than of strategies pointing directly to the aggregate behaviour. In the context of vehicular traffic, such bottom-up strategies refer to the control of the local responses given by the driver-vehicle system to the external stimuli coming from other vehicles by means of \textit{Advanced Driver-Assistance Systems} (ADAS).

Let us consider, therefore, the following \textit{controlled} binary interaction:
\begin{align}
	\begin{aligned}[c]
		v' &= v+\gamma\left(I(v,\,v_\ast;\,z)+\Theta u\right)+D(v)\eta \\
		v_\ast' &= v_\ast,
	\end{aligned}
	\label{eq:binary-u.gen}
\end{align}
where $I$ is given by~\eqref{eq:I}, $u$ is a control to be determined from the optimisation of a suitable functional and $\Theta\in\{0,\,1\}$ is a Bernoulli random variable taking into account whether the randomly selected $v$-vehicle is or is not equipped with the ADAS technology. In particular, the law of $\Theta$ is
$$ \P(\Theta=1)=p, \qquad \P(\Theta=0)=1-p, $$
where $p\in [0,\,1]$ is the so-called \textit{penetration rate}, i.e. the percentage of ADAS vehicles in the traffic stream.

We propose, in the following, two different functionals, whose minimisation yields an \textit{instantaneous optimal control} $u^\ast$ to be used in~\eqref{eq:binary-u.gen}. These functionals correspond, in principle, to different control strategies, which however, as we will see, lead to the same macroscopic effect. Remarkably, one of such strategies has an eminently theoretical interest, whereas the other is closer to an actual implementability.

\subsection{Pointwise uncertainty control}
\label{sect:pointwise}
Let $v_d(\rho)\in [0,\,1]$ be a \textit{desired} (or \textit{recommended}) \textit{speed} determined as a function of the vehicle density $\rho$ and suggested to the ADAS vehicles by e.g., some sensors along the road, which are able to detect the level of traffic congestion. We assume $0\leq v_d(\rho)\leq 1$ for all $\rho\in [0,\,1]$. Further physically meaningful characteristics of the desired speed are that the mapping $\rho\mapsto v_d(\rho)$ be non-increasing and that $v_d(0)=1$, $v_d(1)=0$. A prototypical example may be $v_d(\rho)=1-\rho$.

Automated traffic flow dynamics have been modelled recently as cruise controls implemented either in the vehicle-to-vehicle interactions, see e.g.~\cite{helbing2001RMP,ntousakis2015TRP,piccoli2019PREPRINT,tosin2019MMS}, or directly in the aggregate traffic dynamics~\cite{delis2015CMA,delis2018TRR}. In general, the effect of such controls is the stabilisation of the traffic flow depending on the penetration rate of the automated vehicles, as also testified by the field experiment reported in~\cite{stern2018TRC}. Taking inspiration from this idea, an optimal control $u^\ast$ may be thought of as aimed at keeping the post-interaction speed $v'$ of an ADAS vehicle as close as possible to $v_d(\rho)$. This is obtained by minimising e.g. the following functional:
$$ J_1(v',\,u;\,z):=\frac{1}{2}\ave{{(v_d(\rho)-v')}^2+\nu u^2} $$
subject to the constraint~\eqref{eq:binary-u.gen}. Notice that $J_1$ depends pointwise on the uncertain parameter $z$ through $v'$. Moreover, $\nu u^2$, with $\nu>0$, is a quadratic penalisation term, which expresses the cost of the control.

The minimisation of $J_1$ may be carried out by forming the Lagrangian:
$$ \cL_1(v',\,u,\,\mu;\,z):= J_1(v',\,u;\,z)+\mu\ave{v'-v-\gamma\left(I(v,\,v_\ast;\,z)+\Theta u\right)-D(v)\eta}, $$
where $\mu\in\R$ is the Lagrange multiplier, and then by considering the optimality conditions:
\begin{equation*}
	\begin{cases}
		\partial_{v'}\cL_1=\mu-\ave{v_d(\rho)-v'}=0 \\
		\partial_u\cL_1=\nu u-\gamma\Theta\mu=0 \\
		\partial_\mu\cL_1=\ave{v'-v-\gamma\left(I(v,\,v_\ast;\,z)+\Theta u\right)-D(v)\eta}=0,
	\end{cases}
\end{equation*}
whence we deduce the optimal control
\begin{equation}
	u^\ast(v,\,v_\ast;\,z)=\frac{\gamma\Theta}{\nu+\gamma^2\Theta^2}\left(v_d(\rho)-v-\gamma I(v,\,v_\ast;\,z)\right).
	\label{eq:ustar.1}
\end{equation}
Notice that, for each $z\in\R_+$, this is a \textit{feedback} control, because it is expressed as a function of the pre-interaction speeds $v$, $v_\ast$. Plugged into~\eqref{eq:binary-u.gen}, it produces the controlled binary interactions
\begin{align}
	\begin{aligned}[c]
		v' &= v+\frac{\nu\gamma}{\nu+\gamma^2\Theta^2}I(v,\,v_\ast;\,z)
			+\frac{\gamma^2\Theta^2}{\nu+\gamma^2\Theta^2}\left(v_d(\rho)-v\right)+D(v)\eta \\
		v'_\ast &= v_\ast.
	\end{aligned}
	\label{eq:binary-u.1}
\end{align}
	
\subsubsection{Physical admissibility of the interactions~\texorpdfstring{\eqref{eq:binary-u.1}}{}}
\label{sect:phys.admiss-u.1}
The physical admissibility of the interactions~\eqref{eq:binary-u.1} requires $v'\in [0,\,1]$ for all $v,\,v_\ast\in [0,\,1]$. Considering the expression~\eqref{eq:I} for $I$, together with the observations made in Section~\ref{sect:phys.admiss} plus $v_d(\rho)\geq 0$ and $\Theta\leq 1$, we see that a sufficient condition for $v'\geq 0$ is
$$ \frac{\nu(1-\gamma)}{\nu+\gamma^2}v+D(v)\eta\geq 0, $$
which is certainly satisfied if there exists a constant $c>0$ such that
$$ \eta\geq c\frac{\nu(\gamma-1)}{\nu+\gamma^2}, \qquad cD(v)\leq v. $$
Likewise, since $v_d(\rho)\leq 1$, a sufficient condition for $v'\leq 1$ is
$$ \frac{\nu(\gamma-1)}{\nu+\gamma^2}(1-v)+D(v)\eta\leq 0, $$
which is certainly satisfied if
$$ \eta\leq c\frac{\nu(1-\gamma)}{\nu+\gamma^2}, \qquad cD(v)\leq 1-v. $$

On the whole,~\eqref{eq:binary-u.1} is physically admissible if there exists $c>0$ such that
\begin{equation}
	\abs{\eta}\leq c\frac{\nu(1-\gamma)}{\nu+\gamma^2}, \qquad cD(v)\leq\min\{v,\,1-v\}.
	\label{eq:eta.D-contr.1}
\end{equation}
Again, this implies that the random variable $\eta$ has to be bounded and that the diffusion coefficient $D$ vanishes at $v=0,\,1$. Notice that if $\nu\to+\infty$, i.e. if the control is so penalised that the optimum is not to control the interactions ($u^\ast=0$, cf.~\eqref{eq:ustar.1}), then we recover exactly~\eqref{eq:eta.D}.

\subsubsection{Boltzmann-type kinetic description}
Since, under~\eqref{eq:eta.D-contr.1}, all the interactions~\eqref{eq:binary-u.1} are physically admissible, the evolution of the distribution function $f=f(t,\,v;\,z)$ is described by a Boltzmann-type equation for Maxwellian-like particles analogous to~\eqref{eq:Boltz}. The only difference is that now, in the ``collisional'' term appearing on the right-hand side of~\eqref{eq:Boltz}, we need to take into account a further average with respect to the random variable $\Theta$. Thus we write:
\begin{equation}
	\frac{d}{dt}\int_0^1\varphi(v)f(t,\,v;\,z)\,dv=\frac{1}{2\tau}\E_\Theta\left[\int_0^1\int_0^1
		\ave{\varphi(v')-\varphi(v)}f(t,\,v;\,z)f(t,\,v_\ast;\,z)\,dv\,dv_\ast\right],
	\label{eq:Boltz.Theta}
\end{equation}
where $v'$ is given by~\eqref{eq:binary-u.1}.

In particular, we are interested in the trend of the mean speed $V=V(t;\,z)$, which we obtain with $\varphi(v)=v$ considering furthermore that $\Theta^2$ in~\eqref{eq:binary-u.1} is a Bernoulli random variable with parameter $p$, because so is $\Theta$ by definition. Hence:
$$ \frac{dV}{dt}=\frac{\gamma}{2\tau}\cdot\frac{\nu+(1-p)\gamma^2}{\nu+\gamma^2}\left[P(\rho;\,z)(1-V)-{(1-P(\rho;\,z))}^2V\right]
	+\frac{\gamma^2}{2\tau}\cdot\frac{p}{\nu+\gamma^2}\left(v_d(\rho)-V\right). $$
	
In order to better characterise the role of the various parameters in the large time trend of the mean speed, it is useful to refer to the quasi-invariant interaction regime introduced in~\eqref{eq:scaling}. In this case, to observe in the limit also a balanced contribution of the controlled part of the interactions, we scale also the control penalisation $\nu$ as
\begin{equation}
	\nu=\kappa\epsilon,
	\label{eq:scaling.nu}
\end{equation}
where $\kappa>0$ is a proportionality parameter which expresses the strength of the control in the quasi-invariant scale. Then, we obtain that the limit equation satisfied by $V$ for $\epsilon\to 0^+$ is
\begin{equation}
	\frac{dV}{dt}=P(\rho;\,z)(1-V)-{(1-P(\rho;\,z))}^2V+p^\ast(v_d(\rho)-V),
	\label{eq:V.1}
\end{equation}
where
$$ p^\ast:=\frac{p}{\kappa} $$
is what we call the \textit{effective penetration rate}. In particular, the equilibrium mean speed is
\begin{equation}
	V_\infty(\rho;\,z):=\frac{P(\rho;\,z)+p^\ast v_d(\rho)}{P(\rho;\,z)+{(1-P(\rho;\,z))}^2+p^\ast}.
	\label{eq:Vinf.1}
\end{equation}

Since $0\leq P(\rho;\,z),\,v_d(\rho)\leq 1$, we have\footnote{We use the notation $a\lesssim b$ to mean that there exists a constant $C>0$, whose exact value is unimportant, such that $a\leq Cb$.}
\begin{equation}
	\abs{V_\infty(\rho;\,z)-v_d(\rho)}
		=\frac{\abs*{P(\rho;\,z)+\left(P(\rho;\,z)+{(1-P(\rho;\,z))}^2\right)v_d(\rho)}}{P(\rho;\,z)+{(1-P(\rho;\,z))}^2+p^\ast}
			\lesssim\frac{1}{p^\ast},
	\label{eq:dist.Vinf-v0}
\end{equation}
which shows that the instantaneous control~\eqref{eq:ustar.1} has the effect of keeping the average traffic stream close to the suggested optimal speed. From this, we easily deduce the important consequence:
\begin{align}
	\begin{aligned}[b]
		\varsigma_\infty^2(\rho) &= \Var_z(V_\infty(\rho;\,z))=\Var_z(V_\infty(\rho;\,z)-v_d(\rho)) \\
		&\leq \E_z\left({(V_\infty(\rho;\,z)-v_d(\rho))}^2\right) \\
		&\lesssim \frac{1}{{(p^\ast)}^2},
	\end{aligned}
	\label{eq:varsigminf.est}
\end{align}
meaning that the control reduces indeed the scattering of the fundamental diagram caused by the uncertain parameter $z$. The magnitude of such a reduction is estimated by the inverse of ${(p^\ast)}^2$, hence it is higher for either a higher penetration rate $p$ or a lower control penalisation $\kappa$.

Interestingly, this result does not actually depend on the specific form~\eqref{eq:I} of the interaction function $I$. A fruitful generalisation is enlightened by the following theorem:
\begin{theorem}
\label{theo:I}
In~\eqref{eq:binary-u.1}, let the interaction function $I$ be bounded, i.e. let a constant $I_{\max}>0$ exist such that
$$ \abs{I(v,\,v_\ast;\,z)}\leq I_{\max} \qquad \forall\,v,\,v_\ast\in [0,\,1],\, z\in\R_+. $$
Assume, moreover, that $I$ is such that the interactions~\eqref{eq:binary-u.1} are all physically admissible in the sense discussed in Section~\ref{sect:phys.admiss-u.1}. Then
$$ \varsigma_\infty^2(\rho)\lesssim\frac{1}{{(p^\ast)}^2}. $$
\end{theorem}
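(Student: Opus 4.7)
The proof strategy is to repeat the argument that produced the estimate \eqref{eq:varsigminf.est} for the specific interaction function \eqref{eq:I}, noticing that it only needed two ingredients: an explicit formula for the equilibrium mean speed $V_\infty(\rho;\,z)$ in the quasi-invariant regime, and an a priori bound on the contribution of $I$ to that formula. Under the hypotheses of the theorem, both ingredients survive almost verbatim.

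First I would redo the derivation of the mean-speed ODE starting from the Boltzmann-type equation~\eqref{eq:Boltz.Theta} with $\varphi(v)=v$ and $v'$ given by~\eqref{eq:binary-u.1}, but without substituting the specific form of $I$. Because $\Theta^2=\Theta$ is Bernoulli$(p)$ and $\ave{\eta}=0$, averaging over $\Theta$ and $\eta$ and integrating against $f(t,\,v;\,z)f(t,\,v_\ast;\,z)$ yields
$$
\frac{dV}{dt}=\frac{\gamma}{2\tau}\cdot\frac{\nu+(1-p)\gamma^2}{\nu+\gamma^2}\,\bar{I}(t;\,z)
+\frac{\gamma^2}{2\tau}\cdot\frac{p}{\nu+\gamma^2}\bigl(v_d(\rho)-V\bigr),
$$
where $\bar{I}(t;\,z):=\int_0^1\!\int_0^1 I(v,\,v_\ast;\,z)f(t,\,v;\,z)f(t,\,v_\ast;\,z)\,dv\,dv_\ast$. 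The boundedness hypothesis on $I$ gives immediately $\abs{\bar{I}(t;\,z)}\leq I_{\max}$ for all $t$ and $z$, which is all that will be needed.

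Next I would apply the quasi-invariant scaling~\eqref{eq:scaling} together with~\eqref{eq:scaling.nu} and pass to the limit $\epsilon\to 0^+$. As in the derivation of~\eqref{eq:V.1}, the coefficient in front of $\bar{I}$ tends to $1$ and the coefficient in front of $v_d(\rho)-V$ tends to $p^\ast=p/\kappa$, giving the limit equation $\dot V=\bar{I}_\infty(z)+p^\ast(v_d(\rho)-V)$ at the stationary level, where $\bar{I}_\infty(z)$ is the value of $\bar{I}$ computed against the limit stationary distribution. Solving for the equilibrium yields
$$
V_\infty(\rho;\,z)=v_d(\rho)+\frac{\bar{I}_\infty(z)}{p^\ast},
$$
and therefore $\abs{V_\infty(\rho;\,z)-v_d(\rho)}\leq I_{\max}/p^\ast\lesssim 1/p^\ast$, exactly the analogue of~\eqref{eq:dist.Vinf-v0}.

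Finally I would reproduce the short chain~\eqref{eq:varsigminf.est}: since $v_d(\rho)$ is deterministic in $z$,
$$
\varsigma_\infty^2(\rho)=\Var_z\bigl(V_\infty(\rho;\,z)-v_d(\rho)\bigr)\leq\E_z\!\left(\bigl(V_\infty(\rho;\,z)-v_d(\rho)\bigr)^2\right)\leq\frac{I_{\max}^2}{(p^\ast)^2},
$$
which is the desired bound. The only step that needs a little care, rather than being fully routine, is justifying that $\bar{I}_\infty(z)$ is well-defined and uniformly bounded after the quasi-invariant limit; however, this reduces to the physical admissibility assumption (which keeps $f_\epsilon(\cdot;\,z)$ a probability density on $[0,\,1]$ along the limit) combined with the uniform bound $\abs{I}\leq I_{\max}$, so no further work on the specific structure of $I$ is required.
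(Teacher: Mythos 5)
Your proposal is correct and follows essentially the same route as the paper: derive the mean-speed equation from~\eqref{eq:Boltz.Theta} for a general bounded $I$, pass to the quasi-invariant limit to get $\dot V=\bar I+p^\ast(v_d(\rho)-V)$, bound $\abs{V_\infty(\rho;\,z)-v_d(\rho)}$ by $I_{\max}/p^\ast$, and conclude via the chain~\eqref{eq:varsigminf.est}. The only (immaterial) difference is that the paper extracts the final bound by integrating the equation for $V-v_d(\rho)$ with an integrating factor and letting $t\to+\infty$, rather than by balancing the stationary equation, which avoids having to assume that $\bar I(t;\,z)$ itself converges to a limit $\bar I_\infty(z)$.
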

\begin{proof}
We observe that it suffices to prove an estimate like~\eqref{eq:dist.Vinf-v0}, for then the thesis follows immediately from~\eqref{eq:varsigminf.est}, which holds independently of the specific form of $I$.

Since all interactions~\eqref{eq:binary-u.1} are physically admissible by assumption, we may use the Boltzmann-type equation~\eqref{eq:Boltz.Theta}. For $\varphi(v)=v$ we find then
$$ \frac{dV}{dt}=\frac{\gamma}{2\tau}\left[\int_0^1\int_0^1\left(1-\frac{p\gamma^2}{\nu+\gamma^2}\right)I(v,\,v_\ast;\,z)f(t,\,v;\,z)
	f(t,\,v_\ast;\,z)\,dv\,dv_\ast+\frac{p\gamma}{\nu+\gamma^2}\left(v_d(\rho)-V\right)\right], $$
which, in the quasi-invariant limit, becomes
$$ \frac{dV}{dt}=\int_0^1\int_0^1I(v,\,v_\ast;\,z)f(t,\,v;\,z)f(t,\,v_\ast;\,z)\,dv\,dv_\ast+p^\ast\left(v_d(\rho)-V\right). $$

Since $v_d(\rho)$ is time-independent, we have $\frac{dV}{dt}=\frac{d}{dt}\left(V-v_d(\rho)\right)$, therefore we may write
$$ \frac{d}{dt}\left(V-v_d(\rho)\right)+p^\ast\left(V-v_d(\rho)\right)
	=\int_0^1\int_0^1I(v,\,v_\ast;\,z)f(t,\,v;\,z)f(t,\,v_\ast;\,z)\,dv\,dv_\ast, $$
whence, integrating in time,
\begin{align*}
	V(t;\,z)-v_d(\rho) &= e^{-p^\ast t}\left(V_0(z)-v_d(\rho)\right) \\
	&\phantom{=} +\int_0^t e^{-p^\ast(t-s)}\int_0^1\int_0^1I(v,\,v_\ast;\,z)f(s,\,v;\,z)f(s,\,v_\ast;\,z)\,dv\,dv_\ast\,ds,
\end{align*}
with $V_0(z):=V(0;\,z)$. From the boundedness of $I$ we further obtain
$$ \abs*{\int_0^1\int_0^1I(v,\,v_\ast;\,z)f(s,\,v;\,z)f(s,\,v_\ast;\,z)\,dv\,dv_\ast}\leq I_{\max}, $$
thus we discover:
\begin{align*}
	\abs{V(t;\,z)-v_d(\rho)} &\leq e^{-p^\ast t}\left(V_0(z)-v_d(\rho)\right)+I_{\max}\int_0^t e^{-p^\ast(t-s)}\,ds \\
	&= e^{-p^\ast t}\left(V_0(z)-v_d(\rho)\right)+\frac{I_{\max}}{p^\ast}\left(1-e^{-p^\ast t}\right),
\end{align*}
whence finally
$$ \abs{V_\infty(\rho;\,z)-v_d(\rho)}=\lim_{t\to +\infty}\abs{V(t;\,z)-v_d(\rho)}\leq\frac{I_{\max}}{p^\ast} $$
and we are done.
\end{proof}

\begin{figure}[!t]
\centering
\subfigure[$z-1\sim\B\!\left(50,\,\frac{1}{50}\right)$, $p^\ast=1$]{\includegraphics[scale=0.4]{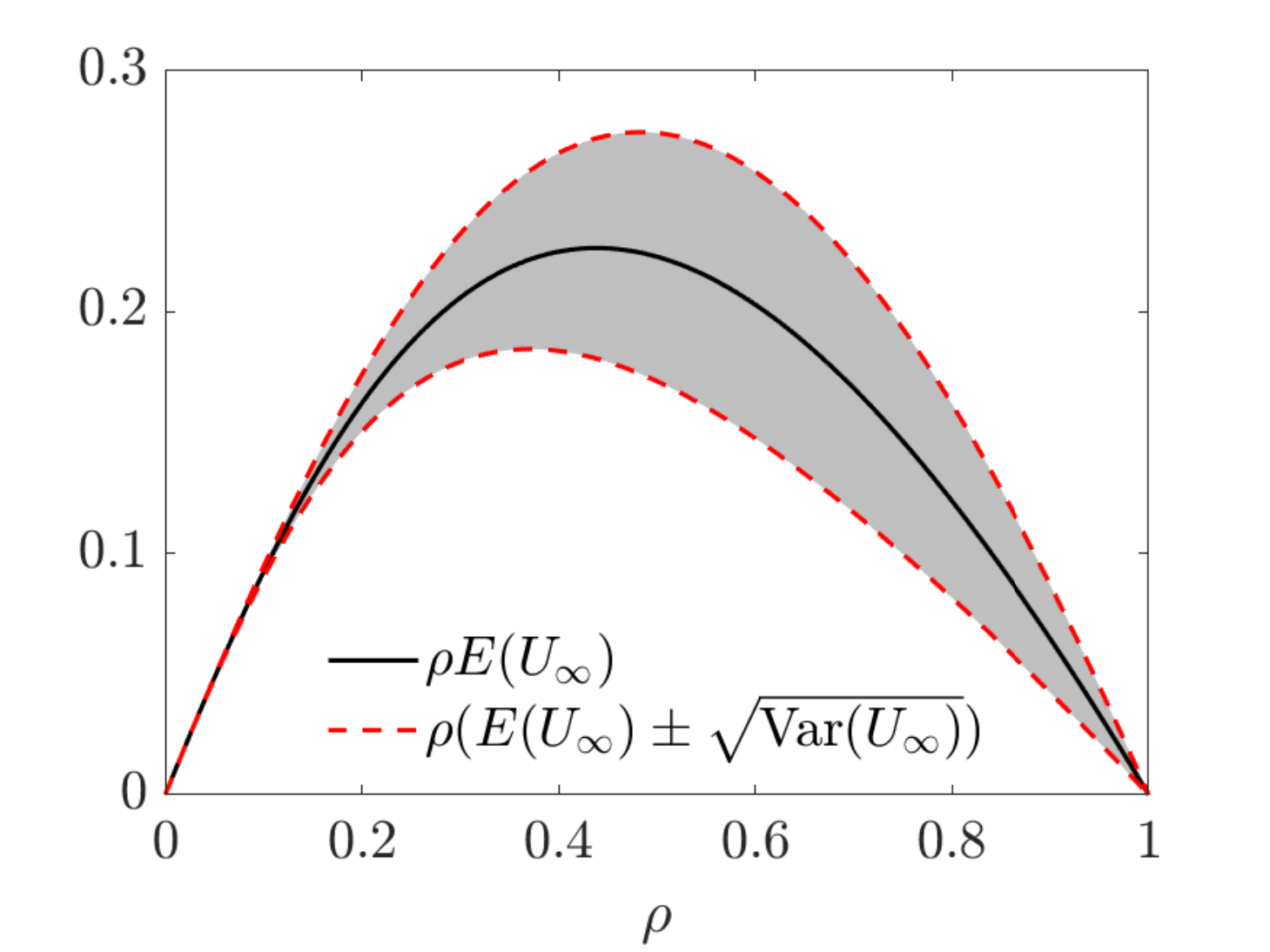}}
\subfigure[$z-1\sim\B\!\left(50,\,\frac{1}{50}\right)$, $p^\ast=10$]{\includegraphics[scale=0.4]{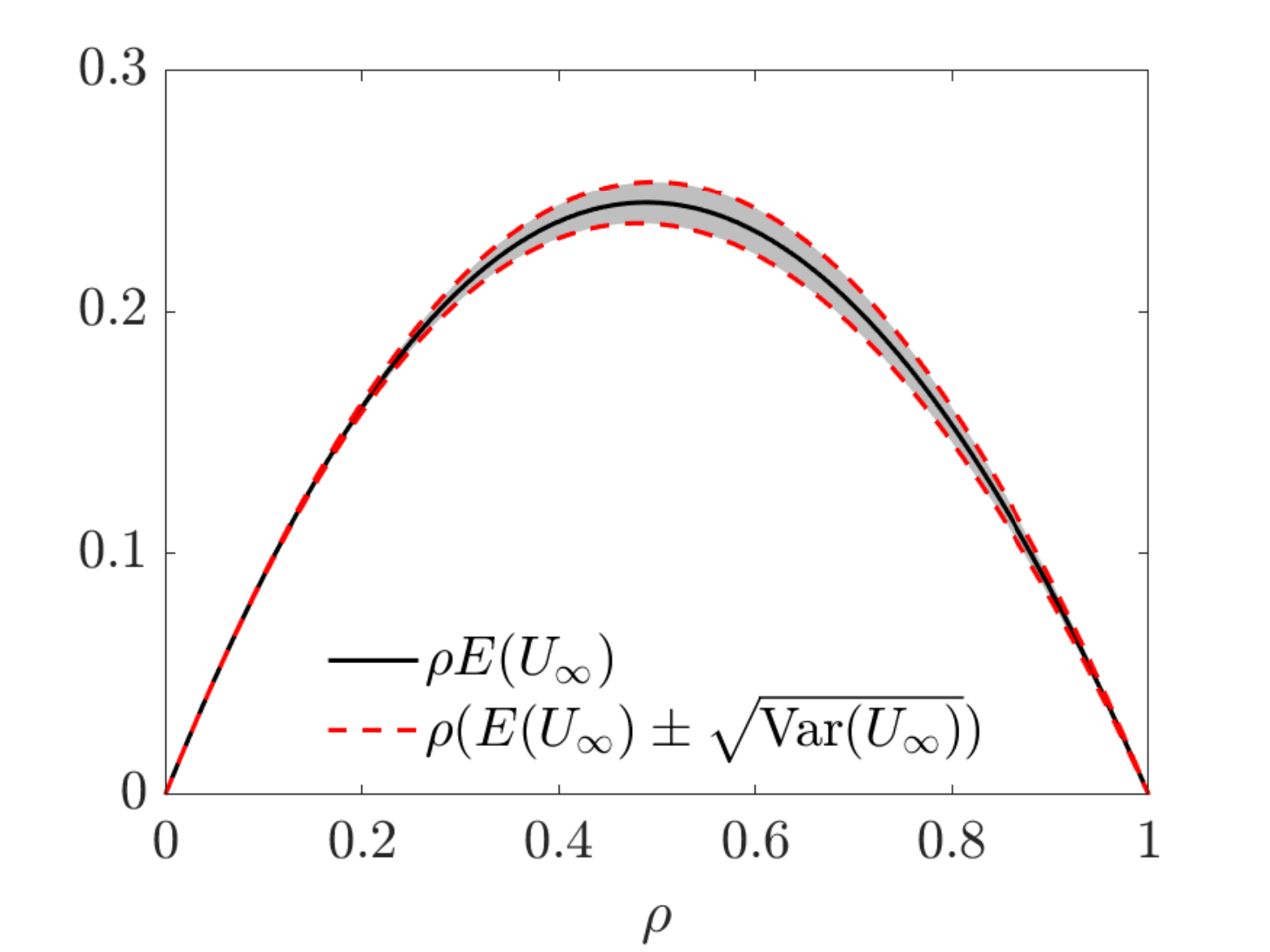}} \\
\subfigure[$z\sim\cU({[1,\,3]})$, $p^\ast=1$]{\includegraphics[scale=0.4]{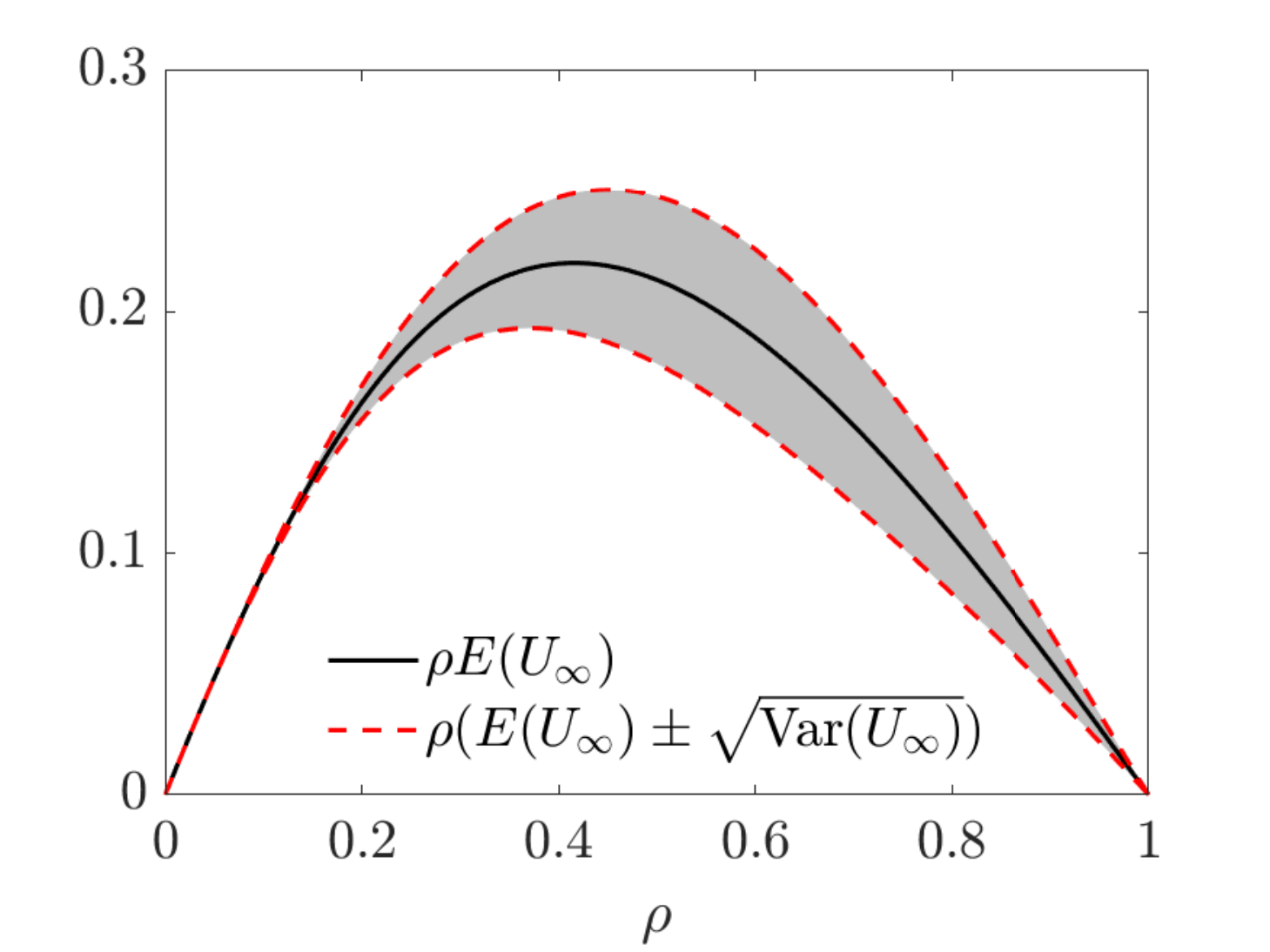}}
\subfigure[$z\sim\cU({[1,\,3]})$, $p^\ast=10$]{\includegraphics[scale=0.4]{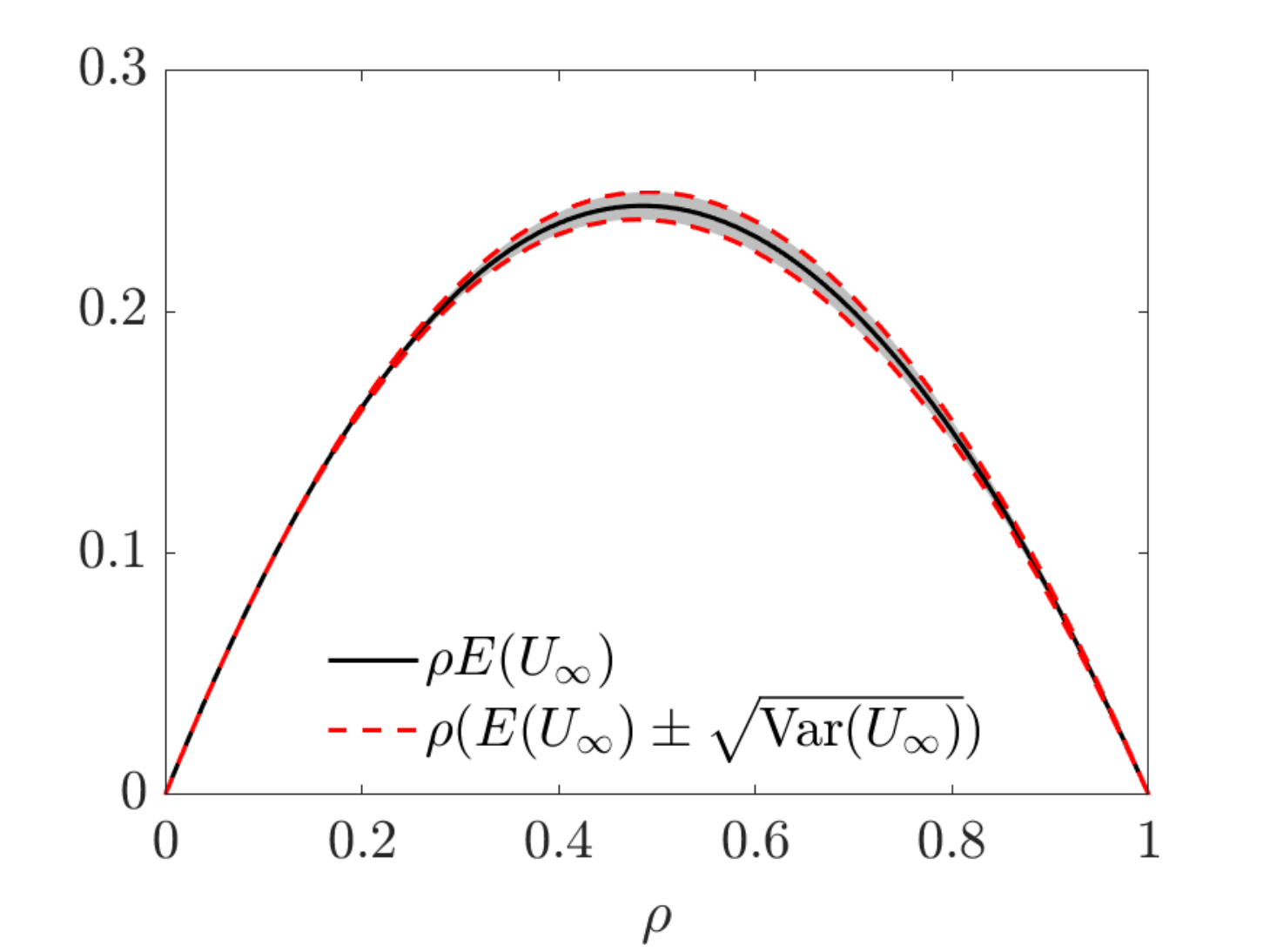}}
\caption{Action of the uncertain control~\eqref{eq:ustar.1} on the fundamental diagram and its scattering for two possible distributions of the uncertain parameter $z$ and two values of the effective penetration rate $p^\ast$. We considered $v_d(\rho)=1-\rho$ as optimal speed. For uniformly distributed $z$ (bottom row), a comparison is possible with Figure~\ref{fig:funddiag.Z_unif}, which illustrates the fundamental diagram and its scattering in the uncontrolled case.}
\label{fig:damping_diag}
\end{figure}

In Figure~\ref{fig:damping_diag}, we show the fundamental diagram $\rho\bar{V}_\infty(\rho)$, cf.~\eqref{eq:Vinf.1}, and its standard deviation $\rho\varsigma_\infty(\rho)$, cf.~\eqref{eq:varsigmainf}, for both a discrete (binomial) and a continuous (uniform) probability distribution of the uncertain parameter $z$ and increasing values of the effective penetration rate $p^\ast$. The curves have been obtained numerically by means of suitable quadrature formulas in $z$, starting from the analytical expression~\eqref{eq:Vinf.1}. As predicted by~\eqref{eq:varsigminf.est}, and more in general by Theorem~\ref{theo:I}, the results display a clear damping of the uncertainty at the macroscopic level, thanks to the action of the microscopic control. This is particularly evident in Figure~\ref{fig:damping_diag}(c)-(d), which may be compared with Figure~\ref{fig:funddiag.Z_unif} illustrating the related uncontrolled case.

\subsubsection{Fokker-Planck description and equilibria}
\label{sect:FP.1}
\begin{figure}[!t]
\centering
\subfigure[$z-1\sim\B\!\left(50,\,\frac{1}{50}\right)$, $p^\ast=1$]{\includegraphics[scale=0.4]{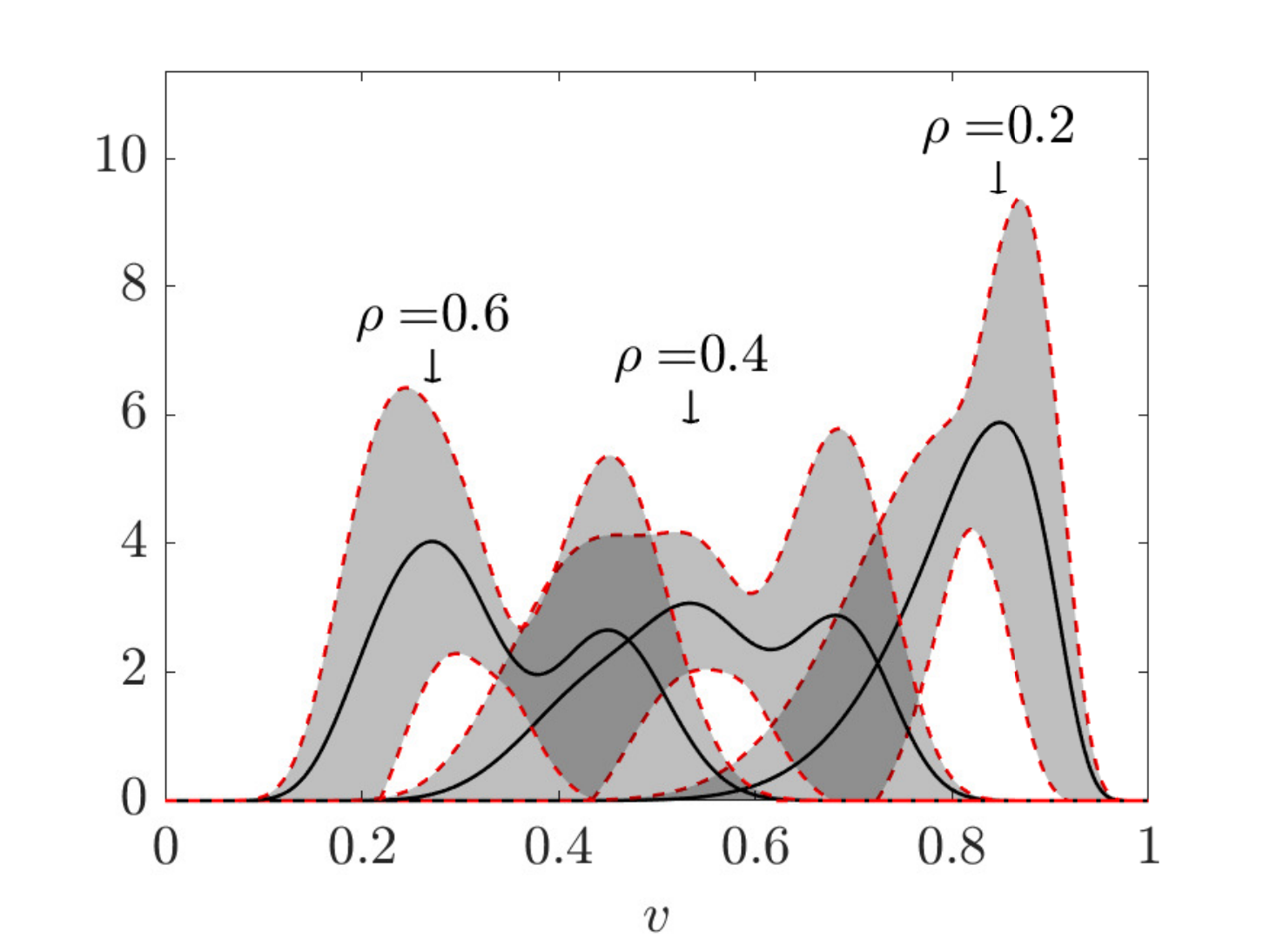}}
\subfigure[$z-1\sim\B\!\left(50,\,\frac{1}{50}\right)$, $p^\ast=10$]{\includegraphics[scale=0.4]{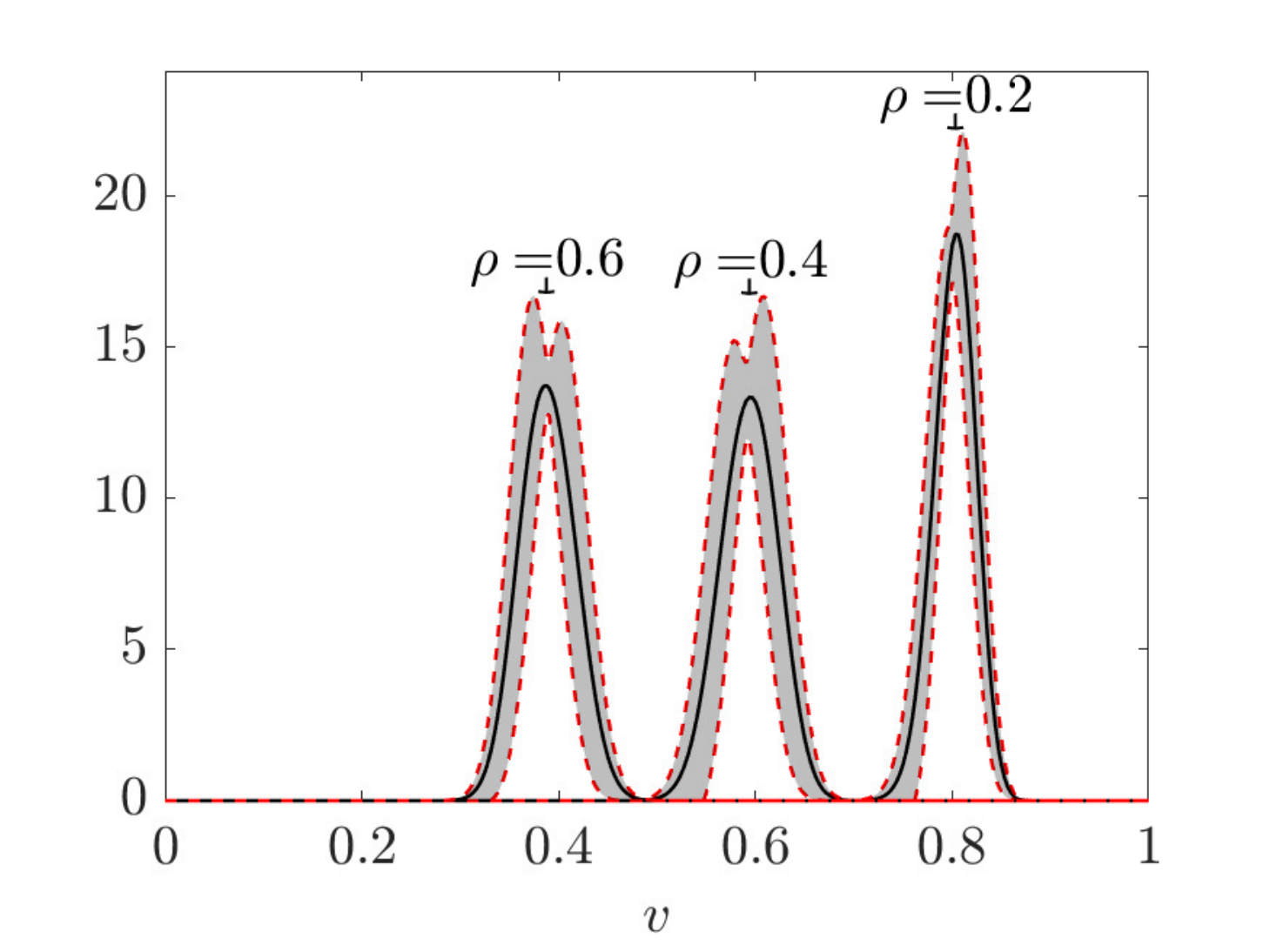}} \\
\subfigure[$z\sim\cU({[1,\,3]})$, $p^\ast=1$]{\includegraphics[scale=0.4]{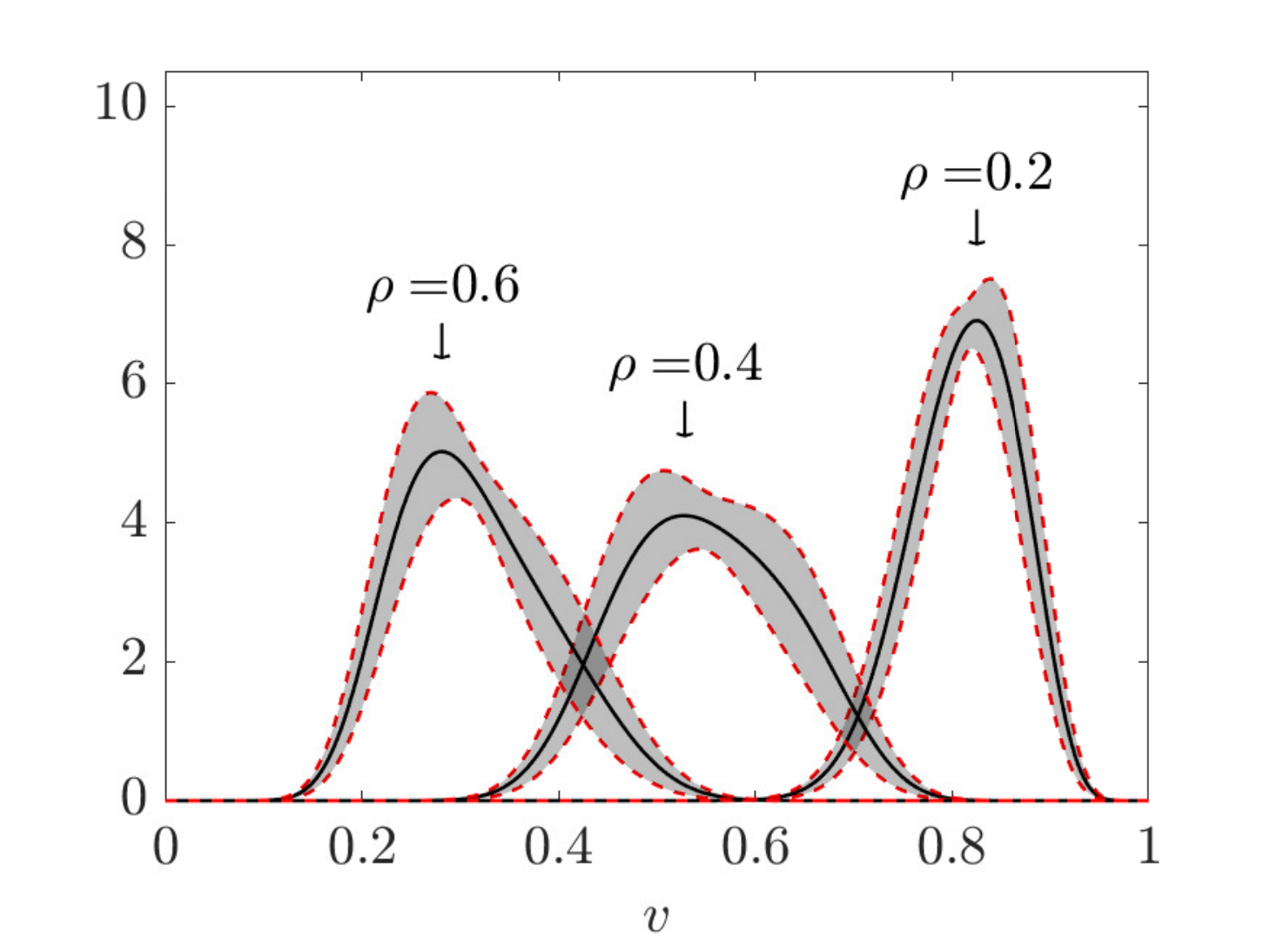}}
\subfigure[$z\sim\cU({[1,\,3]})$, $p^\ast=10$]{\includegraphics[scale=0.4]{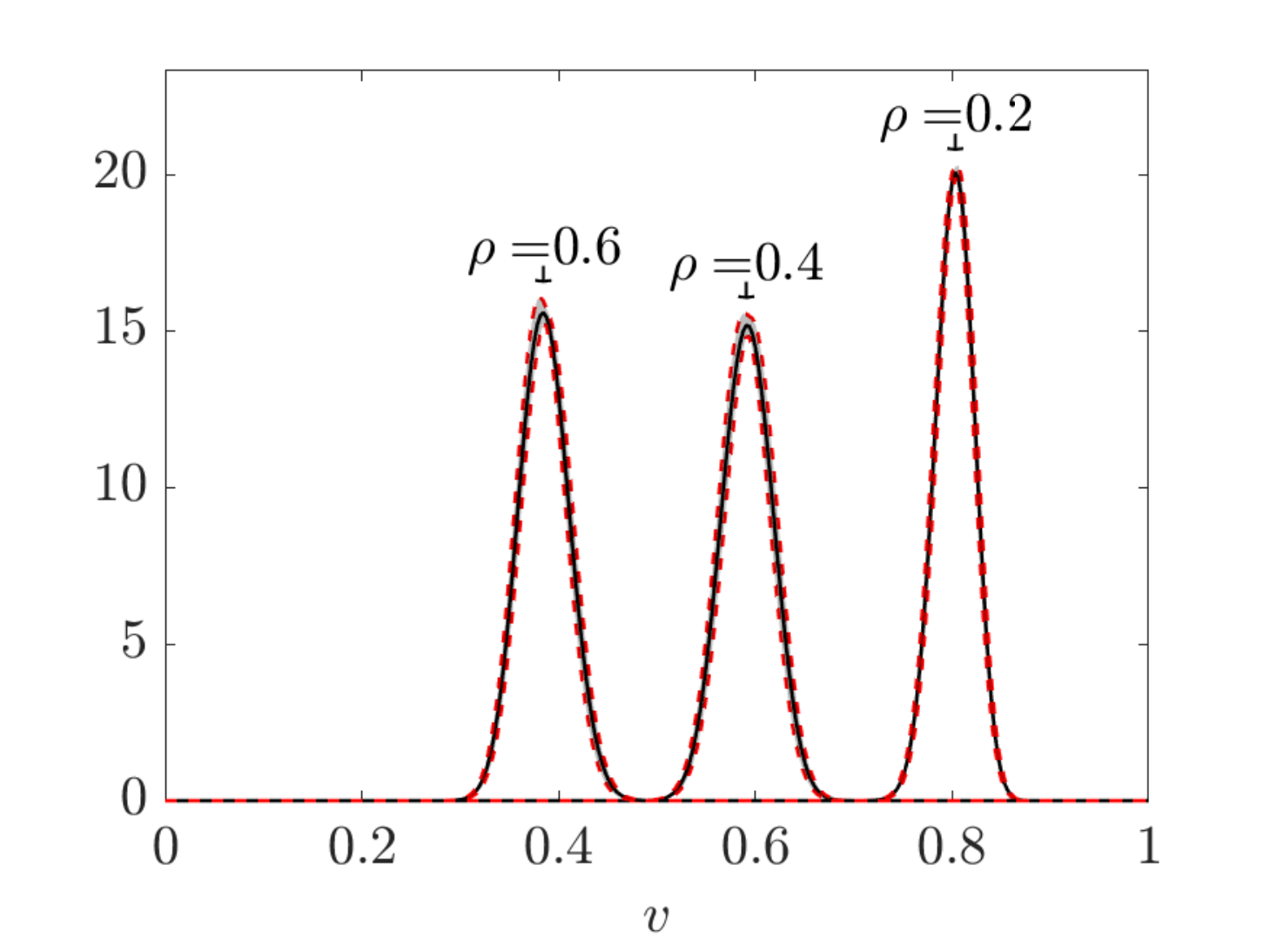}}
\caption{Representation of $\bar{f}_\infty(v)$ (solid lines) and of its $z$-standard deviation $\pm\sqrt{\Var_z(f_\infty(v;\,z))}$ (dashed lines) for different distributions of the uncertain parameter $z$, different effective penetration rates $p^\ast$ of the ADAS technology and various traffic densities. In (a)-(b), $z$ is such that $z-1$ has a binomial distribution. In (c)-(d), $z$ is uniformly distributed in $[1,\,3]$. We considered the optimal speed $v_d(\rho)=1-\rho$ and $\lambda=5\cdot 10^{-2}$ in~\eqref{eq:finf.1}.}
\label{fig:dis_pstar}
\end{figure}

Similarly to Section~\ref{sect:FP}, we may obtain more details on the large time trend emerging from the controlled interactions~\eqref{eq:binary-u.1} by studying the quasi-invariant limit of the full Boltzmann-type equation~\eqref{eq:Boltz.Theta}. We recall that, in this case, this amounts to considering the quasi-invariant scaling~\eqref{eq:scaling}-\eqref{eq:scaling.nu}. Proceeding like in Section~\ref{sect:FP}, we determine the  following Fokker-Planck equation with non-constant coefficients:
\begin{equation}
	\partial_tf=\frac{\lambda}{2}\partial_v^2\left(D^2(v)f\right)-\partial_v\Bigl[\Bigl(P(1+(1-P)V(t;\,z))+p^\ast v_d-(1+p^\ast)v\Bigr)f\Bigr],
	\label{eq:FP.1}
\end{equation}
where $V$ is the solution to~\eqref{eq:V.1}, whence, at the steady state,
$$ \frac{\lambda}{2}\partial_v\left(D^2(v)f_\infty\right)-(1+p^\ast)(V_\infty-v)f_\infty=0 $$
with $V_\infty$ given by~\eqref{eq:Vinf.1}. The unique distribution with unitary mass solving this equation for $D$ like in~\eqref{eq:D} is
\begin{equation}
	f_\infty(v;\,z)=\frac{v^{\frac{2(1+p^\ast)V_\infty(\rho;\,z)}{\lambda}-1}(1-v)^{\frac{2(1+p^\ast)(1-V_\infty(\rho;\,z))}{\lambda}-1}}
		{\operatorname{Beta}\!\left(\frac{2(1+p^\ast)V_\infty(\rho;\,z)}{\lambda},\,\frac{2(1+p^\ast)(1-V_\infty(\rho;\,z))}{\lambda}\right)},
	\label{eq:finf.1}
\end{equation}
namely again a beta probability density function parametrised by $z$ and incorporating the action of the microscopic control through the parameter $p^\ast$.

In Figure~\ref{fig:dis_pstar}, we show the mean equilibrium distribution $\bar{f}_\infty$, together with the uncertainty bands produced by its standard deviation $\pm\sqrt{\Var_z(f_\infty(v;\,z))}$, for a discrete (binomial) and a continuous (uniform) probability distribution of $z$ and increasing values of the effective penetration rate $p^\ast$. The curves have been obtained numerically, like in the previous cases, by means of suitable quadrature formulas in $z$, starting from the analytical expression~\eqref{eq:finf.1}. A comparison of Figure~\ref{fig:dis_pstar}(a)-(b) with Figure~\ref{fig:ginf.binom} and of Figure~\ref{fig:dis_pstar}(c)-(d) with Figure~\ref{fig:ginf.unif-gamma}(a) clearly shows that, by increasing $p^\ast$, the mean value of the expected equilibrium distribution $\bar{f}_\infty$ tends to the desired speed $v_d(\rho)$, cf.~\eqref{eq:dist.Vinf-v0}, and, moreover, that the uncertainty affecting $\bar{f}_\infty$ is globally dampened with respect to the uncontrolled case.

\subsection{Uncertainty control on average}
\label{sect:average}
The instantaneous optimal control~\eqref{eq:ustar.1} has a theoretical interest, as confirmed by Theorem~\ref{theo:I}, but it may not be straightforwardly implementable in practice. Indeed, it is an \textit{uncertain} control, because it depends on the specific value of the uncertain parameter $z$. In this section, we focus on the construction of a \textit{deterministic} control, i.e. one independent of $z$, and we prove that it actually performs analogously to the uncertain one as far as the damping of the $z$-uncertainty is concerned.

To this purpose, we consider the following new functional to be minimised:
$$ J_2(v',\,u):=\E_z(J_1(v',\,u;\,z))=\frac{1}{2}\E_z\ave{{\left(v_d(\rho)-v'\right)}^2+\nu u^2} $$
and we form the corresponding Lagrangian:
$$ \cL_2(v',\,u,\,\mu)=J_2(v',\,u)+\mu\E_z\ave{v'-v-\gamma\left(I(v,\,v_\ast;\,z)+\Theta u\right)-D(v)\eta}. $$
The main difference with respect to the case considered in Section~\ref{sect:pointwise} is that here we perform an optimisation \textit{on average}, rather than pointwise, in $z$. Consequently, we also require that the constraint given by the interaction dynamics~\eqref{eq:binary-u.gen} be satisfied on average, as it is clear from the expression of $\cL_2$ above.

The optimality conditions take now the form:
\begin{equation*}
	\begin{cases}
		\partial_{v'}\cL_2=\mu-\E_z\ave{v_d(\rho)-v'}=0 \\
		\partial_u\cL_2=\nu u-\gamma\Theta\mu=0 \\
		\partial_\mu\cL_2=\E_z\ave{v'-v-\gamma\left(I(v,\,v_\ast;\,z)+\Theta u\right)-D(v)\eta}=0,
	\end{cases}
\end{equation*}
whence the instantaneous optimal control turns out to be
\begin{equation}
	u^\ast(v,\,v_\ast)=\frac{\gamma\Theta}{\nu+\gamma^2\Theta^2}\Bigl[v_d(\rho)-v-\gamma\E_z(I(v,\,v_\ast;\,z))\Bigr].
	\label{eq:ustar.2}
\end{equation}
We observe that this control coincides with the $z$-average of the uncertain control~\eqref{eq:ustar.1}. Moreover, also~\eqref{eq:ustar.2} is a feedback control.

Plugging into~\eqref{eq:binary-u.gen}, we obtain the interaction rules
\begin{align}
	\begin{aligned}[c]
		v' &= v+\gamma\left(I(v,\,v_\ast;\,z)-\frac{\gamma^2\Theta^2}{\nu+\gamma^2\Theta^2}\E_z(I(v,\,v_\ast;\,z))\right)
			+\frac{\gamma^2\Theta^2}{\nu+\gamma^2\Theta^2}\left(v_d(\rho)-v\right)+D(v)\eta \\
		v_\ast' &= v_\ast.
	\end{aligned}
	\label{eq:binary-u.2}
\end{align}

In this case, it is not as straightforward as before to individuate sufficient conditions on $\eta$ and $D(v)$, like~\eqref{eq:eta.D} or~\eqref{eq:eta.D-contr.1}, which guarantee the physical admissibility of these interactions. The reason is that, when attempting to ensure $v'\in [0,\,1]$ by bounding the random fluctuation and the diffusion coefficient, it is not easy to compare effectively $I(v,\,v_\ast;\,z)$ and $\E_z(I(v,\,v_\ast;\,z))$ for a possibly generic probability distribution of $z$, not even if we refer to the specific interaction function~\eqref{eq:I}. Consequently, we have to admit that, in general, some interactions~\eqref{eq:binary-u.2} may not be physically admissible (i.e., they may produce either $v'<0$ or $v'>1$) and need therefore to be discarded from the statistical description of the system.

\subsubsection{Boltzmann-type description with cut-off}
\label{sect:Boltzmann.2}
In order to discard the possibly unphysical microscopic interactions~\eqref{eq:binary-u.2}, we consider the following Boltzmann-type equation \textit{with cut-off}:
\begin{align}
	\begin{aligned}[b]
		\frac{d}{dt}\int_0^1\varphi(v)f(t,\,v;\,z)\,dv &= \frac{1}{2\tau}\E_\Theta\left[\int_0^1\int_0^1
			\ave*{\chi(0\leq v'\leq 1)\bigl(\varphi(v')-\varphi(v)\bigr)}\right. \\
		&\phantom{=} \left.\phantom{\int_0^1}\times f(t,\,v;\,z)f(t,\,v_\ast;\,z)\,dv\,dv_\ast\right],
	\end{aligned}
	\label{eq:Boltz.B}
\end{align}
which differs from~\eqref{eq:Boltz.Theta} because of the non-constant \textit{interaction kernel}
$$ \chi(0\leq v'\leq 1)=
	\begin{cases}
		1 & \text{if } v'\in [0,\,1] \\
		0 & \text{otherwise}.
	\end{cases} $$
This makes the study of~\eqref{eq:Boltz.B} more challenging than in the previous cases, because it introduces additional non-linearities. For instance, the trend of the mean speed $V=V(t;\,z)$ is given, for $\varphi(v)=v$, by the equation
$$ \frac{dV}{dt}=\frac{1}{2\tau}\E_\Theta\left[\int_0^1\int_0^1\ave*{\chi(0\leq v'\leq 1)(v'-v)}f(t,\,v;\,z)f(t,\,v_\ast;\,z)\,dv\,dv_\ast\right], $$
whose right-hand side can no longer be easily written in terms of $V$ itself, like in~\eqref{eq:V} or~\eqref{eq:V.1}. To tackle this case, we may fruitfully work in the quasi-invariant regime, taking inspiration from the technique presented in~\cite{cordier2005JSP}. In particular, we will show in the next Section~\ref{sect:FP.non-maxw} that the Fokker-Planck equation originating from~\eqref{eq:Boltz.B} in such a regime is still~\eqref{eq:FP.1}, which implies in particular that the properties of the equilibria remain unchanged with respect to the case of the pointwise uncertain control discussed in Section~\ref{sect:pointwise}.

\subsubsection{Fokker-Planck asymptotics}
\label{sect:FP.non-maxw}
Similarly to~\eqref{eq:Boltz.eps}, under the quasi-invariant scaling~\eqref{eq:scaling}-\eqref{eq:scaling.nu} equation~\eqref{eq:Boltz.B} may be rewritten as
\begin{align*}
	\frac{d}{dt}\int_0^1\varphi(v)f_\epsilon(t,\,v;\,z)\,dv &= \frac{1}{\epsilon}\E_\Theta\left[\int_0^1\int_0^1
		\ave*{\chi(0\leq v'\leq 1)\bigl(\varphi(v')-\varphi(v)\bigr)}\right. \\
	&\phantom{=} \left.\phantom{\int_0^1}\times f_\epsilon(t,\,v;\,z)f_\epsilon(t,\,v_\ast;\,z)\,dv\,dv_\ast\right] \\
	&= \frac{1}{\epsilon}\E_\Theta\left[\int_0^1\int_0^1\ave{\varphi(v')-\varphi(v)}f_\epsilon(t,\,v;\,z)f_\epsilon(t,\,v_\ast;\,z)\,dv\,dv_\ast\right] \\
	&\phantom{=} -\frac{1}{\epsilon}\E_\Theta\left[\int_0^1\int_0^1\ave{(1-\chi(0\leq v'\leq 1))(\varphi(v')-\varphi(v))}\right. \\
	&\phantom{=} \left.\phantom{\int_0^1}\times f_\epsilon(t,\,v;\,z)f_\epsilon(t,\,v_\ast;\,z)\,dv\,dv_\ast\right] \\
	&=: A_\epsilon(f_\epsilon,\,f_\epsilon)[\varphi](t;\,z)+R_\epsilon(f_\epsilon,\,f_\epsilon)[\varphi](t;\,z).
\end{align*}
We make the same assumptions as in Section~\ref{sect:FP} concerning the existence and uniqueness of the solution to this equation for every $\epsilon>0$ and the convergence, up to subsequences, of $f_\epsilon(\cdot,\,\cdot;\,z)$ to some $f(\cdot,\,\cdot;\,z)\in C(\R_+;\,L^1(0,\,1))$ as $\epsilon\to 0^+$. Moreover, from now on we take $\varphi\in C^\infty_c(0,\,1)$ and, inspired by~\eqref{eq:eta.D} and~\eqref{eq:eta.D-contr.1}, we assume $cD(v)\leq\min\{v,\,1-v\}$ for some $c>0$ along with $D(v)>0$ for all $v\in (0,\,1)$.

In the quasi-invariant limit, the term $A_\epsilon(f_\epsilon,\,f_\epsilon)[\varphi]$ produces the right-hand side, in weak form, of the Fokker-Planck equation~\eqref{eq:FP.1}. Indeed, the additional term with $\E_z(I(v,\,v_\ast;\,z))$ contained in the interaction rules~\eqref{eq:binary-u.2}, which is absent in~\eqref{eq:binary-u.1}, can be easily shown to be of order $\epsilon$, hence infinitesimal in the limit.

Therefore, the goal is to show that the remainder $R_\epsilon(f_\epsilon,\,f_\epsilon)[\varphi]$ vanishes for $\epsilon\to 0^+$. To this purposes, we observe preliminarily that
\begin{align}
	\begin{aligned}[b]
		\abs{R_\epsilon(f_\epsilon,\,f_\epsilon)[\varphi](t;\,z)} &\leq \frac{1}{\epsilon}\E_\Theta\left[\int_0^1\int_0^1\ave{(1-\chi(0\leq v'\leq 1))\abs{\varphi(v')-\varphi(v)}}\right. \\
		&\phantom{=} \left.\phantom{\int_0^1}\times f_\epsilon(t,\,v;\,z)f_\epsilon(t,\,v_\ast;\,z)\,dv\,dv_\ast\right]
	\end{aligned}
	\label{eq:R}
\end{align}
and that, in view of the scaling~\eqref{eq:scaling} of $\sigma^2$, we may represent $\eta=\sqrt{\lambda\epsilon}Y$, where $Y$ is a random variable with $\ave{Y}=0$ and $\ave{Y^2}=1$. We assume moreover that $\eta$, hence also $Y$, has finite third order moment, i.e. $\ave{\abs{Y}^3}<+\infty$. Then, upon scaling the other parameters of the interactions~\eqref{eq:binary-u.2} according to~\eqref{eq:scaling}-\eqref{eq:scaling.nu} and expanding $\varphi(v')$ in Taylor's series about $v$, we estimate\footnote{Here and henceforth we use the notation $a\lesssim b$ to mean that there exists a constant $C>0$, independent of $\epsilon$ and whose specific value is unimportant, such that $a\leq Cb$.}
\begin{equation}
	\abs{\varphi(v')-\varphi(v)}\lesssim\sum_{i=0}^{3}\epsilon^{i/2}c_i(v,\,v')\abs{Y}^i,
	\label{eq:phi.Taylor}
\end{equation}
where $\overline{D}:=\sup_{v\in [0,\,1]}D(v)$ and moreover
\begin{align*}
	c_0(v,\,v') &:= \alpha\abs{\varphi'(v)}+\frac{\alpha^2}{2}\abs{\varphi''(v)}+\frac{\alpha^3}{6}\abs{\varphi'''(\bar{v})} \\
	c_1(v,\,v') &:= \abs{\varphi'(v)}+\alpha\abs{\varphi''(v)}+\frac{\alpha^2}{2}\abs{\varphi'''(\bar{v})} \\
	c_2(v,\,v') &:= \frac{1}{2}\abs{\varphi''(v)}+\frac{\alpha}{2}\abs{\varphi'''(\bar{v})} \\
	c_3(v,\,v') &:= \frac{1}{6}\abs{\varphi'''(\bar{v})}
\end{align*}
with
$$ \bar{v}:=\vartheta v+(1-\vartheta)v'\ \text{for some } \vartheta\in [0,\,1], \qquad \alpha:=\epsilon\left(1+\frac{1+\epsilon}{\kappa}\right). $$
Plugging~\eqref{eq:phi.Taylor} into~\eqref{eq:R} and considering that $1-\chi(0\leq v'\leq 1)\leq 1$ and that, for $\epsilon$ sufficiently small, it results $\alpha\lesssim\epsilon$, we compute:
\begin{align}
	\begin{aligned}[b]
		\abs{R_\epsilon(f_\epsilon,\,f_\epsilon)[\varphi](t;\,z)} &\lesssim \int_0^1\ave{1-\chi(0\leq v'\leq 1)}\abs{\varphi'(v)}f_\epsilon(t,\,v;\,z)\,dv \\
		&\phantom{\lesssim} +\frac{1}{\sqrt{\epsilon}}\int_0^1\ave{(1-\chi(0\leq v'\leq 1))\abs{Y}}\abs{\varphi'(v)}f_\epsilon(t,\,v;\,z)\,dv \\
		&\phantom{\lesssim} +\int_0^1\ave{(1-\chi(0\leq v'\leq 1))Y^2}\abs{\varphi''(v)}f_\epsilon(t,\,v;\,z)\,dv \\
		&\phantom{\lesssim} \left(\norm{\varphi''}\ave{\abs{Y}}+\norm{\varphi'''}\ave{\abs{Y}^3}\right)\sqrt{\epsilon}
			+\left(\norm{\varphi''}+\norm{\varphi'''}\right)\epsilon \\
		&\phantom{\lesssim} +\norm{\varphi'''}\ave{\abs{Y}}\epsilon^{\frac{3}{2}}+\norm{\varphi'''}\epsilon^2
	\end{aligned}
	\label{eq:R.first_est}
\end{align}
The last six terms on the right-hand side are bounded and tend to zero as $\epsilon\to 0^+$. Conversely, the first three terms need to be estimated more carefully.

From the interactions~\eqref{eq:binary-u.2} scaled according to~\eqref{eq:scaling}-\eqref{eq:scaling.nu}, we see that a sufficient condition for $v'\geq 0$ is
$$ Y\geq \frac{1}{\sqrt{\lambda\epsilon}}\left[\epsilon\left(1+\frac{1}{\kappa}\right)-1\right]\frac{v}{D(v)}+\frac{2\epsilon\sqrt{\epsilon}}{\sqrt{\lambda}\kappa D(v)}. $$
Let us take $\epsilon<\frac{\kappa}{\kappa+1}$, so that $\epsilon\left(1+\frac{1}{\kappa}\right)-1<0$. Then, invoking the assumptions~\eqref{eq:eta.D},~\eqref{eq:eta.D-contr.1} on $D$, a further sufficient condition for $v'\geq 0$ is
\begin{equation}
	Y\geq \frac{c}{\sqrt{\lambda\epsilon}}\left[\epsilon\left(1+\frac{1}{\kappa}\right)-1\right]+\frac{2\epsilon\sqrt{\epsilon}}{\sqrt{\lambda}\kappa D(v)}.
	\label{eq:Y.below}
\end{equation}
If we restrict $v$ to any compact subset $U$ of $(0,\,1)$ then there exists a constant $\underline{D}>0$ such that $D(v)\geq\underline{D}$ for all $v\in U$. Consequently, the right-hand side of~\eqref{eq:Y.below} is in $U$ a continuous function of $\epsilon$, which tends to $-\infty$ when $\epsilon\to 0^+$. Hence, it is negative for $\epsilon$ small enough, say $\epsilon<\epsilon_0$. On the whole, we may define
$$ b_\epsilon(v):=\frac{c}{\sqrt{\lambda\epsilon}}\left[1-\epsilon\left(1+\frac{1}{\kappa}\right)\right]-\frac{2\epsilon\sqrt{\epsilon}}{\sqrt{\lambda}\kappa D(v)}>0,
	\qquad (v\in U,\,\epsilon<\epsilon_0) $$
and state that, for $v\in U$, a sufficient condition for $v'\geq 0$ is $Y\geq -b_\epsilon(v)$ provided $\epsilon<\min\{\frac{\kappa}{\kappa+1},\,\epsilon_0\}$. Arguing analogously, we deduce that, for $v\in U$, a sufficient condition for $v'\leq 1$ is $Y\leq b_\epsilon(v)$ with the same restriction on $\epsilon$ as above, so that finally
$$ \abs{Y}\leq b_\epsilon(v) \quad \Rightarrow \quad 0\leq v'\leq 1 \qquad \left(v\in U,\,\epsilon<\min\left\{\frac{\kappa}{\kappa+1},\,\epsilon_0\right\}\right). $$
It follows
$$ 1-\chi(0\leq v'\leq 1)\leq 1-\chi(\abs{Y}\leq b_\epsilon(v))=\chi(\abs{Y}>b_\epsilon(v)), $$
whence, from~\eqref{eq:R.first_est} with $U:=\supp{\varphi'}\cup\supp{\varphi''}\subset\subset (0,\,1)$,
\begin{align*}
	\abs{R_\epsilon(f_\epsilon,\,f_\epsilon)[\varphi](t;\,z)} &\lesssim \int_U\ave{\chi(\abs{Y}>b_\epsilon(v))}\abs{\varphi'(v)}f_\epsilon(t,\,v;\,z)\,dv \\
	&\phantom{\lesssim} +\frac{1}{\sqrt{\epsilon}}\int_U\ave{\chi(\abs{Y}>b_\epsilon(v))\abs{Y}}\abs{\varphi'(v)}f_\epsilon(t,\,v;\,z)\,dv \\
	&\phantom{\lesssim} +\int_U\ave{\chi(\abs{Y}>b_\epsilon(v))Y^2}\abs{\varphi''(v)}f_\epsilon(t,\,v;\,z)\,dv
		+O(\sqrt{\epsilon}).
\end{align*}

Let now $p,\,q\in [1,\,+\infty]$ be such that $\frac{1}{p}+\frac{1}{q}=1$. H\"{o}lder's and Chebyshev's inequalities imply
$$ \ave{\chi(\abs{Y}>b_\epsilon(v))\abs{Y}^k}\leq\ave{\abs{Y}^{kp}}^{\frac{1}{p}}\P(\abs{Y}>b_\epsilon(v))^{\frac{1}{q}}
	\leq\frac{\ave{\abs{Y}^{kp}}^{\frac{1}{p}}}{b_\epsilon(v)^{\frac{2}{q}}}
		\lesssim\ave{\abs{Y}^{kp}}^{\frac{1}{p}}\epsilon^{\frac{1}{q}}, $$
where the last one holds for $\epsilon<\min\{\frac{\kappa}{\kappa+1},\,\epsilon_0\}$. Therefore, we may deduce:
\begin{align*}
	& \ave{\chi(\abs{Y}>b_\epsilon(v))}\lesssim\epsilon & (k=0,\,p=+\infty,\,q=1) \\
	& \ave{\chi(\abs{Y}>b_\epsilon(v))\abs{Y}}\lesssim\ave{\abs{Y}^{\frac{5}{2}}}^{\frac{2}{5}}\epsilon^{\frac{3}{5}} & \left(k=1,\,p=\frac{5}{2},\,q=\frac{5}{3}\right) \\
	& \ave{\chi(\abs{Y}>b_\epsilon(v))Y^2}\lesssim\ave{\abs{Y}^3}^{\frac{2}{3}}\epsilon^{\frac{1}{3}} & \left(k=2,\,p=\frac{3}{2},\,q=3\right),
\end{align*}
and finally
$$ \abs{R_\epsilon(f_\epsilon,\,f_\epsilon)[\varphi](t;\,z)}\lesssim
	\norm{\varphi'}\epsilon+\norm{\varphi'}\ave{\abs{Y}^{\frac{5}{2}}}^{\frac{2}{5}}\epsilon^{\frac{1}{10}}
		+\norm{\varphi''}\ave{\abs{Y}^3}^{\frac{2}{3}}\epsilon^{\frac{1}{3}}+O(\sqrt{\epsilon}), $$
which implies $R_\epsilon(f_\epsilon,\,f_\epsilon)[\varphi]\to 0$ for $\epsilon\to 0^+$. Notice, in particular, that $\ave{\abs{Y}^{\frac{5}{2}}}^{\frac{2}{5}}\leq\ave{\abs{Y}^3}^{\frac{1}{3}}<+\infty$ by H\"{o}lder's inequality again.

Consequently, as already anticipated, the large time aggregate effect of the deterministic control~\eqref{eq:ustar.2} is actually the same as that of the uncertain control~\eqref{eq:ustar.1}. Indeed, thanks to the result just obtained, the quasi-invariant limit for $\varphi(v)=v$ yields equation~\eqref{eq:V.1}, whence we deduce that the asymptotic mean speed is again~\eqref{eq:Vinf.1} and that all the conclusions drawn in Section~\ref{sect:pointwise}, in particular~\eqref{eq:dist.Vinf-v0},~\eqref{eq:varsigminf.est},~\eqref{eq:finf.1} and Theorem~\ref{theo:I}, hold true also in this case.

\begin{figure}[!t]
\centering
\subfigure[$\rho=0.2$]{\includegraphics[scale=0.4]{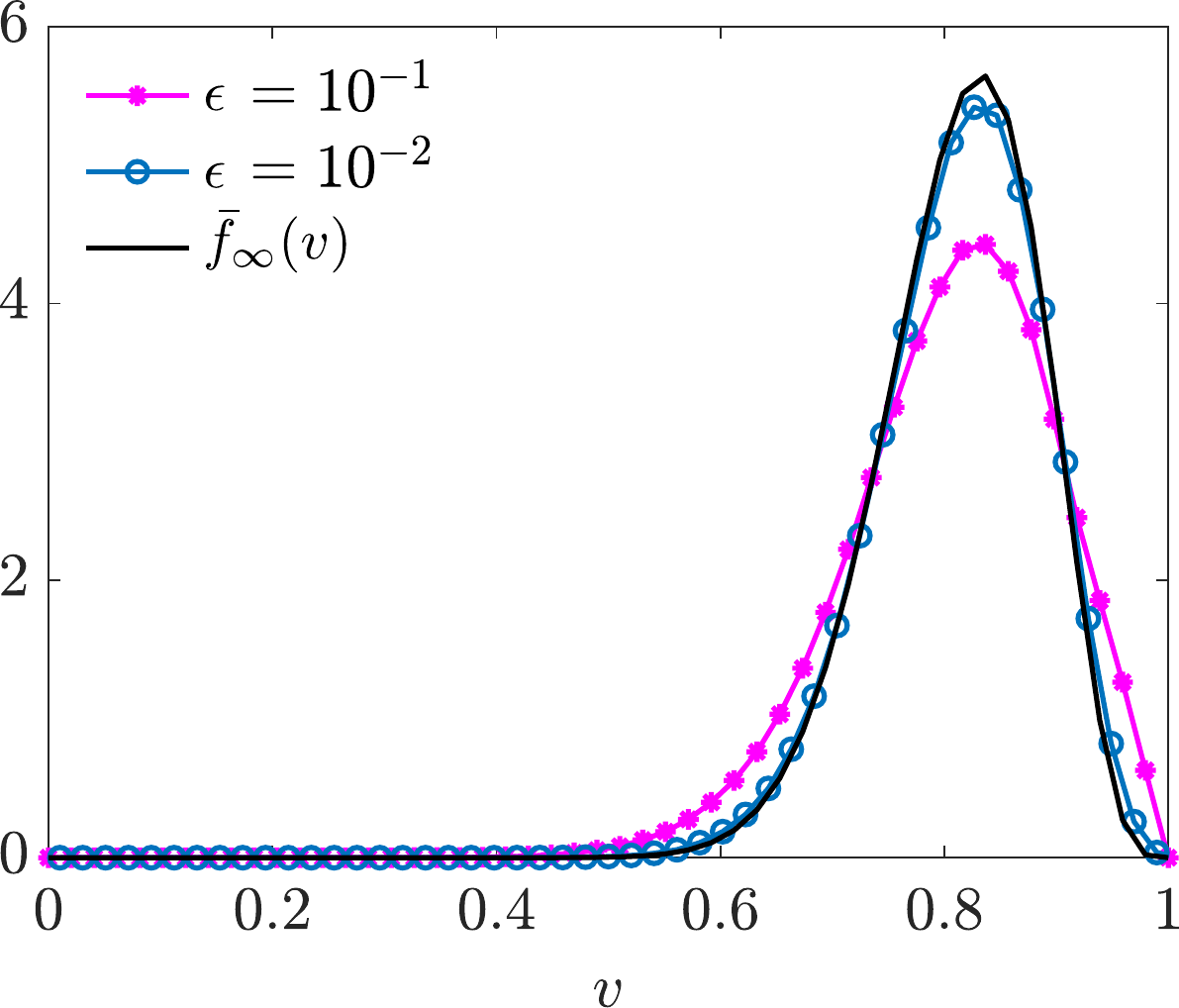}}
\subfigure[$\rho=0.4$]{\includegraphics[scale=0.4]{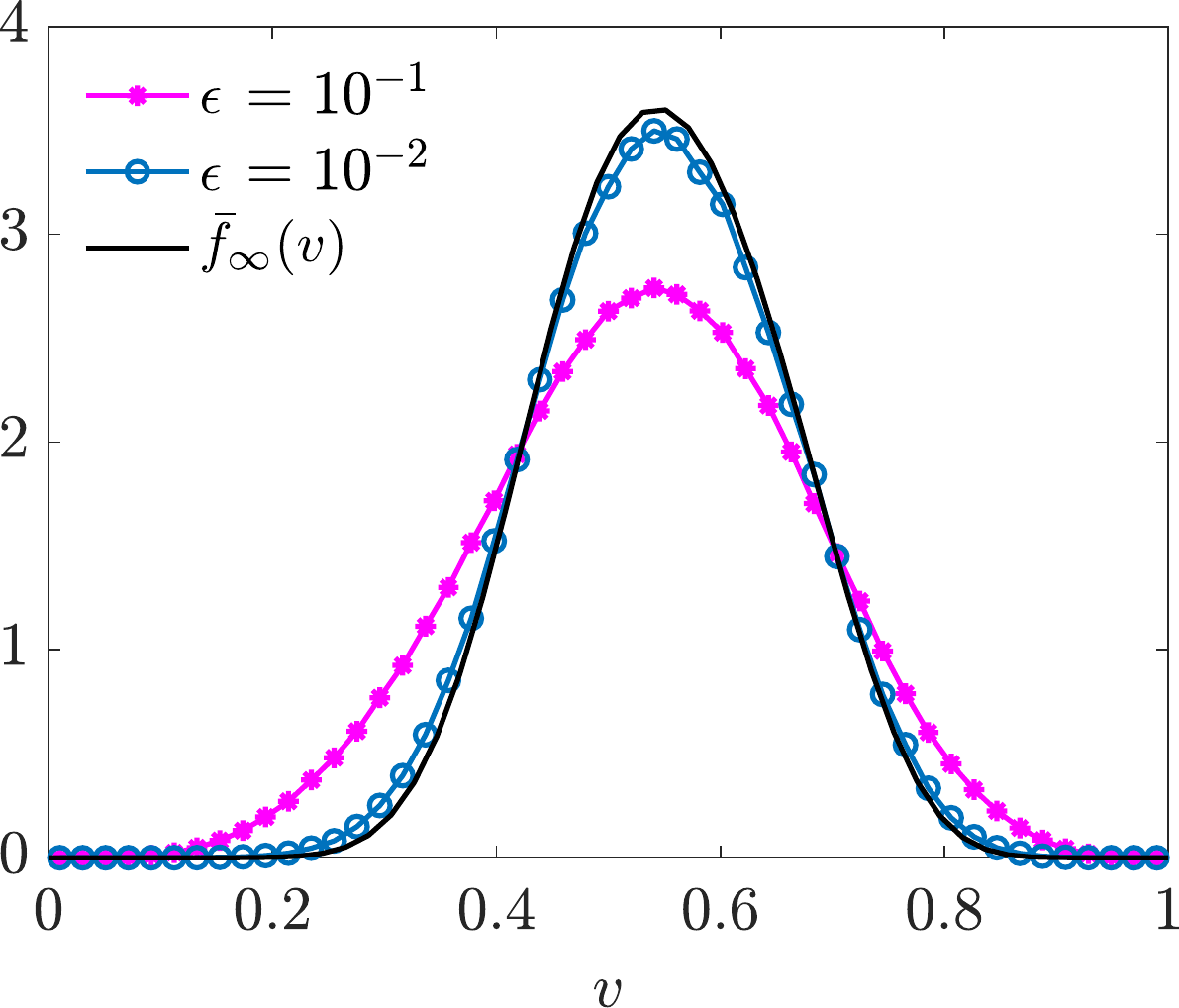}}
\subfigure[$\rho=0.6$]{\includegraphics[scale=0.4]{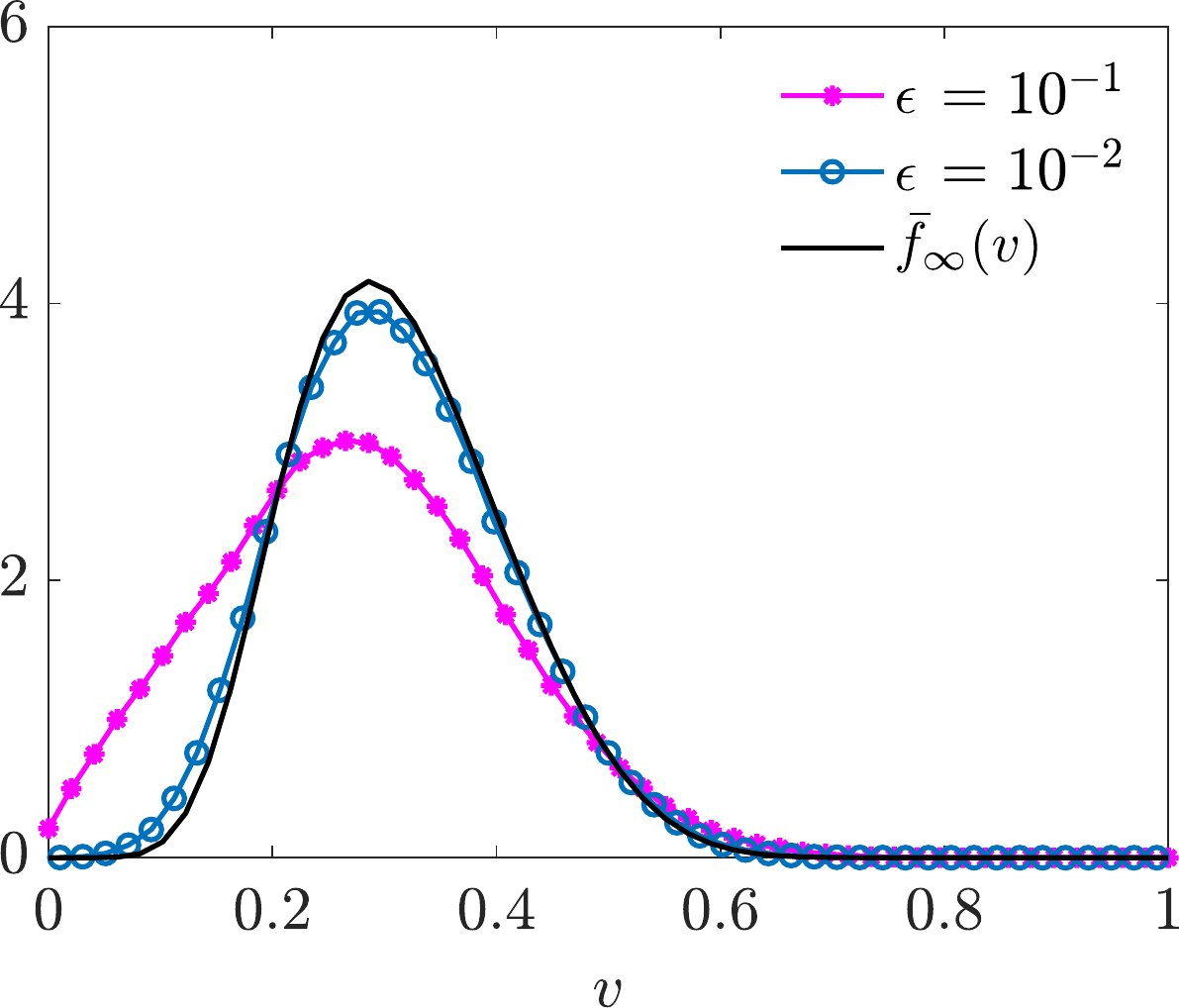}}
\caption{Comparison between the large time $z$-averaged solution to the Boltzmann-type equation~\eqref{eq:Boltz.B} with non-constant collision kernel (bulleted lines) and the average equilibrium solution to the Fokker-Planck equation~\eqref{eq:FP.1} (solid line) for decreasing $\epsilon$, mimicking the quasi-invariant limit $\epsilon\to 0^+$. The relevant parameters are $\rho=0.2,\,0.4,\,0.6$, $p^\ast=1$, $z\sim\cU([1,\,3])$.}
\label{fig:boltzmann_average}
\end{figure}

In Figure~\ref{fig:boltzmann_average}, we exemplify the validity of the theoretical result just obtained. We show the numerical approximation of the large time average solution $\bar{f}_\infty(v)$ to the Boltzmann-type equation~\eqref{eq:Boltz.B}, computed by means of a standard Monte Carlo scheme with rejection in the variable $v$ and a collocation method for uncertainty quantification in the variable $z$ in the case $z\sim\cU([1,\,3])$, cf.~\cite{pareschi2001ESAIMP,pareschi2013BOOK} and~\cite{Dimarco2017,tosin2018CMS}. By comparing, for decreasing $\epsilon$, such a solution with the $\bar{f}_\infty(v)$ obtained from~\eqref{eq:finf.1} by means of the Gauss-Legendre quadrature formula in $z$, we observe the expected consistency of the Boltzmann-type model~\eqref{eq:Boltz.B} with the Fokker-Planck model~\eqref{eq:FP.1} in the quasi-invariant limit.

\section{Numerical tests}
\label{sect:numerics}
In this section, we present several numerical tests, which confirm the theoretical findings of the previous sections. First, we focus on the Fokker-Planck-type modelling derived in Section~\ref{sect:FP} in the case of uncontrolled microscopic interactions. Next, we show the effectiveness of the control approaches proposed in Sections~\ref{sect:pointwise},~\ref{sect:average}.

\subsection{Generalised polynomial chaos for uncertainty quantification}
Numerical methods for Uncertainty Quantification (UQ) gained a lot of momentum in recent times, thanks to their spectral convergence on the random field obtained under suitable regularity assumptions~\cite{Xiu2010,Xiu2002}. The development of UQ methods for kinetic equations encoding uncertain quantities presents several difficulties, which are essentially due to the increased dimensionality and the intrinsic structural properties of the solution. Recently, a lot of efforts have been devoted to the study of kinetic equations with uncertainties, see e.g.~\cite{Dimarco2017,Hu2017,Zhu2017}. Despite their spectral accuracy, Stochastic Galerking-generalised Polynomial Chaos (SG-gPC) methods typically lead to the loss of physical properties of the solution, such as the non-negativity of the distribution function, the entropy dissipation and the conservation of the steady states. We refer the interested reader to~\cite{Carrillo2019,carrillo2019VJM,zanella2020MCS} for further insights into the preservation of the structural properties of the solutions to kinetic equations for the collective behaviour of multi-agent systems.

For completeness, in the following we report some fundamental notions on SG-gPC methods, in order to introduce the numerical scheme which we use for our numerical tests.

Let $(\Omega,\,\cF,\,\P)$ be a probability space, where $\Omega$ is an abstract sample space, $\cF$ a $\sigma$-algebra of subsets of $\Omega$ and $\P$ a probability measure on $\cF$. Let us introduce, furthermore, a random variable 
$$ z:(\Omega,\,\cF)\mapsto (I_Z,\,\cB(\R)), $$
where $I_Z\subseteq\R$ is the range of $z$ and $\cB(\R)$ is the Borel $\sigma$-algebra of subsets of $\R$. We assume that the distribution of $z$ is expressed by a known probability density function $\Psi:I_Z\to\R_+$.

\begin{table}[!t]
\caption{Choices of the $\Phi_k$'s depending on the distribution $\Psi$ of the uncertain parameter}
\begin{center}
\begin{tabular}{|c|c|c|}
\hline
Distribution $\Psi$ & Polynomials $\Phi_k$ & Support of the $\Phi_k$'s \\
\hline
\hline
Uniform & Legendre & Compact interval \\
Beta & Jacobi & Compact interval \\
Gamma & Laguerre & $\R_+$ \\
Binomial & Krawtchouk & $\N$ \\
\hline
\end{tabular}
\end{center}
\label{tab:pol}
\end{table}

Given $M\in\N$, the SG-gPC method is based on introducing the linear space of polynomials of degree at most $M$, generated by an orthonormal set $\{\Phi_k\}_{k=0}^{M}$ satisfying
$$ \E\bigl(\Phi_h(z)\Phi_k(z)\bigr)=\int_{I_Z}\Phi_h(z)\Phi_k(z)\Psi(z)\,dz=\delta_{hk}, $$
where
$$ \delta_{hk}=
	\begin{cases}
		1 & \text{if } h=k \\
		0 & \text{otherwise}
	\end{cases} $$
is the Kronecker delta. The choice of the $\Phi_k$'s depends clearly on the density function $\Psi$. This aspect has been extensively investigated in the UQ literature in relation with the so-called Wiener-Askey scheme for orthogonal polynomials, see~\cite{Xiu2010,Xiu2002}. In the following, we present some examples where the law of $z$ and the related set of orthonormal polynomials are among those reported in Table~\ref{tab:pol}.

Assume that the kinetic distribution function $f$ solving either Fokker-Planck equation~\eqref{eq:FP},~\eqref{eq:FP.1} is sufficiently smooth, in particular $f(t,\,v;\,\cdot)\in L^2(I_Z)$. Then, we can introduce the expansion
\begin{equation}
	f(t,\,v;\,z)\approx f^M(t,\,v;\,z):=\sum_{k=0}^M\hat{f}_k(t,\,v)\Phi_k(z),
	\label{eq:gPC_expansion}
\end{equation}
where $\hat{f}_k(t,\,v):=\E\bigl(f(t,\,v;\,z)\Phi_k(z)\bigr)$ for every $k=0,\,\dots,\,M$. Plugging the approximation $f^M$ of $f$ into the Fokker-Planck equation~\eqref{eq:FP}, we obtain
\begin{multline*}
	\partial_tf^M= \\
	\frac{\lambda}{2}\partial_v^2\left(D^2(v)f^M\right)-\partial_v\left\{\left[P(\rho;\,z)+\left(1+(1-P(\rho;\,z))\int_0^1 vf^M(t,\,v;\,z)\,dv\right)-v\right]f^M\right\}.
\end{multline*}
Representing $f^M$ by means of the expansion above, multiplying the equation by $\Phi_h$ and using the orthonormality of the set $\{\Phi_k\}_{k=0}^{M}$ yields the following coupled system of $M+1$ deterministic partial differential equations ruling the evolution of the projections $\{\hat{f}_h\}_{h=0}^{M}$ of $f$ onto the polynomial space:
\begin{equation}
	\partial_t\hat{f}_h=\frac{\lambda}{2}\partial_v^2(D^2(v)\hat{f}_h)
		-\partial_v\sum_{k=0}^{M}E_{hk}[f^M](t,\,v)\hat{f}_k, \qquad h=0,\,\dots,\,M,
	\label{eq:FP_h.uncontrolled}
\end{equation}
where
\begin{multline}
	E_{hk}[f^M](t,\,v):= \\
	\int_{I_Z}\left[P(\rho;\,z)+\left(1+(1-P(\rho;\,z))\int_0^1 vf^M(t,\,v;\,z)\,dv\right)-v\right]\Phi_h(z)\Phi_k(z)\Psi(z)\,dz.
	\label{eq:Ehk.uncontrolled}
\end{multline}

By introducing the vector $\hat{\vecf}(t,\,v):=(\hat{f}_0(t,\,v),\,\dots,\,\hat{f}_M(t,\,v))$ and the $(M+1)\times (M+1)$-matrix $\bE[\hat{\vecf}](t,\,v):=\{E_{hk}[f^M](t,\,v)\}_{h,k=0}^M$, whose entries are given by~\eqref{eq:Ehk.uncontrolled}, problem~\eqref{eq:FP_h.uncontrolled} may be recast in vector form. In particular, we obtain the following deterministic system of $M+1$ Fokker-Planck equations:
\begin{equation}
	\partial_t\hat{\vecf}=\frac{\lambda}{2}\partial_v^2(D^2(v)\hat{\vecf})-\partial_v(\bE[\hat{\vecf}]\hat{\vecf}).
	\label{eq:FP_vec.uncontrolled}
\end{equation}

In the case of controlled microscopic interactions, we apply the same procedure to the Fokker-Planck equation~\eqref{eq:FP.1} and, in place of~\eqref{eq:FP_h.uncontrolled}, we get:
\begin{equation}
	\partial_t\hat{f}_h=\frac{\lambda}{2}\partial_v^2(D^2(v)\hat{f}_h)
		-\partial_v\left[\sum_{k=0}^M\Bigl(E_{hk}[f^M](t,\,v)+p^\ast v_d(\rho)\delta_{hk}\Bigr)\hat{f}_k\right],
			\qquad h=0,\,\dots,\,M,
	\label{eq:FP_h.controlled}
\end{equation}
where now
\begin{multline}
	E_{hk}[f^M](t,\,v):= \\
	\int_{I_Z}\left[P(\rho;\,z)+\left(1+(1-P(\rho;\,z))\int_0^1 vf^M(t,\,v;\,z)\,dv\right)-(1+p^\ast)v\right]\Phi_h(z)\Phi_k(z)\Psi(z)\,dz.
	\label{eq:Ehk.controlled}
\end{multline}
Consequently, problem~\eqref{eq:FP_h.controlled} can be recast in vector form as
\begin{equation}
	\partial_t\hat{\vecf}=\frac{\lambda}{2}\partial_v^2(D^2(v)\hat{\vecf})-\partial_v((\bE[\hat{\vecf}]+p^\ast v_d(\rho)\mathbf{I})\hat{\vecf}),
	\label{eq:FP_vec.controlled}
\end{equation}
where $\bE[\hat{\vecf}](t,\,v):=\{E_{hk}[f^M](t,\,v)\}_{h,k=0}^M$ is the $(M+1)\times (M+1)$ matrix whose entries are given by~\eqref{eq:Ehk.controlled} and $\mathbf{I}$ is the $(M+1)\times (M+1)$ identity matrix.

In particular, for our purposes, it is of paramount importance to compute accurately the large time trend of the kinetic distribution function $f$. Aiming at preserving the steady state, we solve numerically~\eqref{eq:FP_vec.uncontrolled},~\eqref{eq:FP_vec.controlled} by means of a micro-macro gPC scheme, see~\cite{Dimarco2017,Pareschi2017}, together with a second order semi-implicit time integration, see \cite{Boscarino2016,Pareschi2018}.

\subsection{The uncontrolled case}
We begin by considering the Fokker-Planck model~\eqref{eq:FP_vec.uncontrolled}, i.e., in particular, the one in which microscopic interactions are not controlled, cf.~\eqref{eq:FP}, complemented with a deterministic initial condition independent of the traffic density $\rho$:
\begin{equation}
	f_0(v)=\frac{e^{-{\left(v-\frac{1}{2}\right)}^2}}{\sqrt{\pi}\Erf{\frac{1}{2}}}.
	\label{eq:f_init}
\end{equation}
Concerning the distribution of the uncertain parameter $z$, we explore mainly the following two cases:
\begin{enumerate}[label=(\roman*)]
\item uniform: $z\sim\cU([1,\,3])$, corresponding to a Legendre polynomial chaos expansion;
\item binomial: $z$ is such that $z-1\sim\B\!\left(50,\,\frac{1}{50}\right)$, corresponding to a Krawtchouk polynomial chaos expansion,
\end{enumerate}
cf. Table~\ref{tab:pol}. Notice that, in both cases, the mean value of the exponent of the probability of acceleration~\eqref{eq:P} is $\E_z(z)=2$.

\begin{figure}[!t]
\centering
\subfigure[]{\includegraphics[scale=0.5]{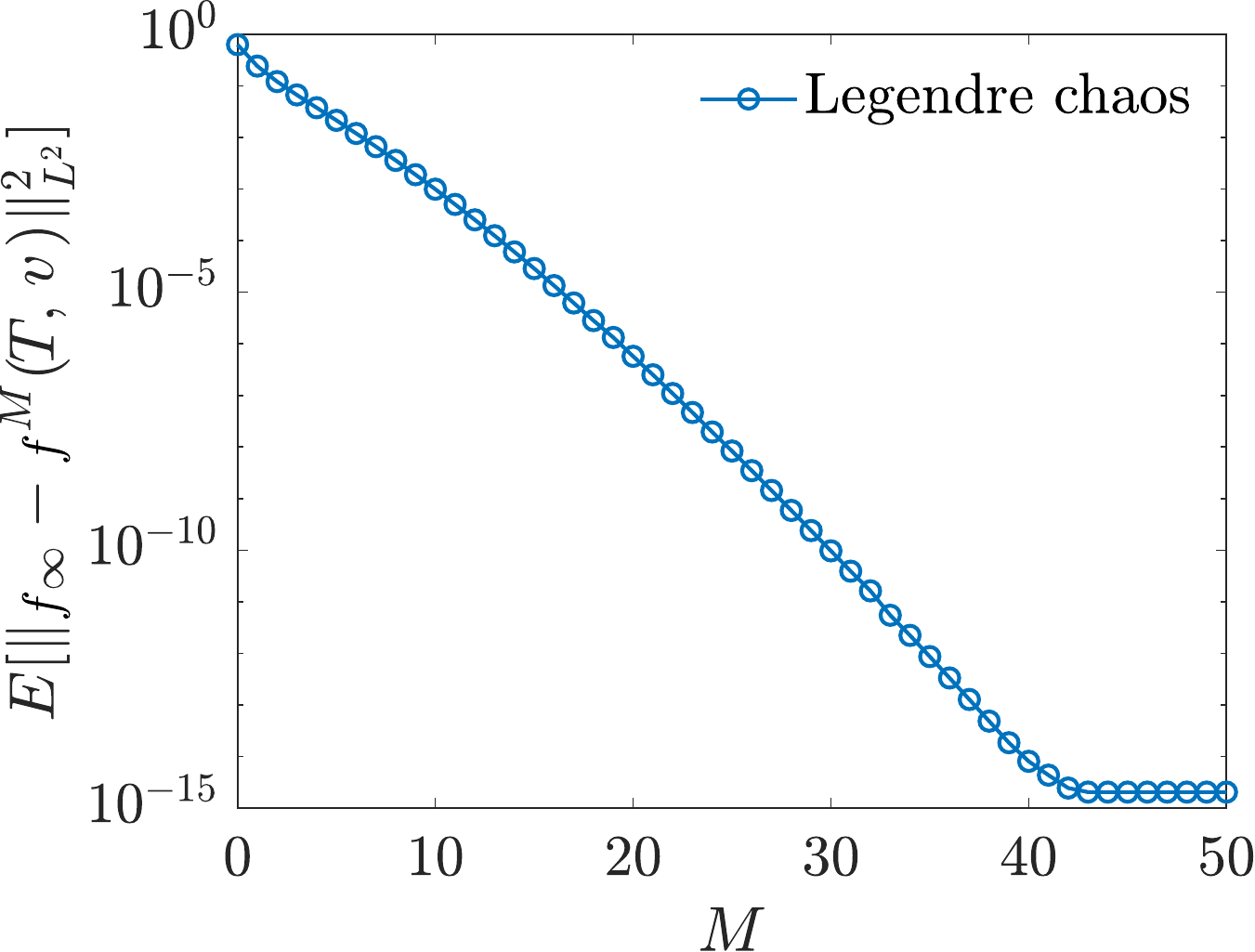}}
\subfigure[]{\includegraphics[scale=0.5]{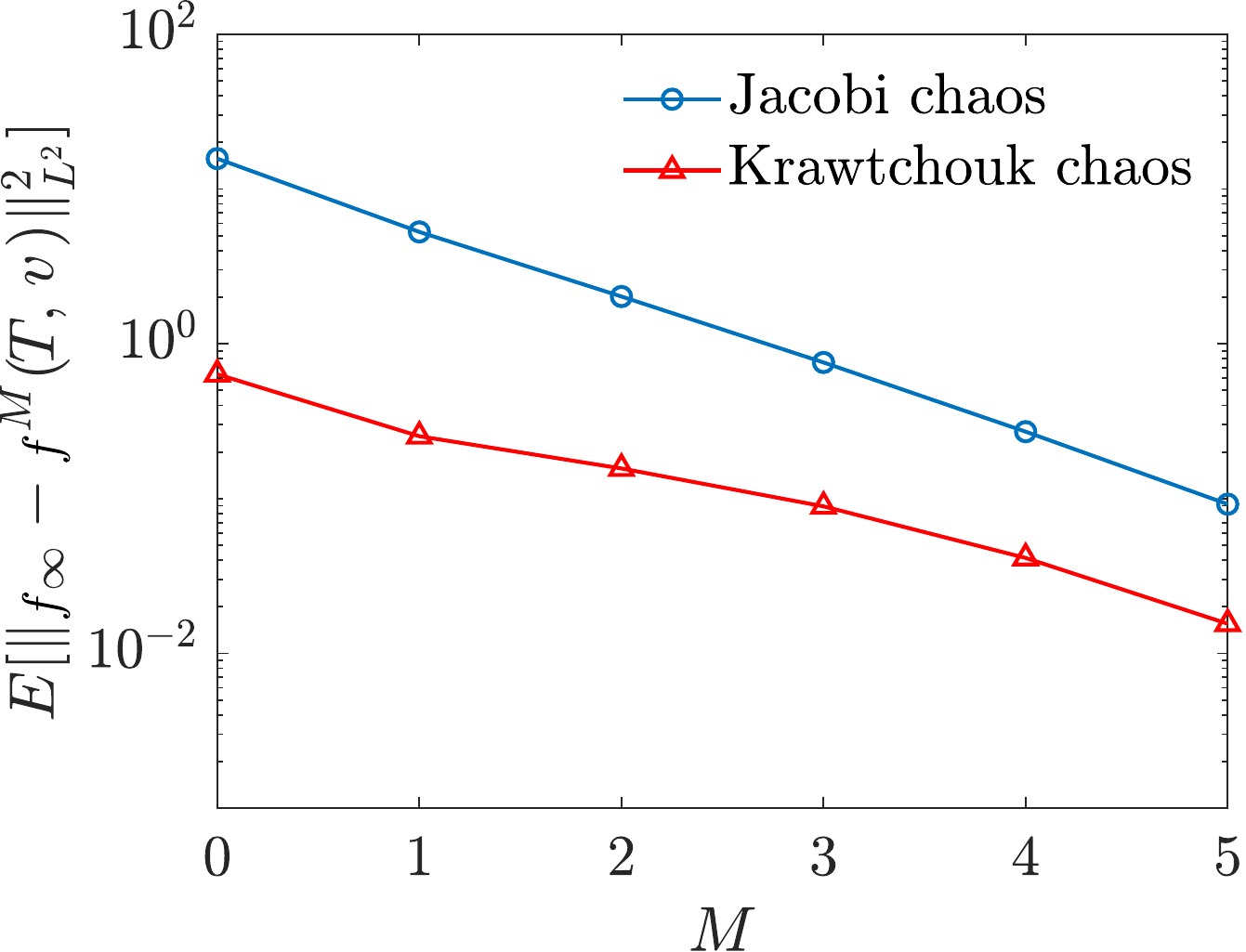}}
\caption{\textbf{Uncontrolled case}. Convergence of the $L^2$-numerical error with respect to the exact solution~\eqref{eq:ginf} of the Fokker-Planck equation~\eqref{eq:FP} for: (a) $z\sim\cU([1,\,3])$ (circular markers); (b) $z$ such that $z-1\sim\B\!\left(50,\,\frac{1}{50}\right)$ (triangular markers) and $z$ with beta distribution in $I_Z=[1,\,3]$ with zero mean and variance equal to $\frac{1}{3}$ (circular markers).}
\label{fig:convergence}
\end{figure}

In Figure~\ref{fig:convergence}, we show the trend of the $L^2$-error produced by the numerical scheme with respect to the analytical steady state $f_\infty(v;\,z)$ given by~\eqref{eq:ginf} for an increasing number $M$ of modes of the gPC expansion~\eqref{eq:gPC_expansion}. For this convergence test, we considered a uniform discretisation of the variable $v\in [0,\,1]$ by $41$ gridpoints and a uniform discretisation of the time interval $[0,\,T]$, with final time $T=60$, by a time step $\Delta{t}=1$. Moreover, we fixed the traffic density to $\rho=0.4$. From Figure~\ref{fig:convergence}(a), which refers to the uniformly distributed $z$ with Legendre polynomial chaos expansion, we observe that we reach essentially the machine precision with a relatively small number of modes. The same spectral convergence to zero of the error may be appreciated in Figure~\ref{fig:convergence}(b), which illustrates the case of the binomial distribution with Krawtchouck polynomial chaos expansion and, for completeness, also the case in which $z$ has beta distribution on the interval $I_Z=[1,\,3]$ with zero mean and variance equal to $\frac{1}{3}$. Notice that this case corresponds actually to $z\sim\cU([1,\,3])$, using however the Jacobi polynomial chaos expansion.

Figures~\ref{fig:uniform_unconstrained},~\ref{fig:binomial_unconstrained} show the contours of the mean $\bar{f}(t,\,v)$ and of the variance $\Var_z(f(t,\,v;\,z))$ of the solution $f$ to the Fokker-Planck equation~\eqref{eq:FP} for $\rho=0.2,\,0.4,\,0.6$, computed by means of the polynomial chaos expansion as
$$ \bar{f}(t,\,v)=\E_z(f(t,\,v;\,z))\approx\hat{f}_0(t,\,v), \qquad
	\operatorname{Var}_z(f(t,\,v;\,z))\approx\sum_{k=1}^{M}\hat{f}_k^2(t,\,v). $$
We used $M=20$ modes and the same discretisations of $v$ and $t$ described above. It is evident that the distribution $\Psi$ of $z$ strongly affects the results. In particular, we observe substantial differences in the instantaneous variance, which highlights the regions where the solution to~\eqref{eq:FP} is more sensitive to the uncertainty introduced by $z$.

\begin{figure}[!t]
\centering
\subfigure[$\bar{f}(t,\,v)$, $\rho=0.2$]{\includegraphics[scale=0.3]{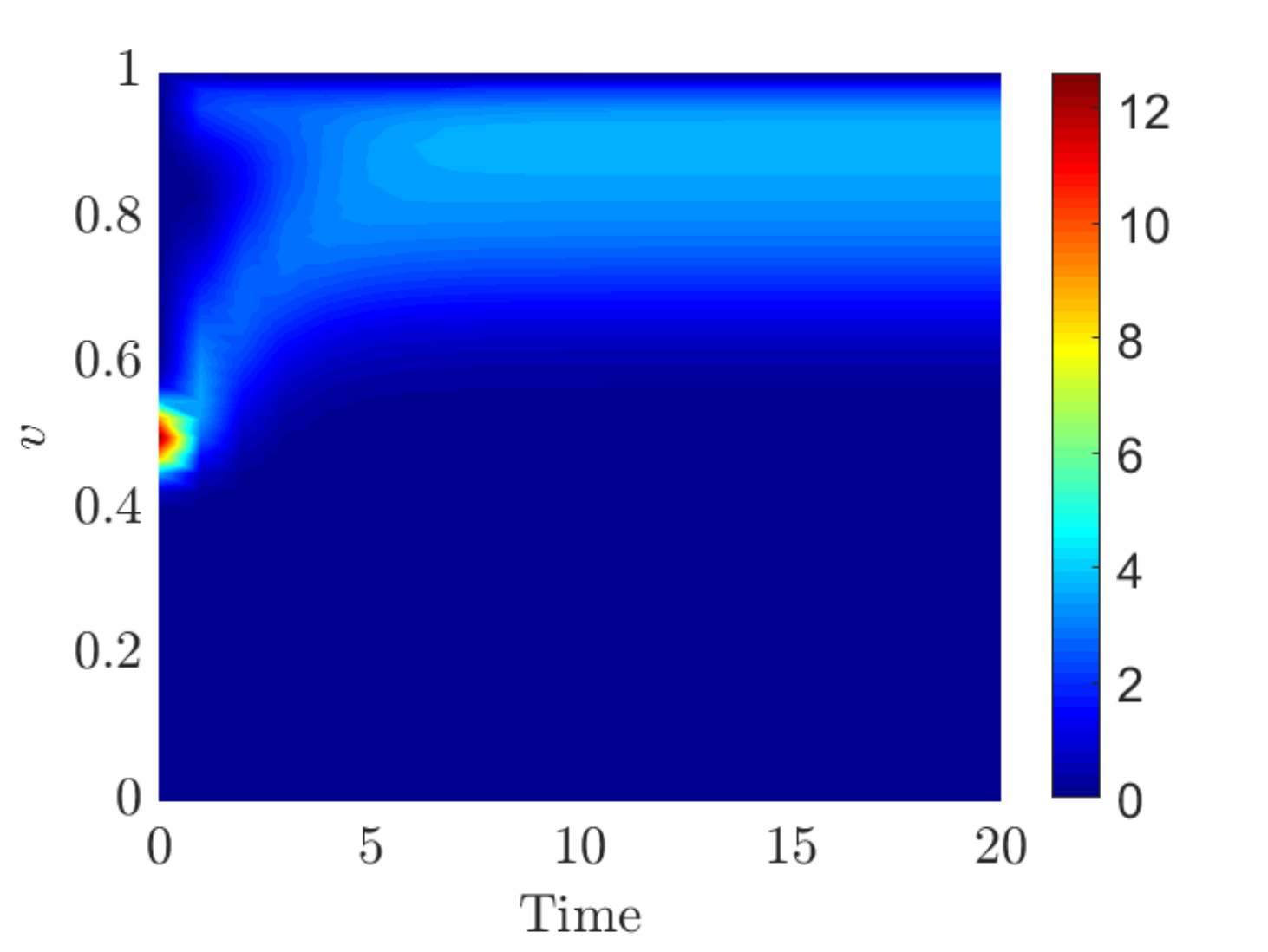}}
\subfigure[$\bar{f}(t,\,v)$, $\rho=0.4$]{\includegraphics[scale=0.3]{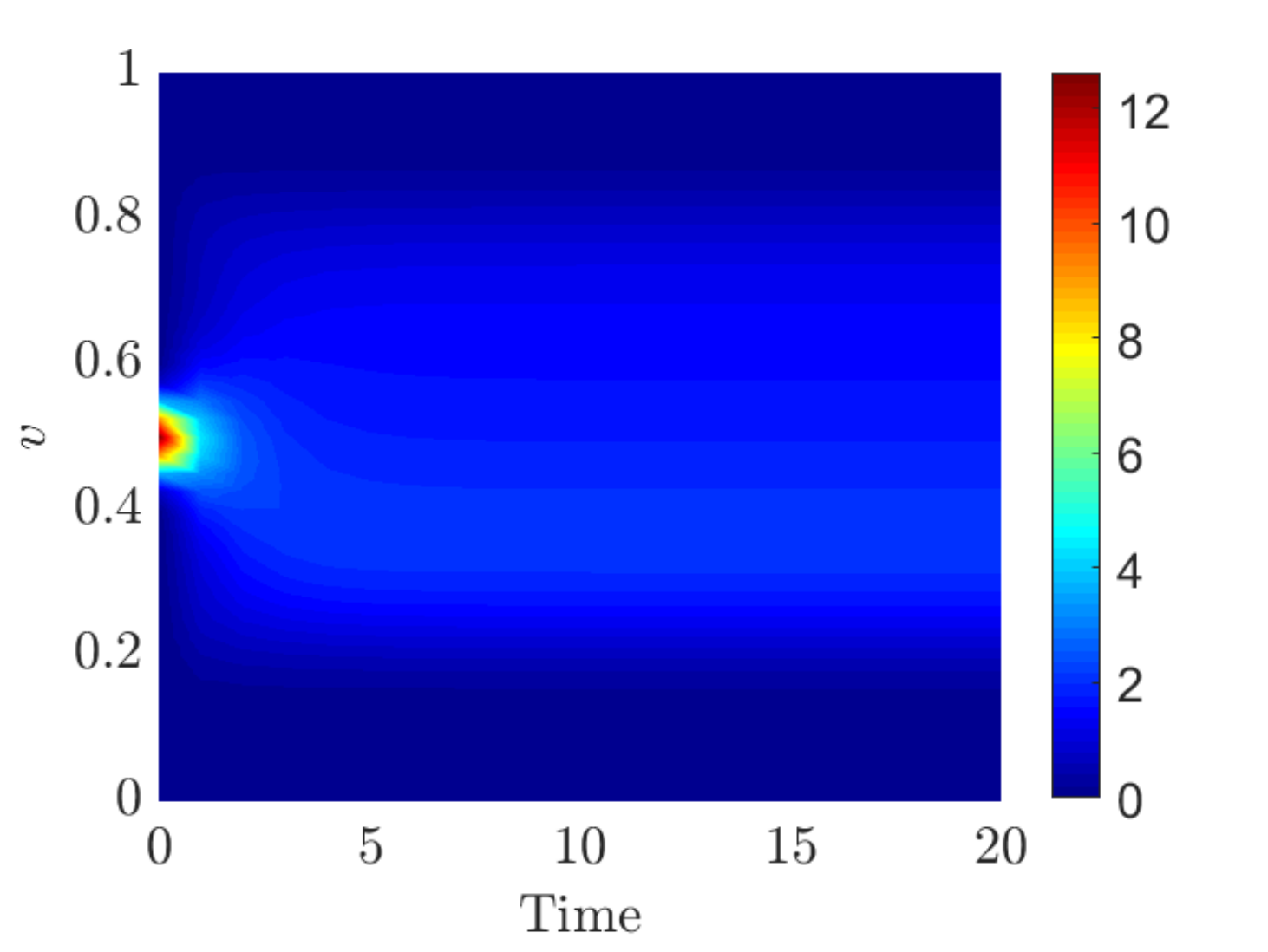}}
\subfigure[$\bar{f}(t,\,v)$, $\rho=0.6$]{\includegraphics[scale=0.3]{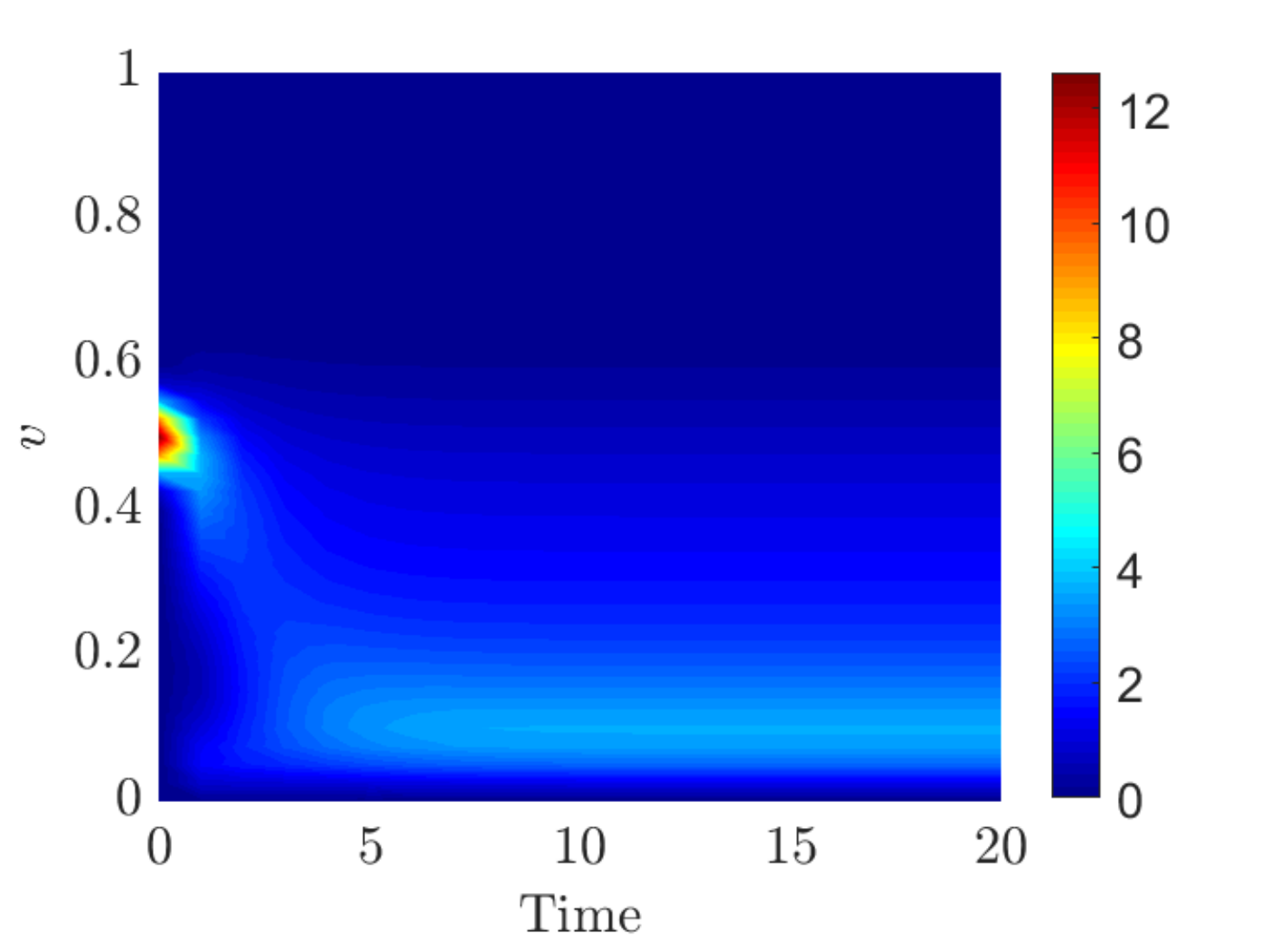}} \\
\subfigure[$\Var_z(f(t,\,v;\,z))$, $\rho=0.2$]{\includegraphics[scale=0.3]{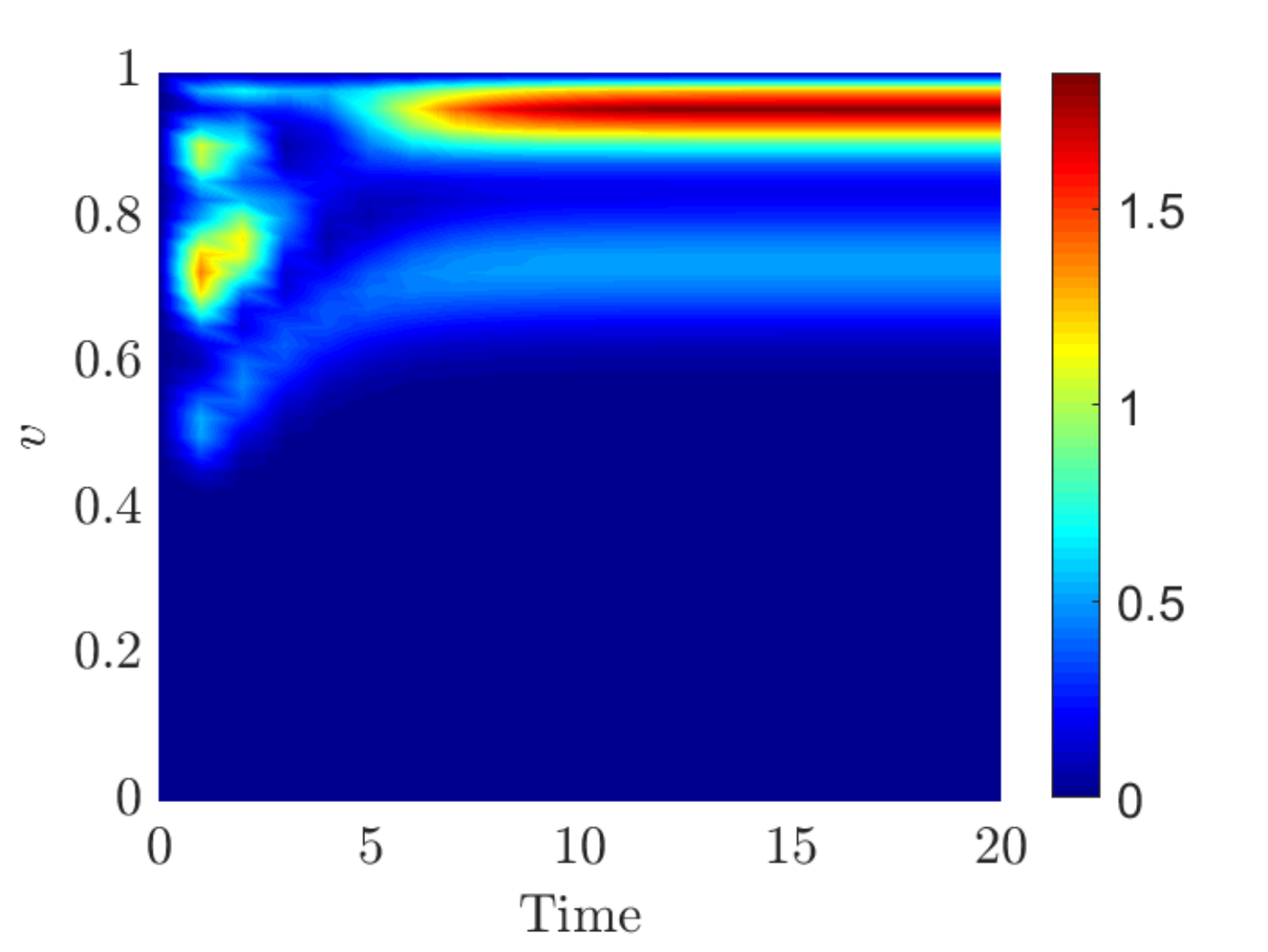}}
\subfigure[$\Var_z(f(t,\,v;\,z))$, $\rho=0.4$]{\includegraphics[scale=0.3]{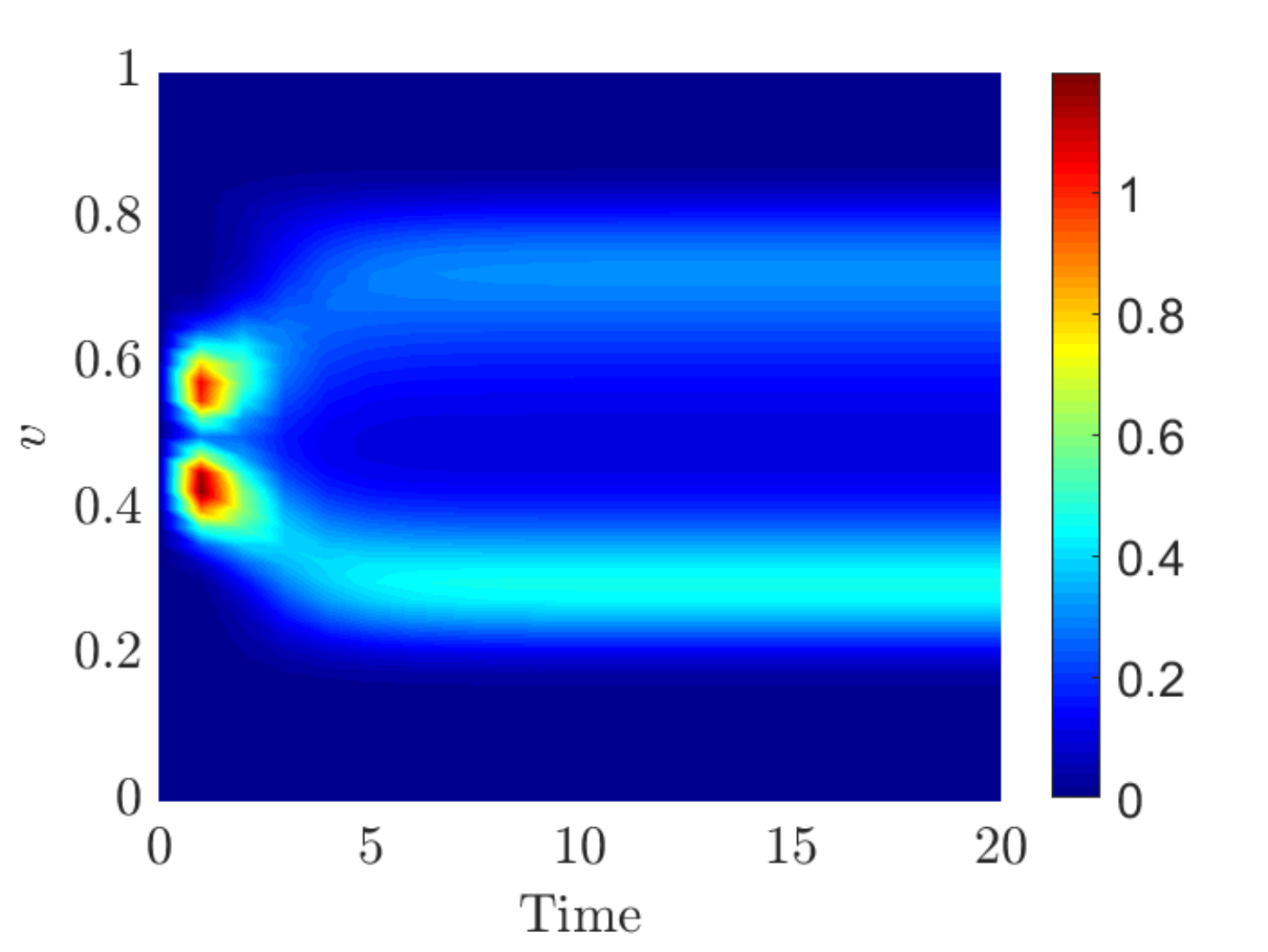}}
\subfigure[$\Var_z(f(t,\,v;\,z))$, $\rho=0.6$]{\includegraphics[scale=0.3]{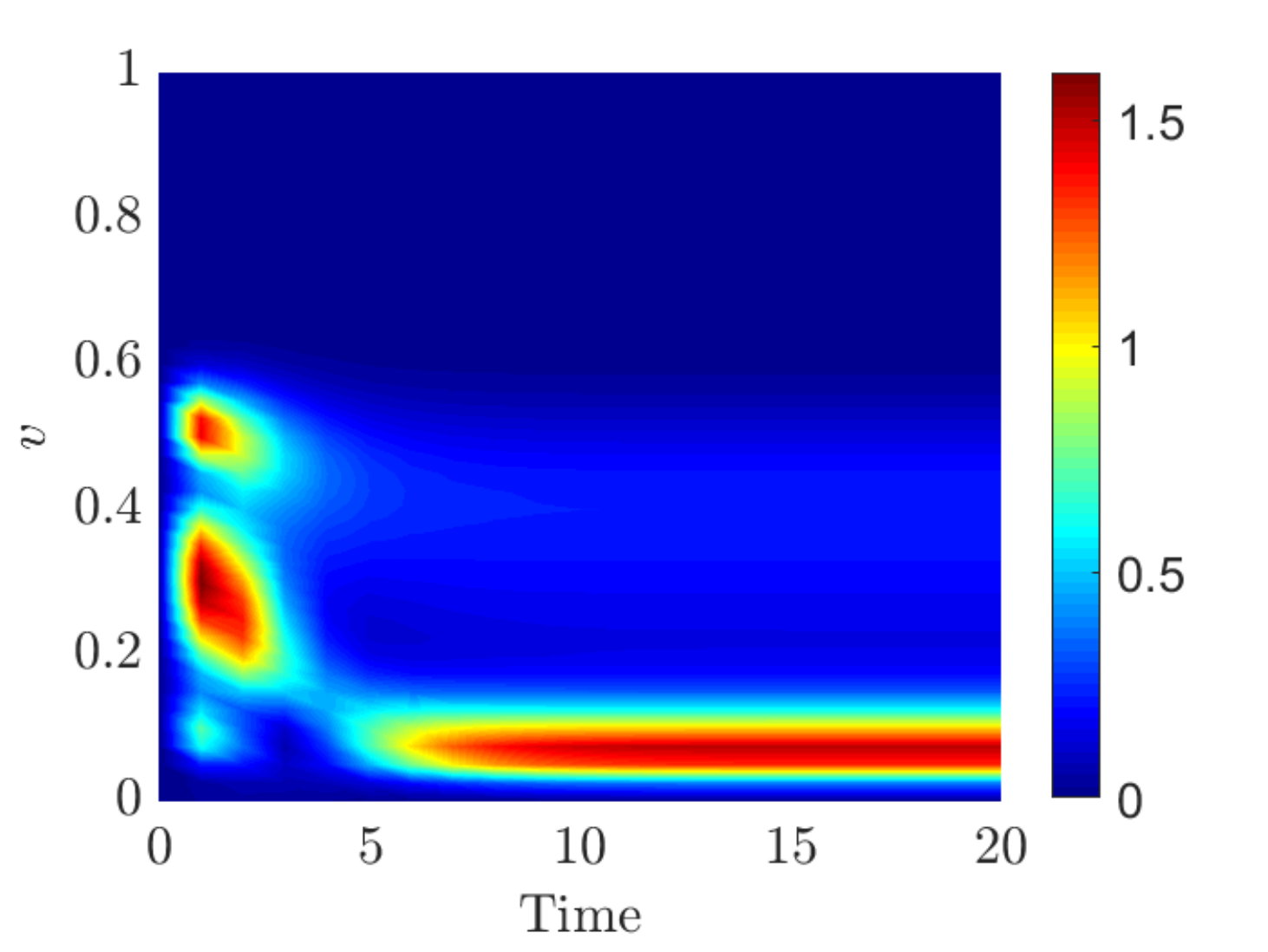}}
\caption{\textbf{Uncontrolled case}, $\boldsymbol{z\sim\cU([1,\,3])}$. Contours of $\bar{f}(t,\,v)=\E_z(f(t,\,v;\,z))$ (top row) and of $\Var_z(f(t,\,v;\,z))$ (bottom row), where $f$ is the solution to~\eqref{eq:FP} with $\lambda=5\cdot 10^{-2}$ issuing from the initial datum~\eqref{eq:f_init}, for $t\in [0,\,20]$ and $\rho=0.2,\,0.4,\,0.6$ in the case of uniformly distributed $z$.}
\label{fig:uniform_unconstrained}
\end{figure}
\begin{figure}[!t]
\centering
\subfigure[$\bar{f}(t,\,v)$, $\rho=0.2$]{\includegraphics[scale=0.3]{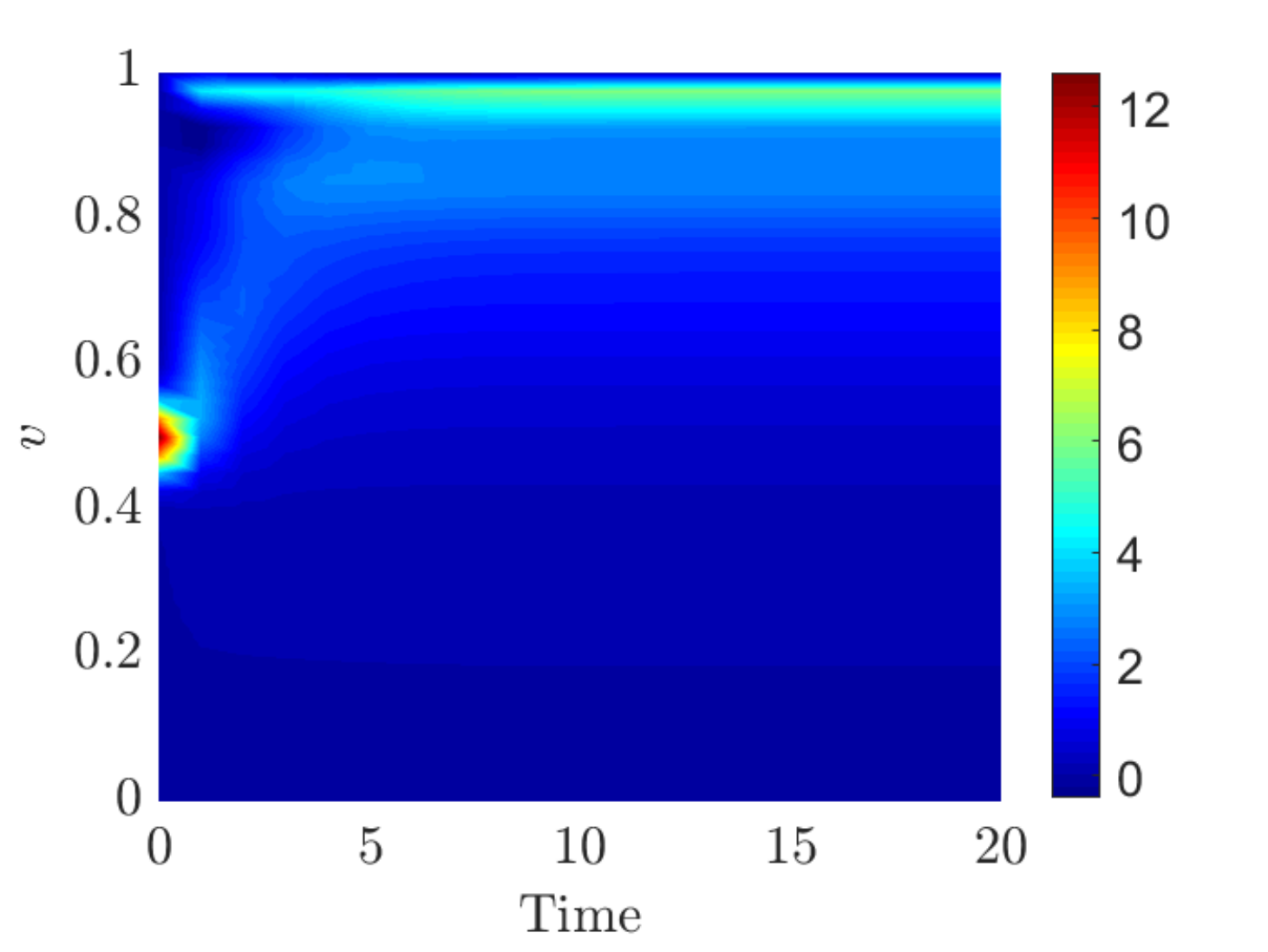}}
\subfigure[$\bar{f}(t,\,v)$, $\rho=0.4$]{\includegraphics[scale=0.3]{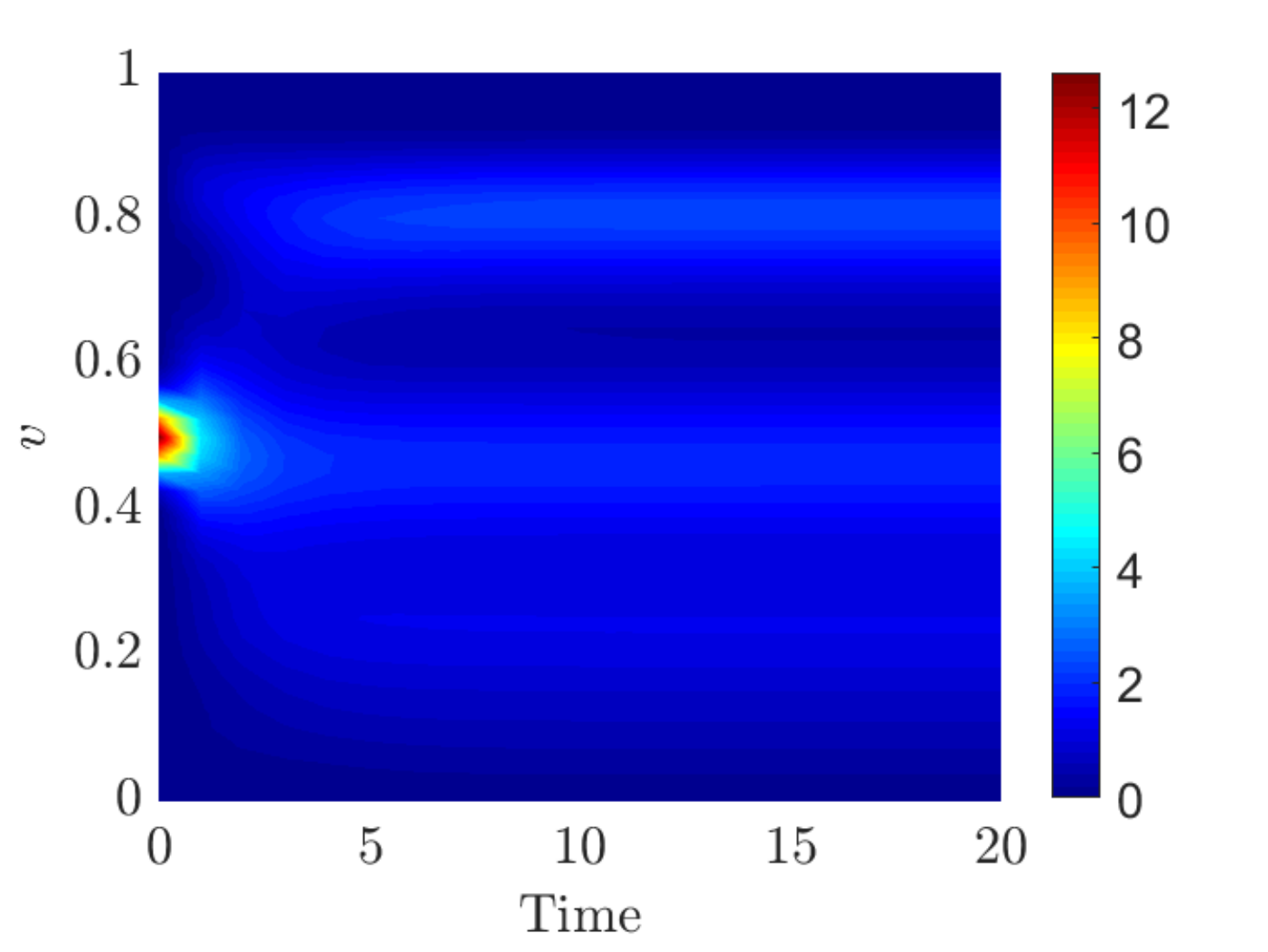}}
\subfigure[$\bar{f}(t,\,v)$, $\rho=0.6$]{\includegraphics[scale=0.3]{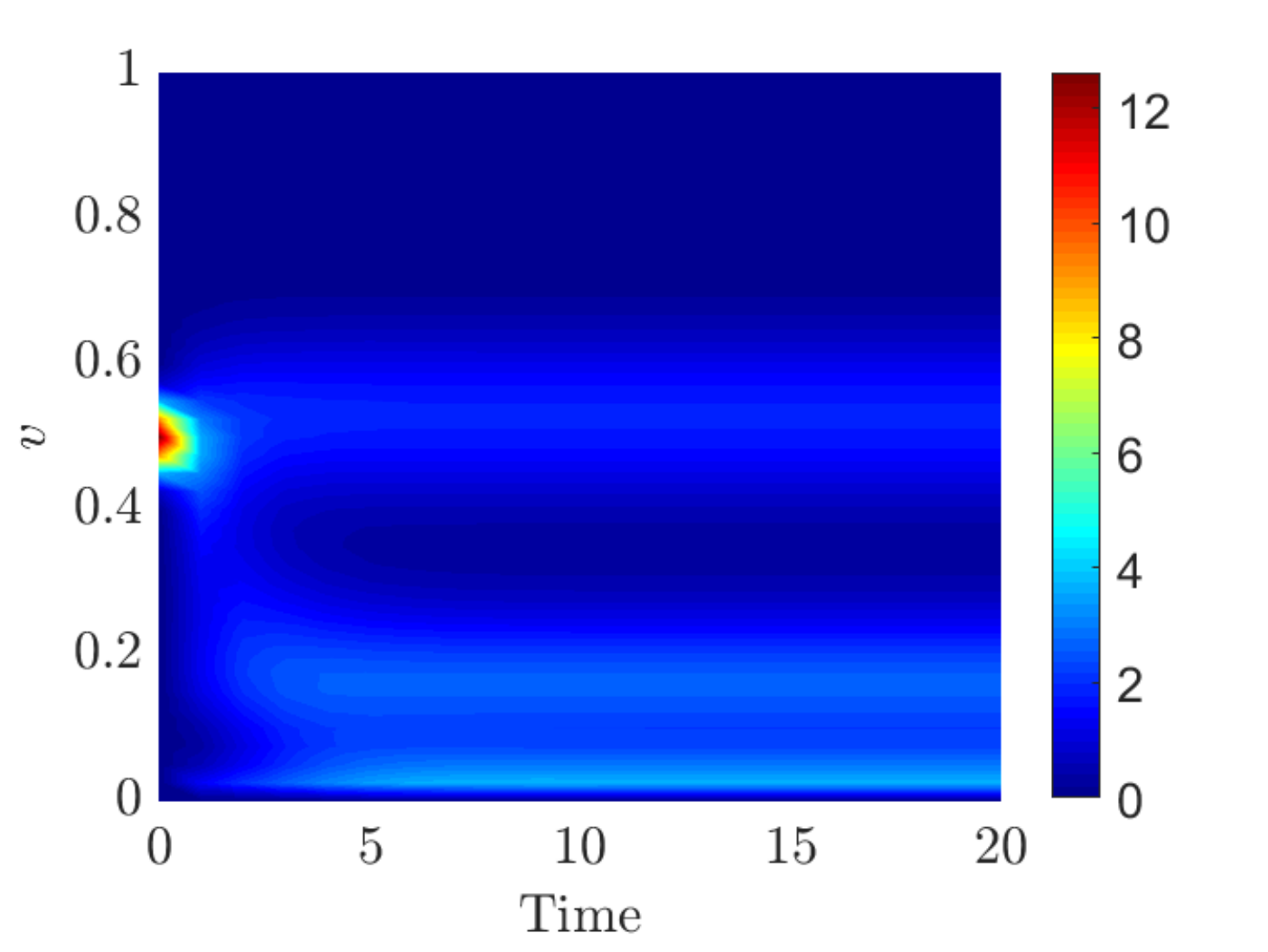}} \\
\subfigure[$\Var_z(f(t,\,v;\,z))$, $\rho=0.2$]{\includegraphics[scale=0.3]{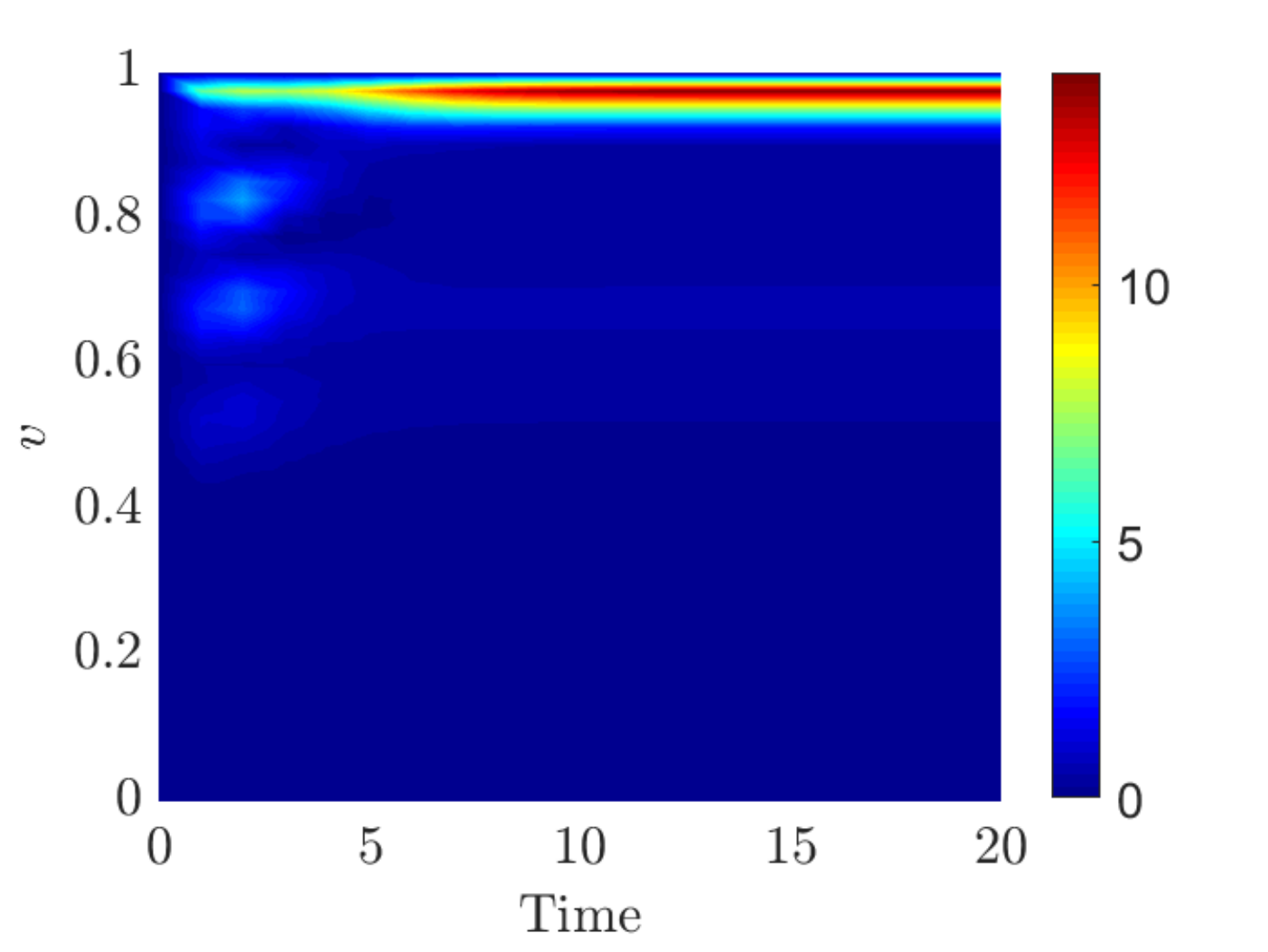}}
\subfigure[$\Var_z(f(t,\,v;\,z))$, $\rho=0.4$]{\includegraphics[scale=0.3]{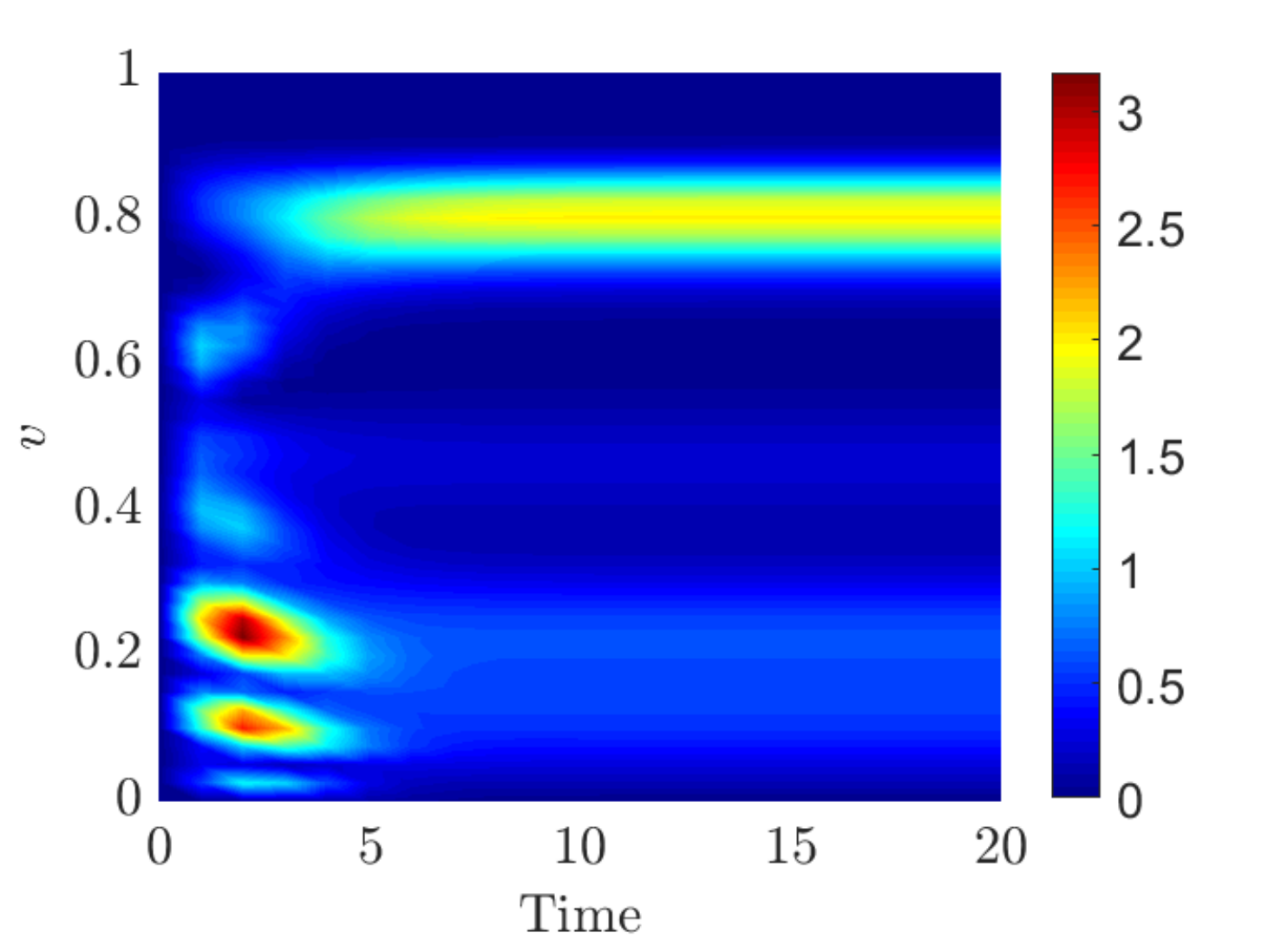}}
\subfigure[$\Var_z(f(t,\,v;\,z))$, $\rho=0.6$]{\includegraphics[scale=0.3]{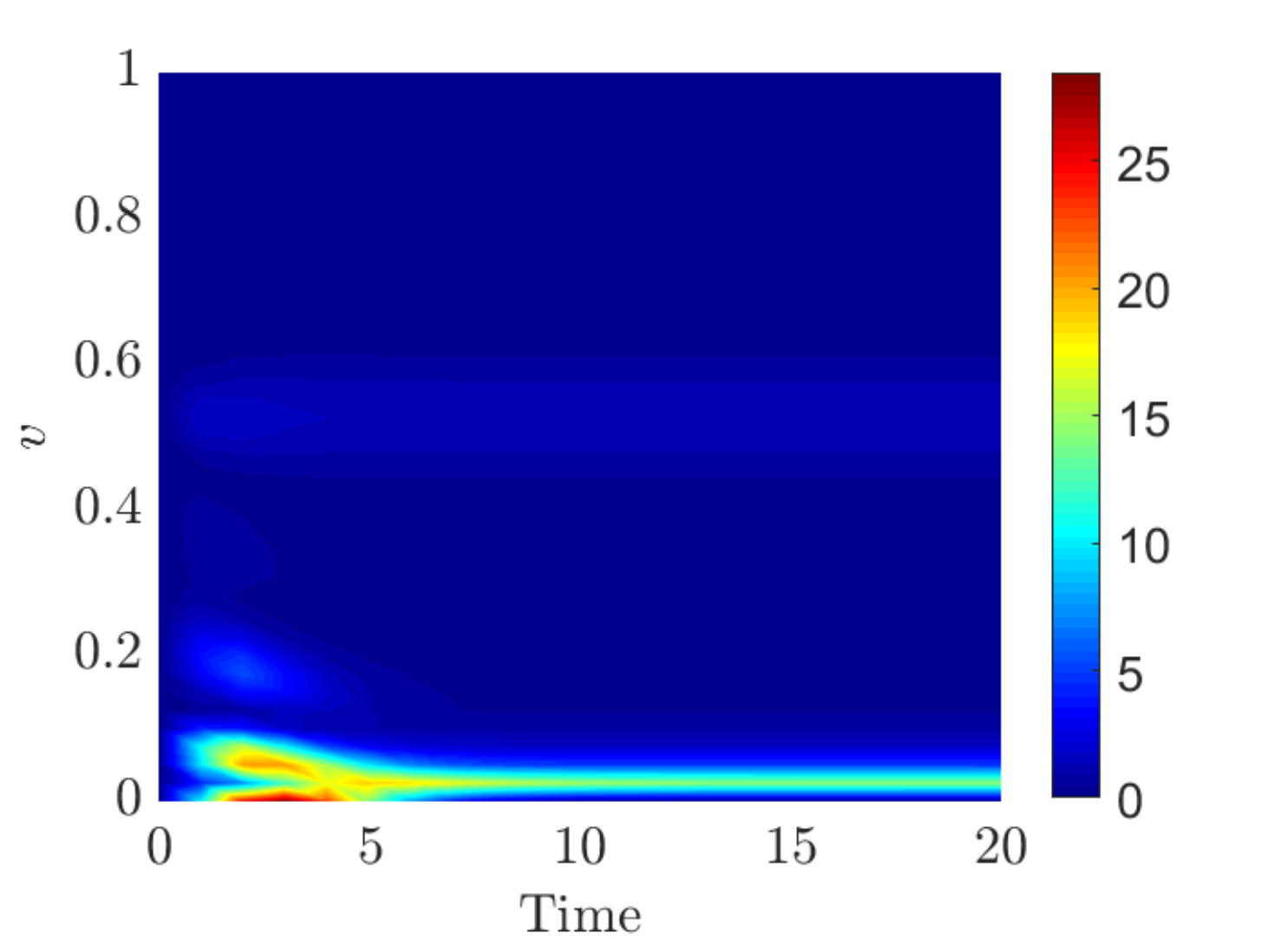}}
\caption{\textbf{Uncontrolled case}, $\boldsymbol{z-1\sim\B\!\left(50,\,\frac{1}{50}\right)}$. Contours of $\bar{f}(t,\,v)=\E_z(f(t,\,v;\,z))$ (top row) and of $\Var_z(f(t,\,v;\,z))$ (bottom row), where $f$ is the solution to~\eqref{eq:FP}, when $z$ is such that $z-1$ has binomial distribution. All the parameters are like in Figure~\ref{fig:uniform_unconstrained}.}
\label{fig:binomial_unconstrained}
\end{figure}

\subsection{The controlled case: uncertainty damping by control methods}
We consider now the Fokker-Planck model~\eqref{eq:FP_h.controlled}, in which the microscopic interactions~\eqref{eq:binary-u.1} are controlled by the ADAS technology through the uncertain control~\eqref{eq:ustar.1}. Owing to the results of Section~\ref{sect:Boltzmann.2}, we know that the observed large time trends are also produced by the microscopic interactions~\eqref{eq:binary-u.2} implementing the deterministic control~\eqref{eq:ustar.2}.

In particular, we prescribe the same deterministic initial distribution~\eqref{eq:f_init} as in the uncontrolled case and, as far as the distribution of the uncertainty is concerned, we explore the same cases $z\sim\cU([1,\,3])$ and $z-1\sim\B\!\left(50,\,\frac{1}{50}\right)$ already considered in the uncontrolled case. Moreover, we fix the penetration rate to $p=0.1$, meaning that $10\%$ of the vehicles in the traffic stream are equipped with the ADAS technology, and we consider various penalisation coefficients $\kappa>0$ of the control. We remind that, in~\eqref{eq:FP.1}, these two factors are linked through the parameter $p^\ast$. Finally, we set the optimal speed to $v_d(\rho)=1-\rho$, like in the cases illustrated in Figures~\ref{fig:damping_diag},~\ref{fig:dis_pstar}.

\begin{figure}[!t]
\centering
\subfigure[$p^\ast=1$, $\rho=0.2$]{\includegraphics[scale=0.3]{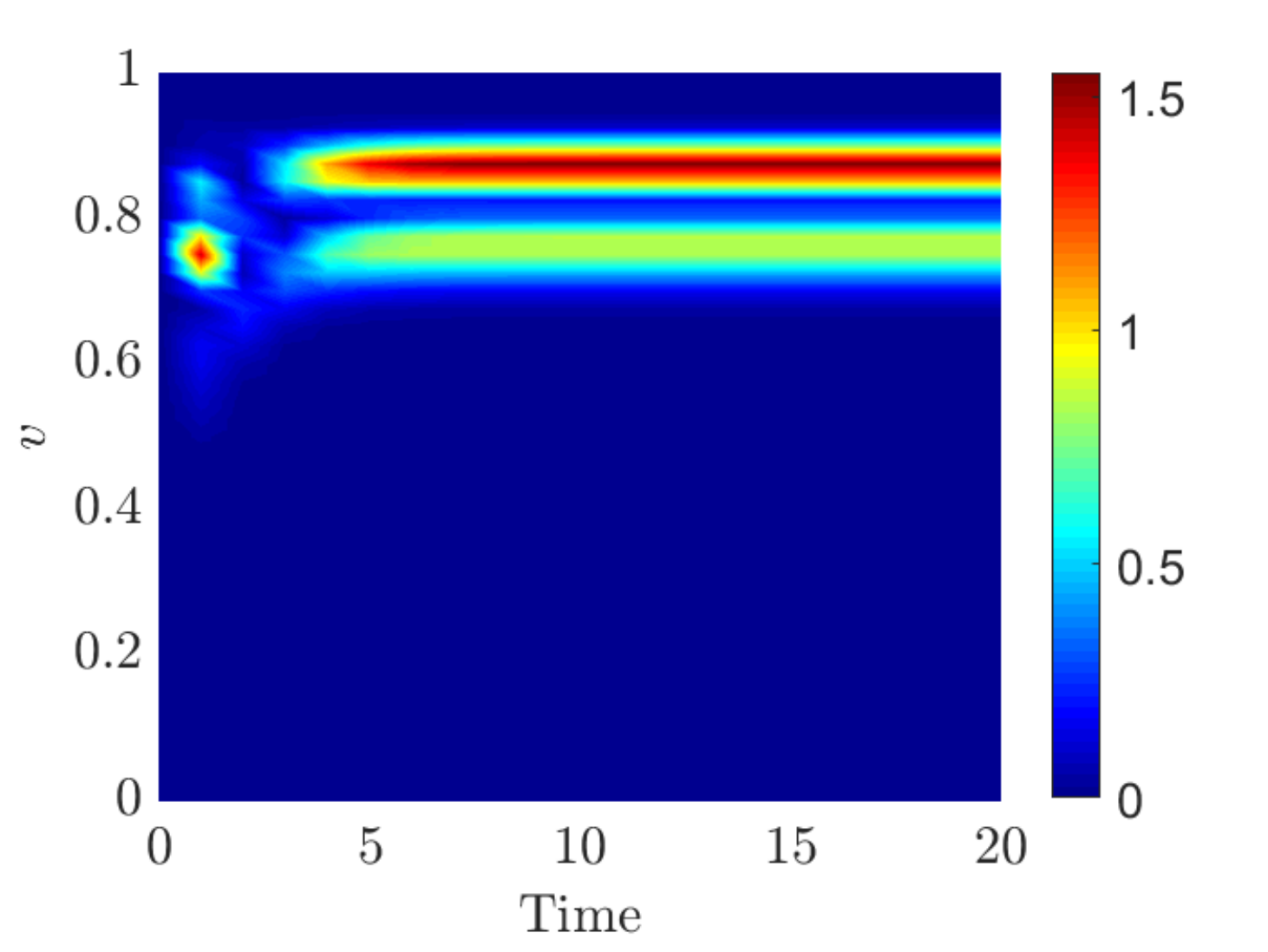}}
\subfigure[$p^\ast=1$, $\rho=0.4$]{\includegraphics[scale=0.3]{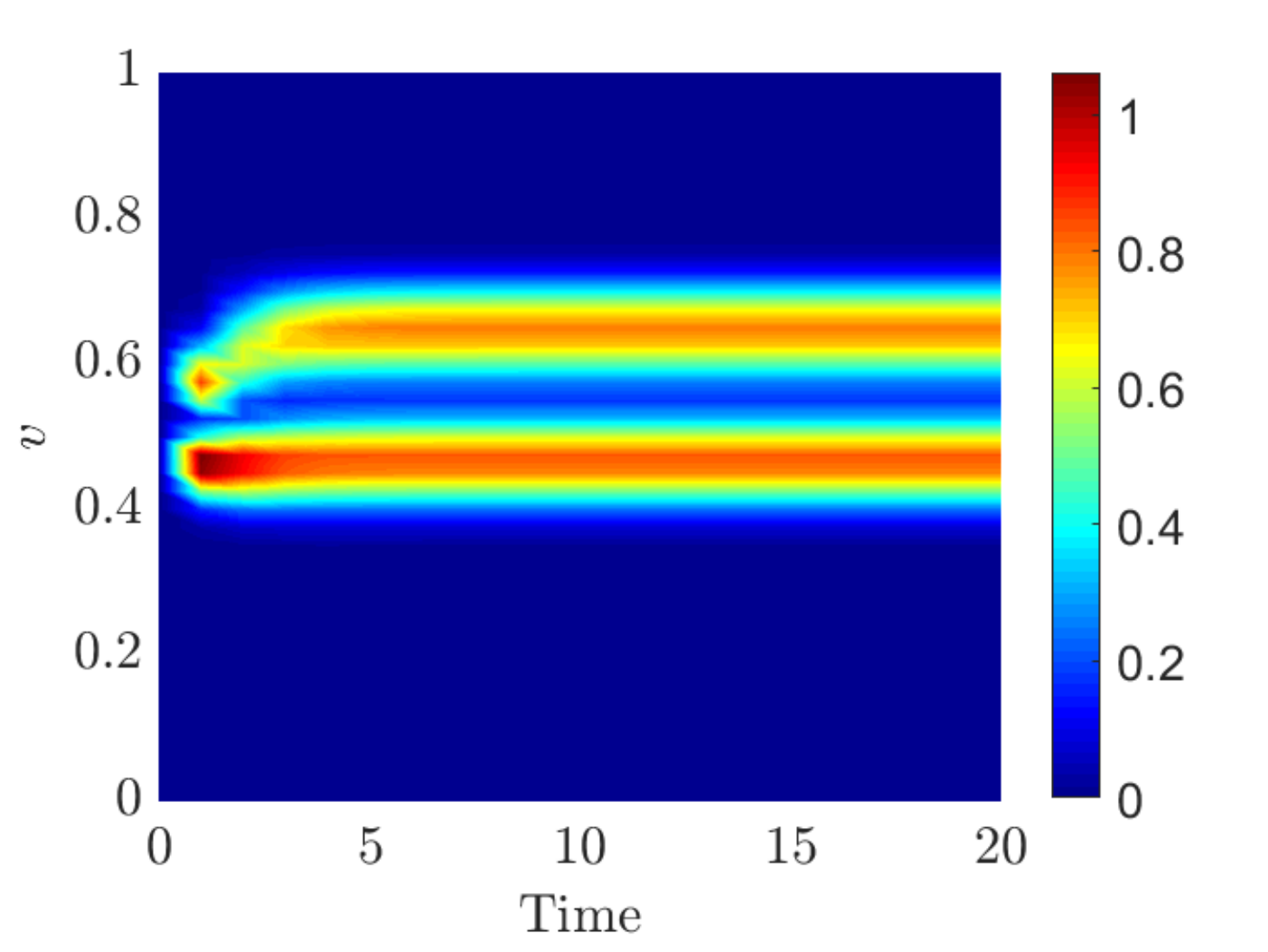}}
\subfigure[$p^\ast=1$, $\rho=0.6$]{\includegraphics[scale=0.3]{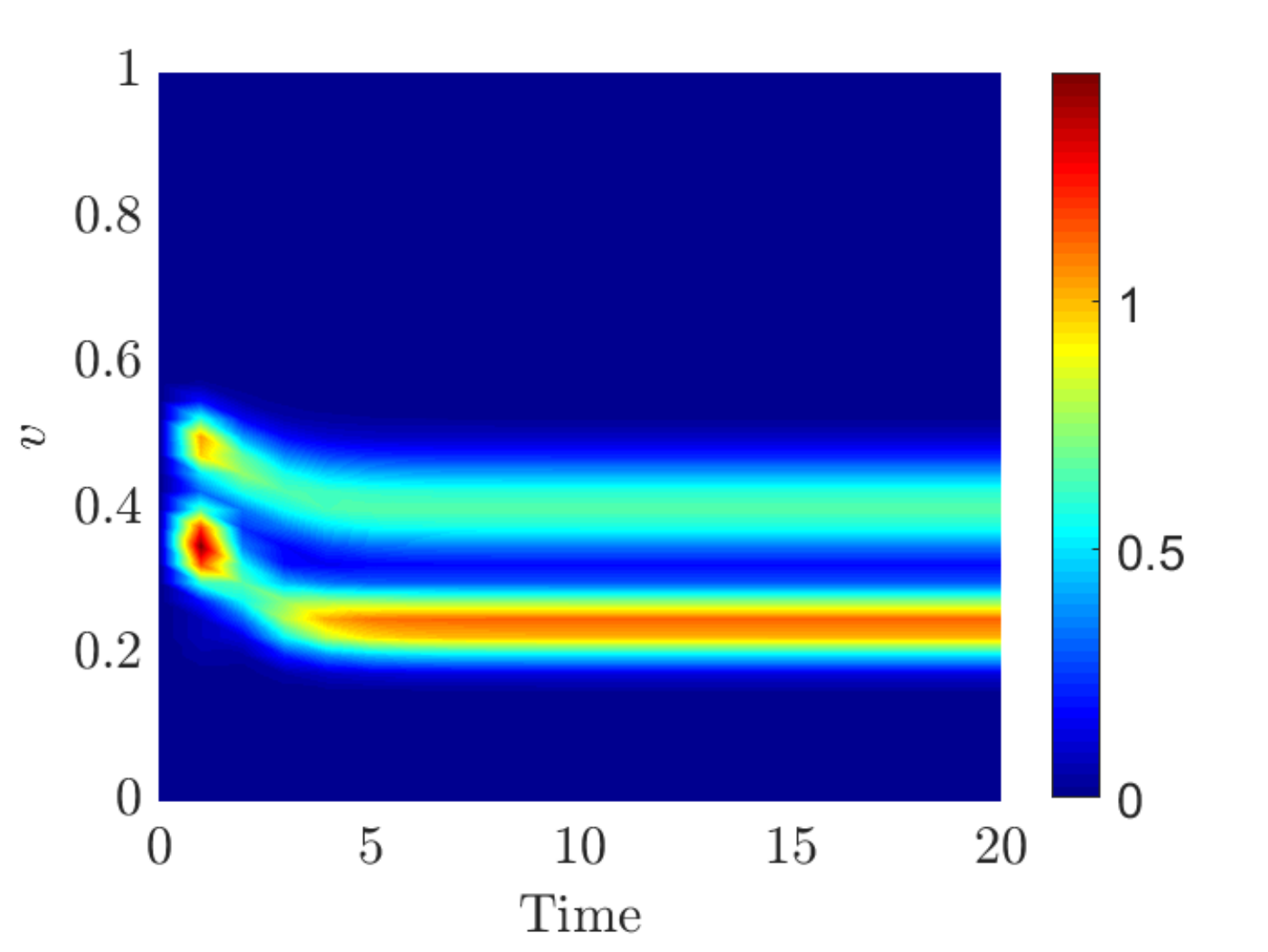}} \\
\subfigure[$p^\ast=10$, $\rho=0.2$]{\includegraphics[scale=0.3]{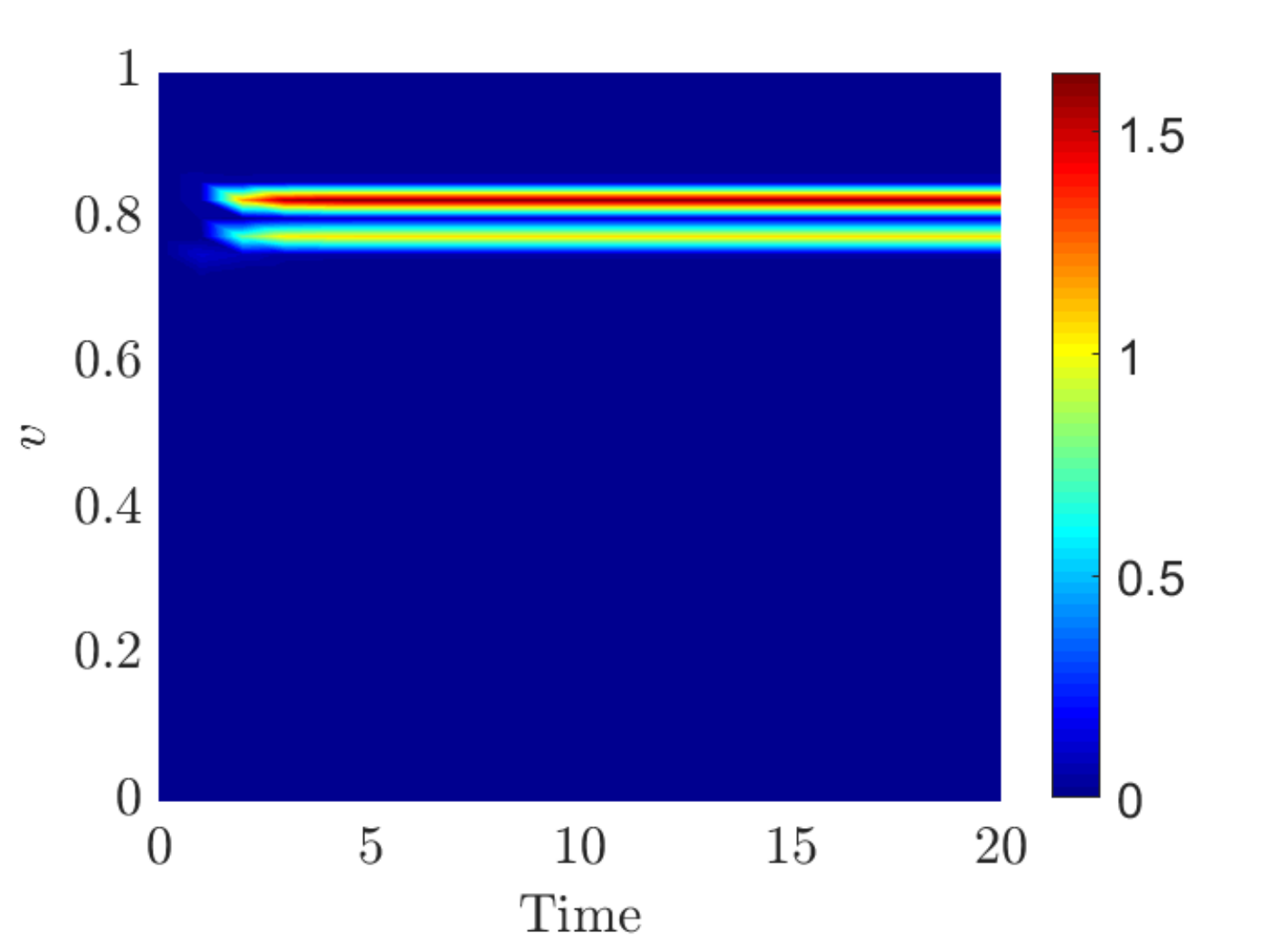}}
\subfigure[$p^\ast=10$, $\rho=0.4$]{\includegraphics[scale=0.3]{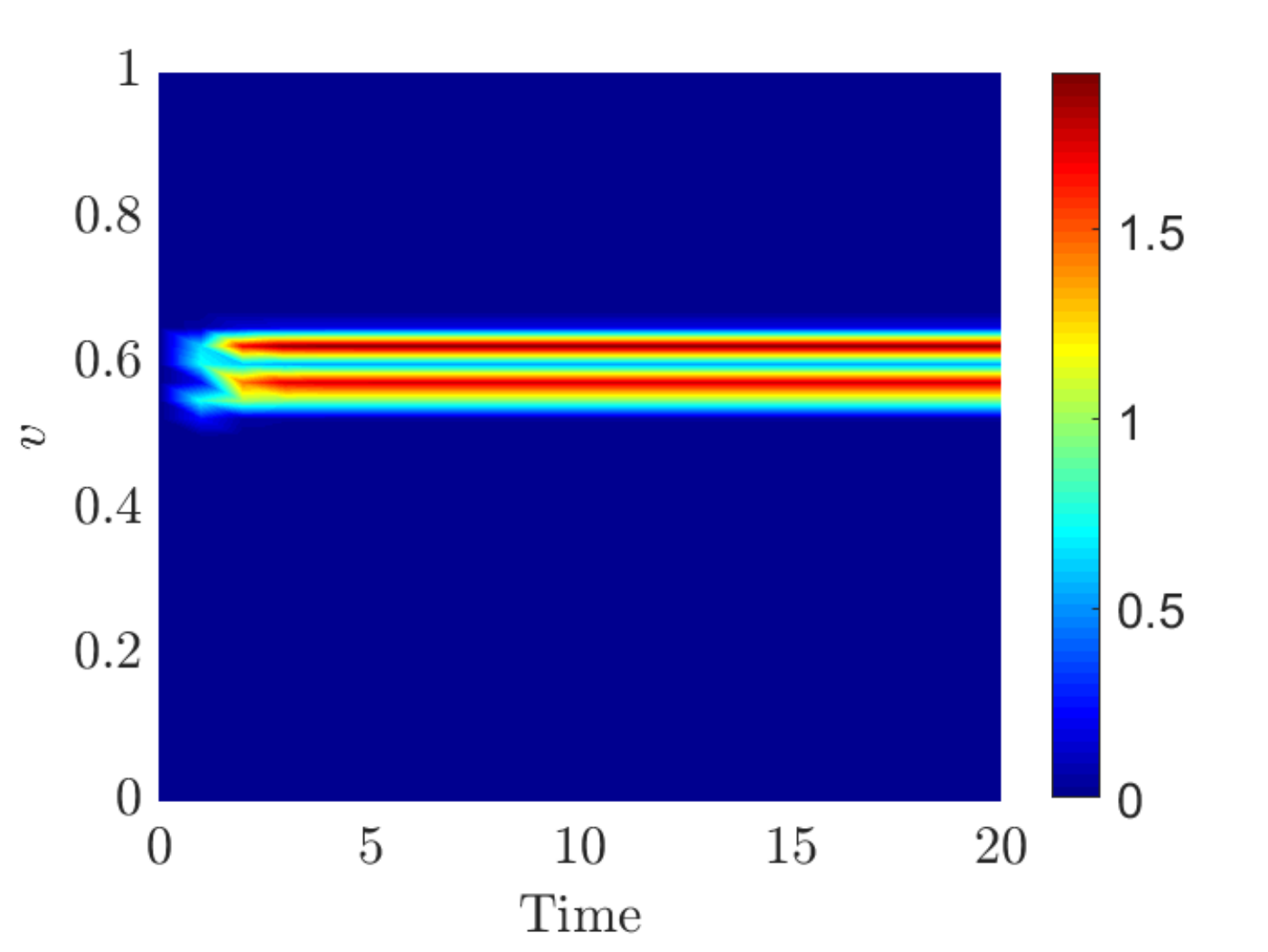}}
\subfigure[$p^\ast=10$, $\rho=0.6$]{\includegraphics[scale=0.3]{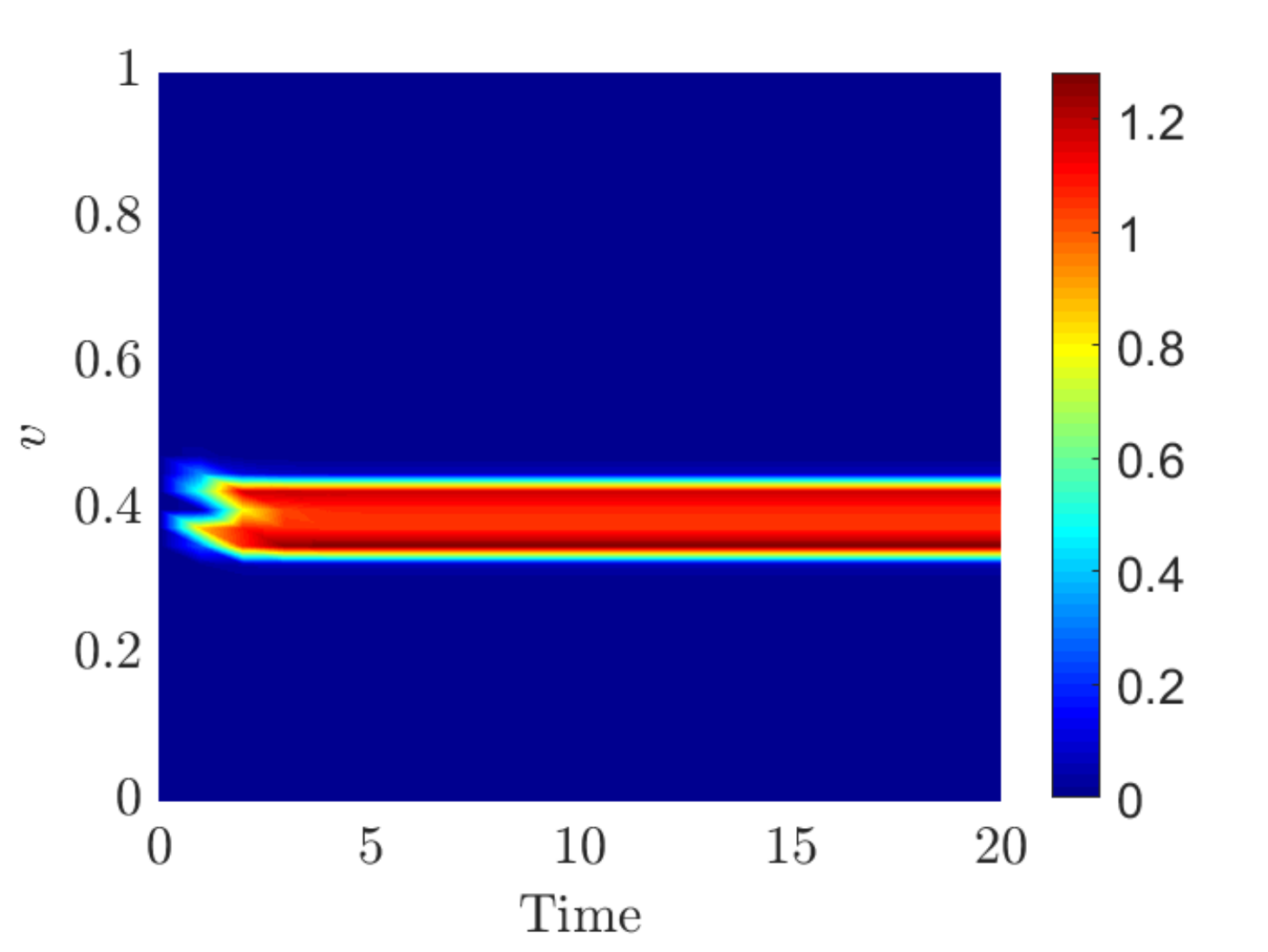}} 
\caption{\textbf{Controlled case}, $\boldsymbol{z\sim\cU([1,\,3])}$. Contours of $\Var_z(f(t,\,v;\,z))$, where $f$ is the solution to~\eqref{eq:FP.1} with $\lambda=5\cdot 10^{-2}$ issuing from the initial datum~\eqref{eq:f_init}, for $t\in [0,\,20]$ and $\rho=0.2,\,0.4,\,0.6$ in the case of uniformly distributed $z$. Top row: $p^\ast=1$; bottom row: $p^\ast=10$.}
\label{fig:uni_cons}
\end{figure}
\begin{figure}[!t]
\centering
\subfigure[$p^\ast=1$, $\rho=0.2$]{\includegraphics[scale=0.3]{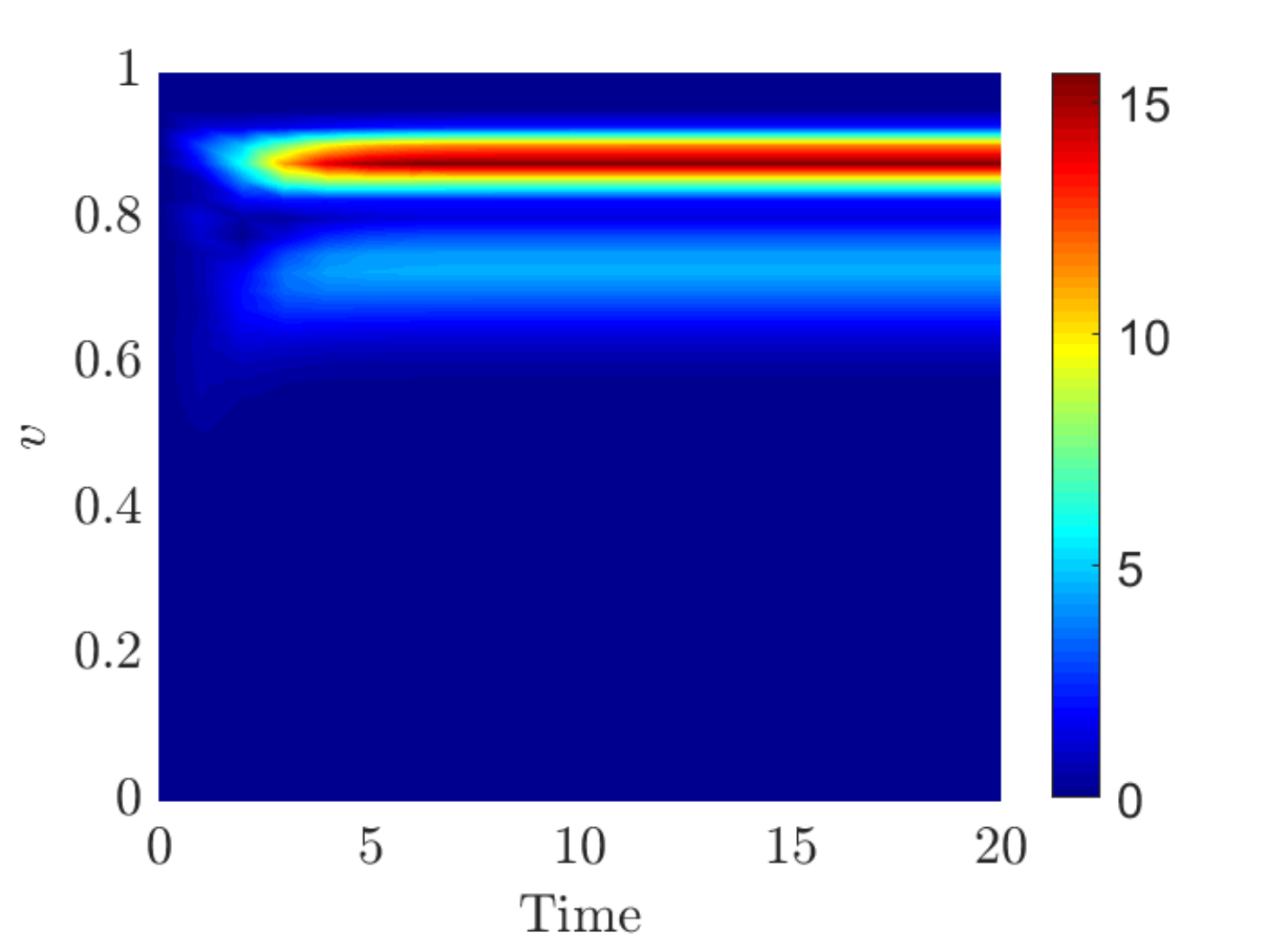}}
\subfigure[$p^\ast=1$, $\rho=0.4$]{\includegraphics[scale=0.3]{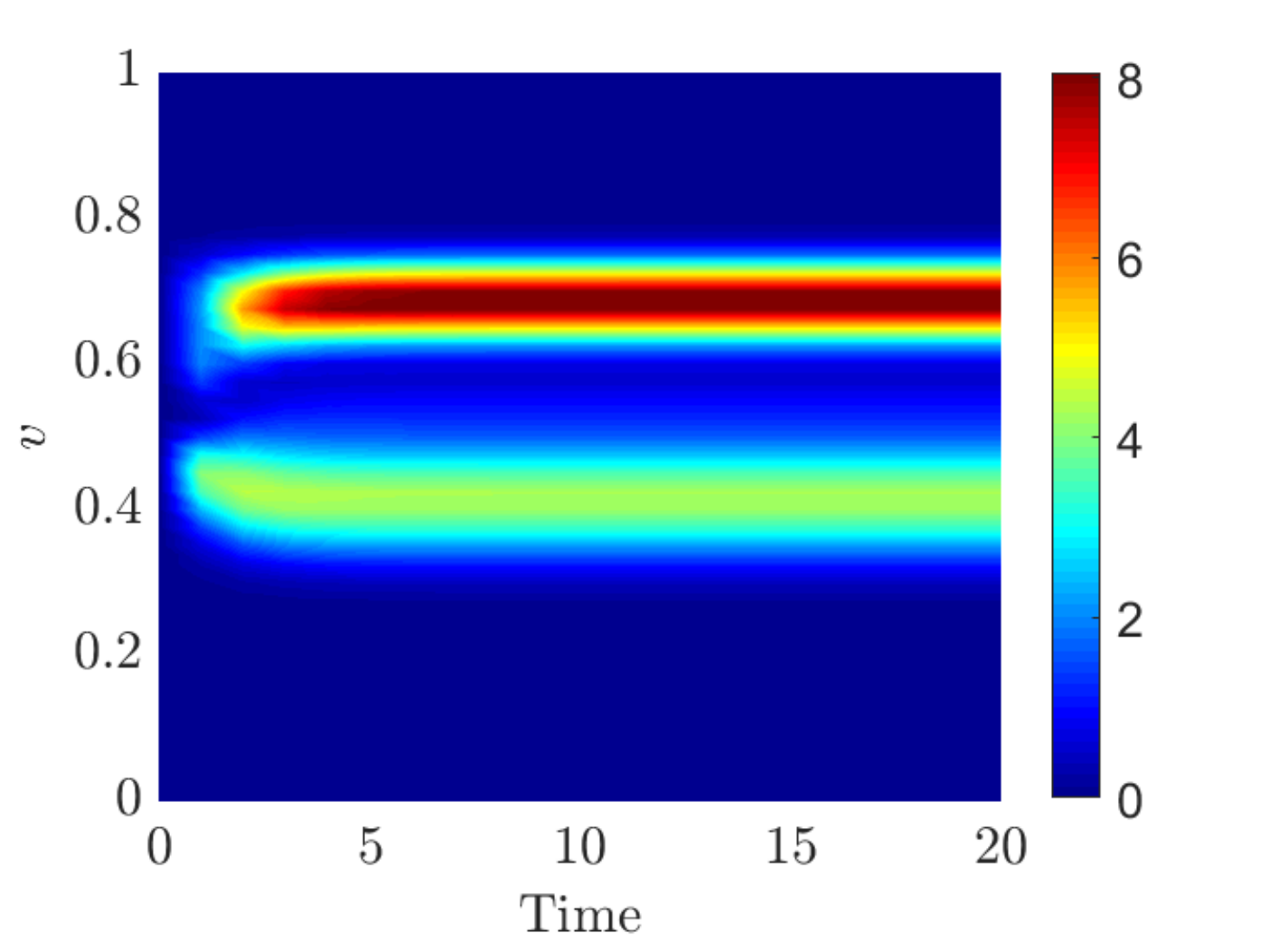}}
\subfigure[$p^\ast=1$, $\rho=0.6$]{\includegraphics[scale=0.3]{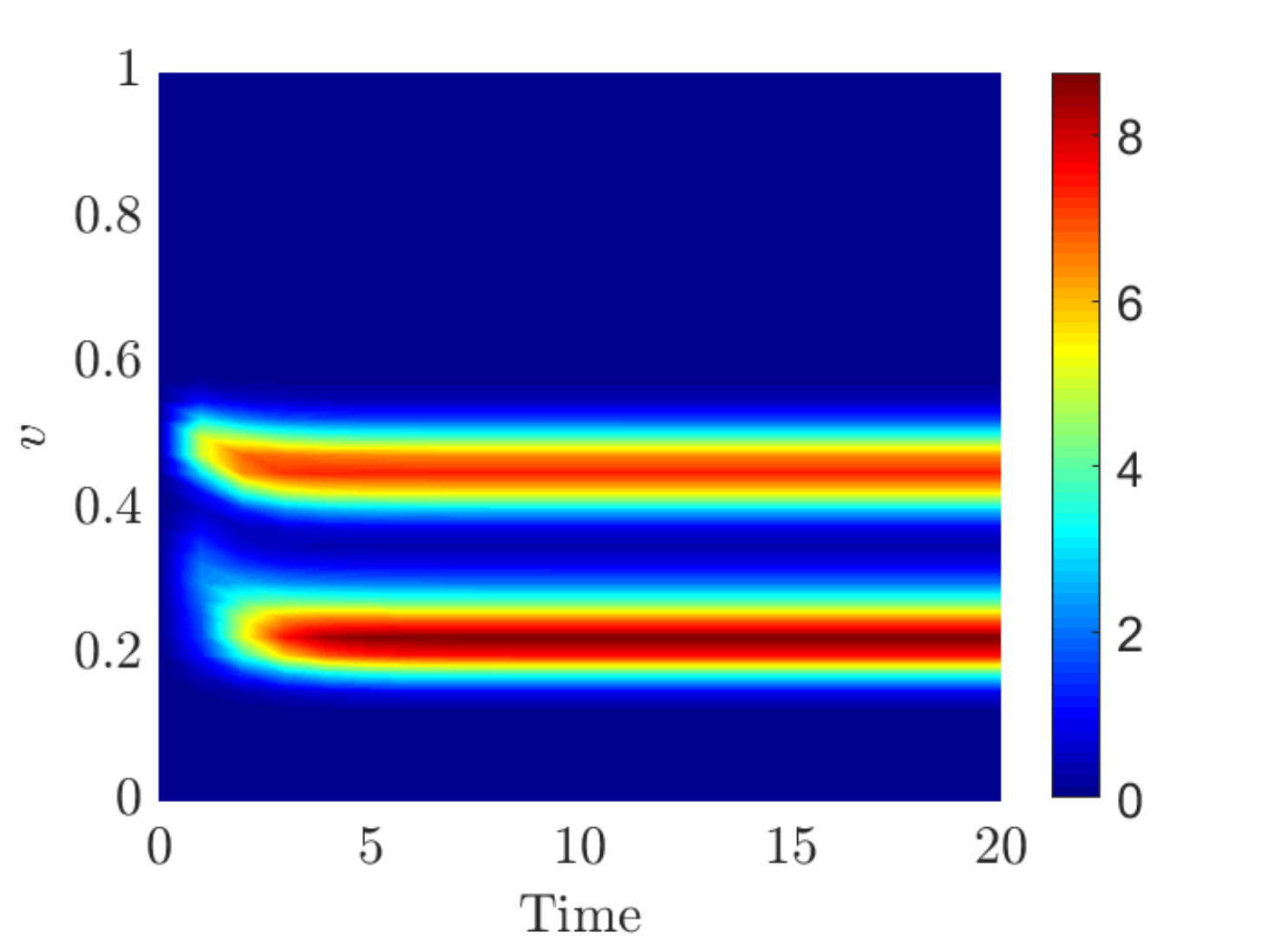}} \\
\subfigure[$p^\ast=10$, $\rho=0.2$]{\includegraphics[scale=0.3]{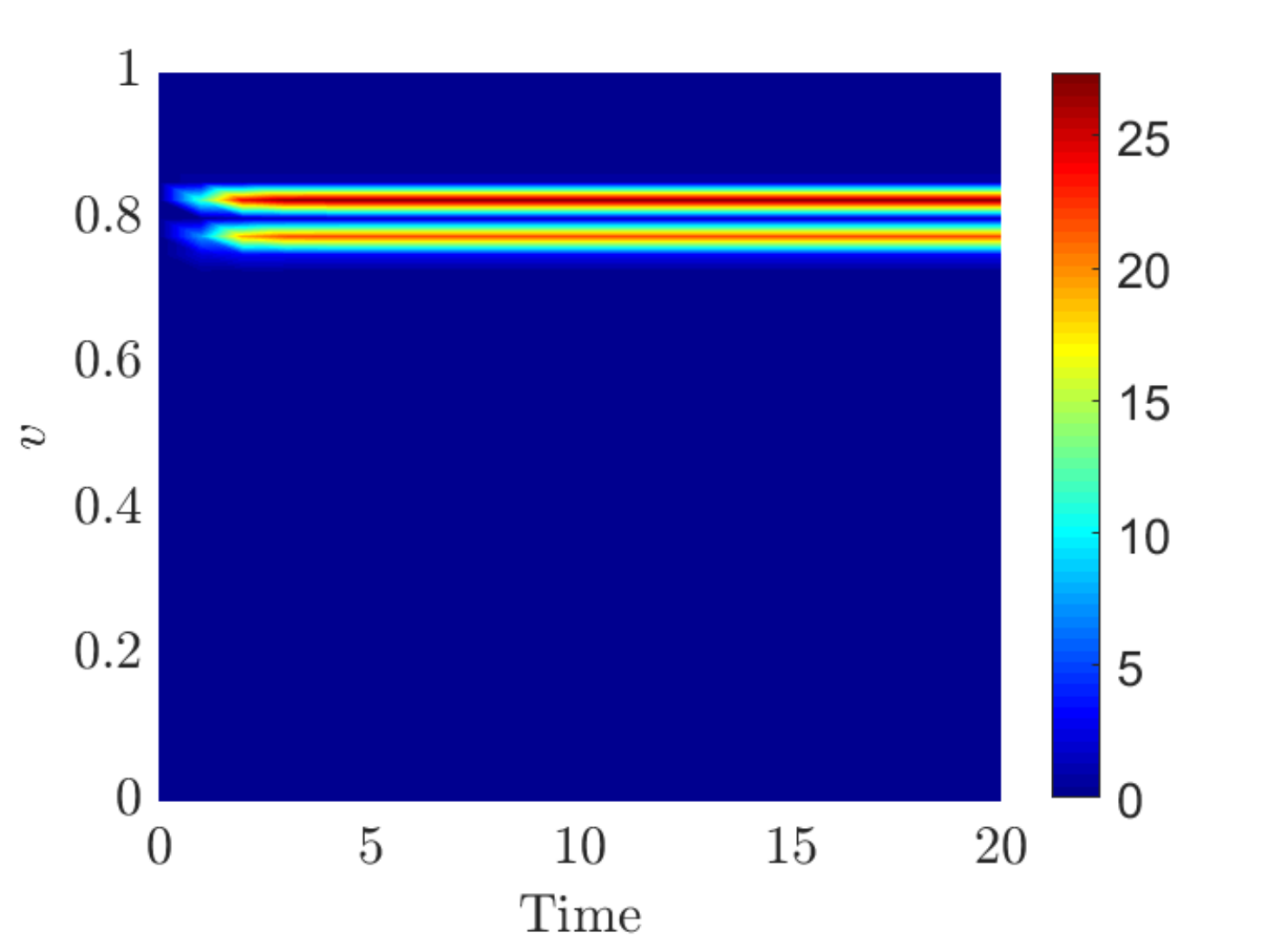}}
\subfigure[$p^\ast=10$, $\rho=0.4$]{\includegraphics[scale=0.3]{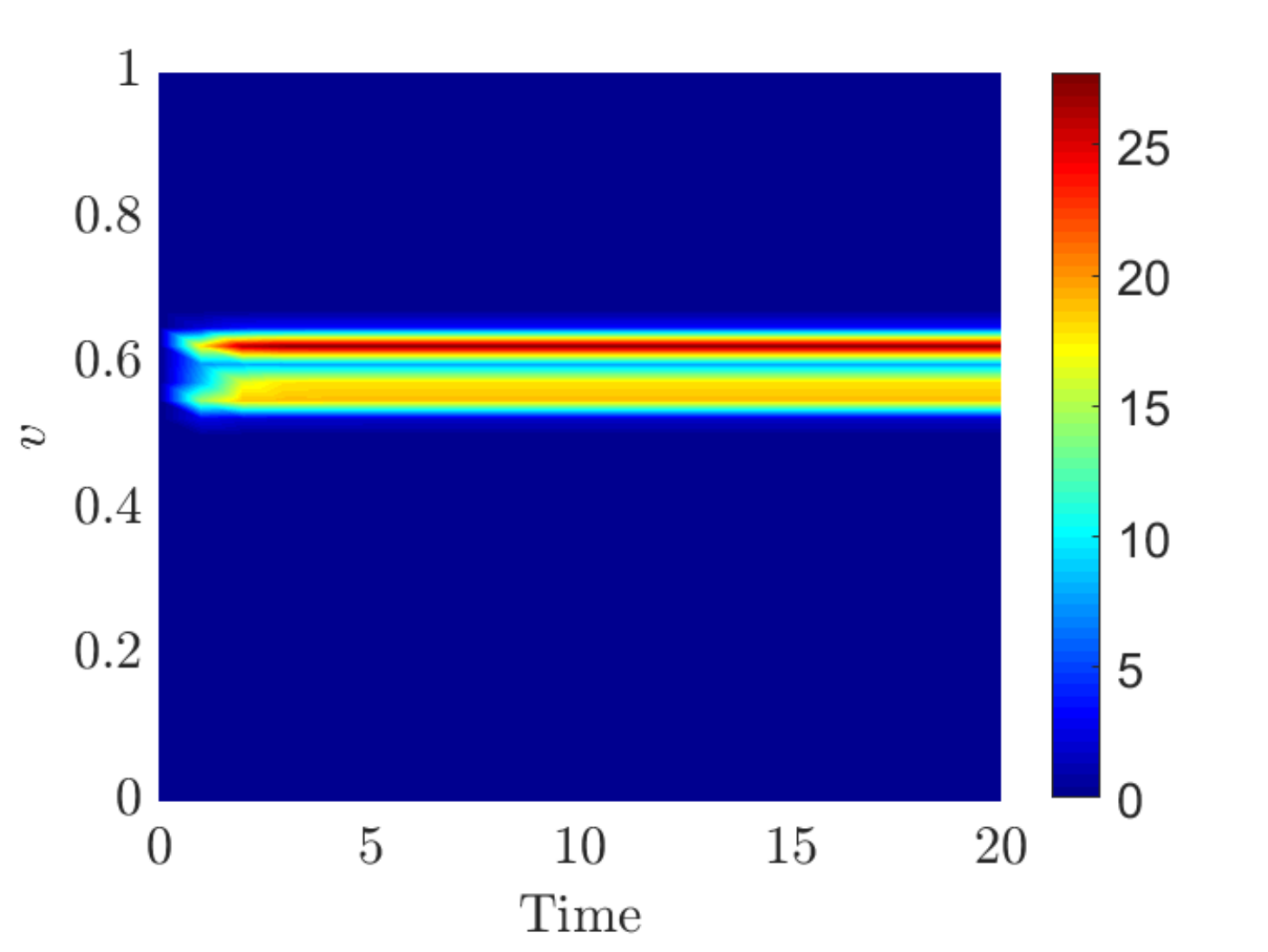}}
\subfigure[$p^\ast=10$, $\rho=0.6$]{\includegraphics[scale=0.3]{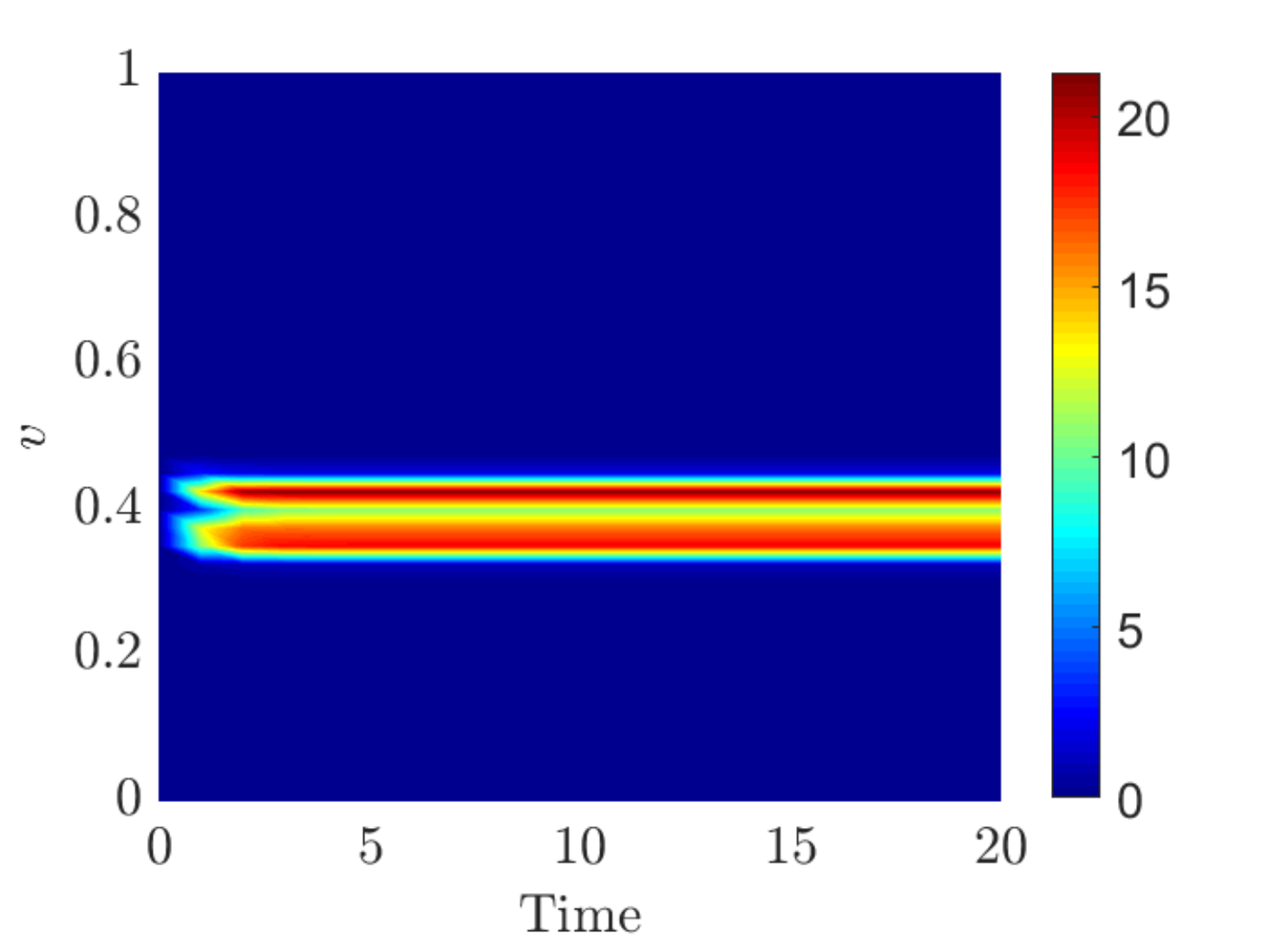}}
\caption{\textbf{Controlled case}, $\boldsymbol{z-1\sim\B\!\left(50,\,\frac{1}{50}\right)}$. Contours of $\bar{f}(t,\,v)=\E_z(f(t,\,v;\,z))$ (top row) and of $\Var_z(f(t,\,v;\,z))$ (bottom row), where $f$ is the solution to~\eqref{eq:FP.1}, when $z$ is such that $z-1$ has binomial distribution. All the parameters are like in Figure~\ref{fig:uni_cons}.}
\label{fig:bin_cons}
\end{figure}

With the aim of investigating the ability and the effectiveness of the control method to dampen the structural uncertainty of the model, in Figures~\ref{fig:uni_cons},~\ref{fig:bin_cons} we show the variance $\Var_z(f(t,\,v;\,z))$ of the solution $f$ to the Fokker-Planck equation~\eqref{eq:FP.1} for $\rho=0.2,\,0.4,\,0.6$ and for a decreasing penalisation $\kappa=10^{-1},\,10^{-2}$ of the control, which implies an effective penetration rate increasing from $p^\ast=1$ to $p^\ast=10$. We observe that the control produces in time a confinement of the $z$-variance of $g$ around the optimal speed $v_d(\rho)$. In particular, such a confinement is tighter and is reached more quickly with a high effective penetration rate $p^\ast$.

\begin{figure}[!t]
\centering
\subfigure[$z\sim\cU({[1,\,3]})$]{\includegraphics[scale=0.4]{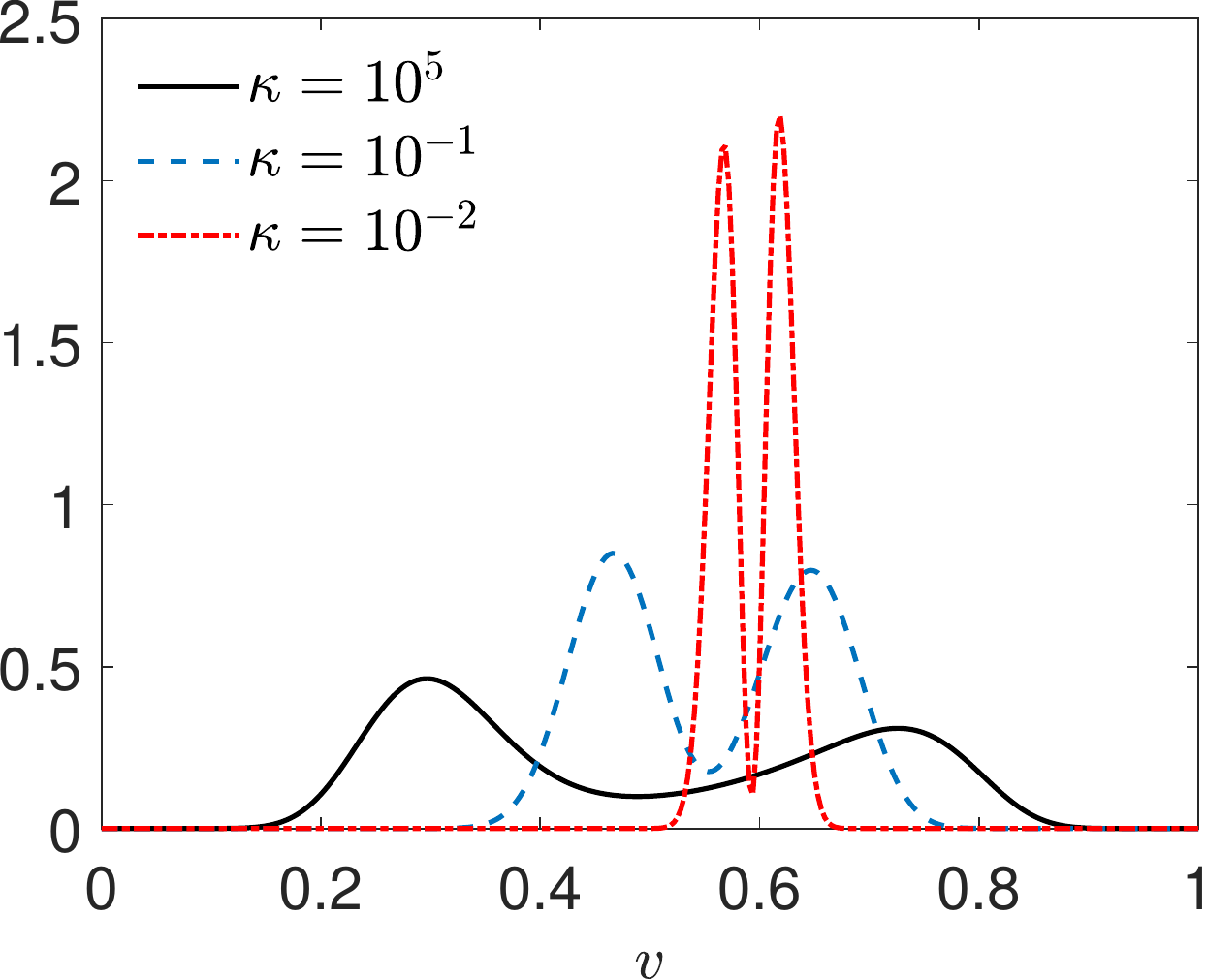}} \qquad
\subfigure[$z-1\sim\B\!\left(50,\,\frac{1}{50}\right)$]{\includegraphics[scale=0.4]{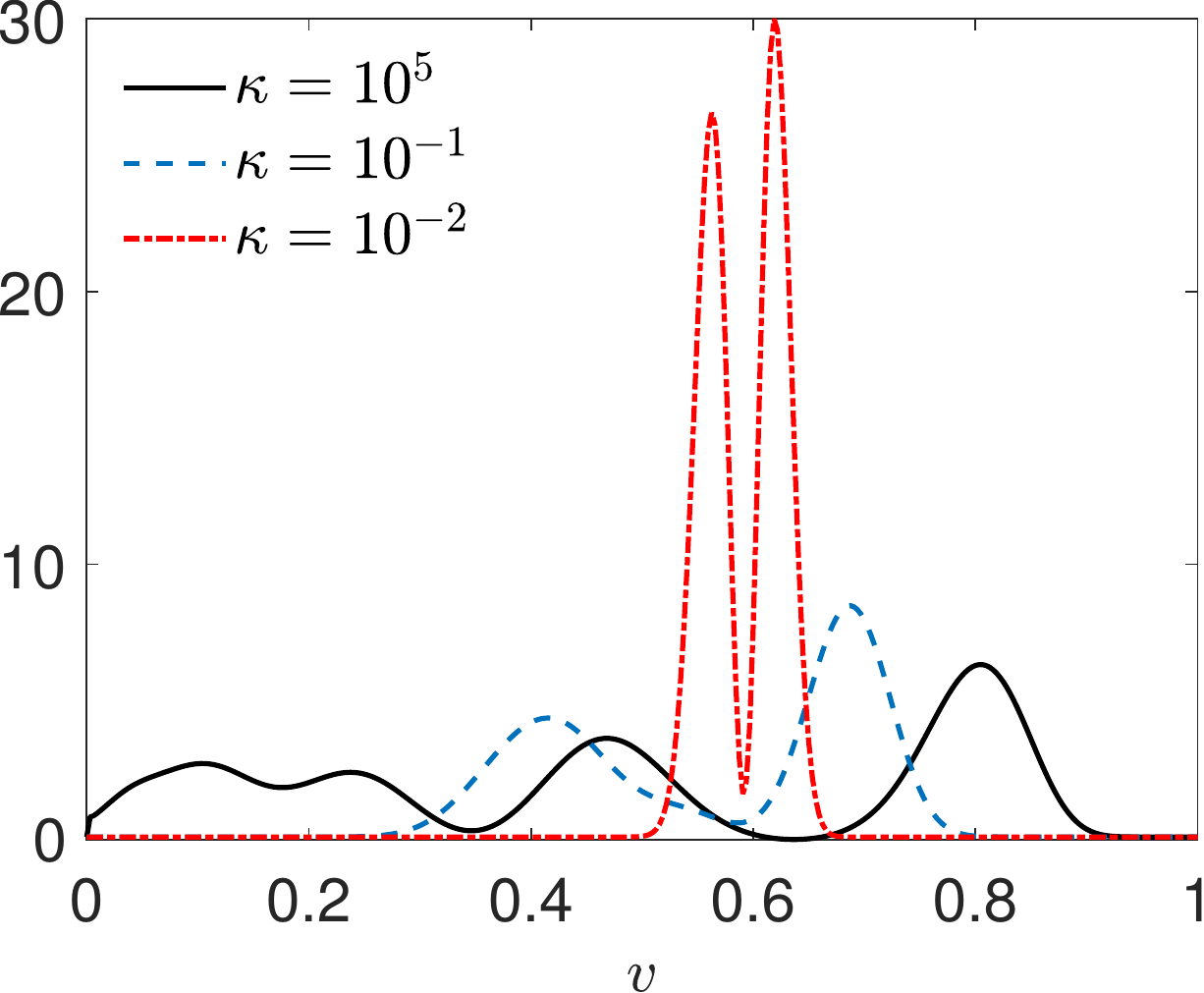}}
\caption{\textbf{Controlled case}. Asymptotic variance $\Var_z(f_\infty(v;\,z))$, where $f_\infty(v;\,z)$ is like in~\eqref{eq:finf.1}, obtained with a uniform and a binomial distribution of the uncertainty for a decreasing penalisation of the control: from $\kappa=10^5$, corresponding to a virtually uncontrolled setting, to $\kappa=10^{-1},\,10^{-2}$.}
\label{fig:comp_var}
\end{figure}

To further stress this fact, in Figure~\ref{fig:comp_var} we show a numerical approximation of the asymptotic variance $\Var_z(f_\infty(v;\,z))$, where $f_\infty(v;\,z)$ is given by~\eqref{eq:finf.1}, computed at a sufficiently large time ensuring that the equilibrium has been numerically reached. We fix $\rho=0.4$ and we consider a decreasing penalisation coefficient $\kappa=10^5,\,10^{-1},\,10^{-2}$. We see that the action of the control confines progressively the variability of $g_\infty$ around the optimal speed $v=v_d(\rho)$, namely around $v=0.6$ in the present case. At the  macroscopic level, this produces the uncertainty damping observed in the fundamental diagrams, cf. Figure~\ref{fig:damping_diag}.

\section{Conclusions}
\label{sect:conclusions}
In this paper, we have shown that the scattering of the fundamental diagram of traffic, usually observed from experimental data at high vehicle density, can be explained organically in terms of the propagation of some uncertainty from the microscopic scale of the interactions among the vehicles to the macroscopic scale of the hydrodynamical flow of the vehicles. Such an uncertainty may be associated with different types of vehicles present in a heterogeneous traffic stream. We have shown that the classical methods of the collisional kinetic theory, suitably coupled with arguments typical of the uncertainty quantification, provide a sound mathematical-physical background to investigate such a propagation across the scales and to design possible countermeasures to dampen it. In particular, we have suggested that these countermeasures may take advantage of automatic driver-assistance systems, namely of feedback controls implemented on a certain percentage of vehicles in the traffic stream. By means of the theoretical paradigm recalled above, we have studied the large scale aggregate impact of some simple microscopic control protocols. Specifically, we have proved that it is actually possible to reduce the macroscopic scattering of the fundamental diagram by inducing some vehicles in the stream to align locally their speed to a congestion-dependent recommended speed. We have proposed that this can be done in two alternative ways, namely by either a stochastic or a deterministic control, and we have proved that, in the long run, these two strategies are actually equivalent. The analysis of the two strategies has required to deal with two different types of kinetic equations: a simpler Boltzmann-type equation for Maxwellian-like particles in the case of the stochastic control; a more difficult Boltzmann-type equation with cut-off in the case of the deterministic control, which imposes a non-constant interaction kernel. In this respect, this paper gives also a contribution to the investigation of this second case, which is less standard, hence to date still under-explored, in the literature of the kinetic models of multi-agent systems.

The possibility to take advantage of the emerging technologies in the realm of vehicle automation to control the fundamental diagram, and in particular to reduce its scattering, is, in our opinion, a non-trivial fact with potential beneficial impacts on the governance of traffic at large scale. Furthermore, from the point of view of the mathematical research, we believe that this work paves the way for the study of macroscopic models of traffic flow with uncertainty and uncertainty control through driver-assist vehicles, cf.~\cite{delis2015CMA,delis2018TRR}, grounded on the multiscale framework of the kinetic theory.

\section*{Acknowledgements}
This research was partially supported  by the Italian Ministry for Education, University and Research (MIUR) through the ``Dipartimenti di Eccellenza'' Programme (2018-2022), Department of Mathematical Sciences ``G. L. Lagrange'', Politecnico di Torino (CUP: E11G18000350001) and Department of Mathematics ``F. Casorati'', University of Pavia; and through the PRIN 2017 project (No. 2017KKJP4X) ``Innovative numerical methods for evolutionary partial differential equations and applications''.

This work is also part of the activities of the Starting Grant ``Attracting Excellent Professors'' funded by ``Compagnia di San Paolo'' (Torino) and promoted by Politecnico di Torino.

Both authors are members of GNFM (Gruppo Nazionale per la Fisica Matematica) of INdAM (Istituto Nazionale di Alta Matematica), Italy.


\end{document}